\newtheorem{theorem}{Theorem}[section]
\newtheorem{lemma}{Lemma}[section]
\newtheorem{corollary}{Corollary}[section]
\newtheorem{Ob}{Observation}[section]
\title{Algorithms for Maximum Internal Spanning Tree Problem for Some Graph Classes \footnote{The third author is supported by an NSF Graduate Research Fellowship under Grant No. DGE-1650044.}}
\author[1]{Gopika Sharma \thanks{2017maz0007@iitrpr.ac.in}}
\author[1]{Arti Pandey\thanks{arti@iitrpr.ac.in}}
\author[2]{Michael C. Wigal \thanks{wigal@gatech.edu}}
\affil[1]{Department of Mathematics, Indian Institute of Technology Ropar, Punjab, India.}
\affil[2]{School of Mathematics, Georgia Institute of Technology, Atlanta, GA, USA.}
\date{}
\begin{document}
\maketitle
\begin{abstract}
For a given graph $G$, a maximum internal spanning tree of $G$ is a spanning tree of $G$ with maximum number of internal vertices. The \textsc{Maximum Internal Spanning Tree (MIST)} problem is to find a maximum internal spanning tree of the given graph. The MIST problem is a generalization of the Hamiltonian path problem. Since the Hamiltonian path problem is NP-hard, even for bipartite and chordal graphs, two important subclasses of graphs, the MIST problem also remains NP-hard for these graph classes. In this paper, we propose linear-time algorithms to compute a maximum internal spanning tree of cographs, block graphs, cactus graphs, chain graphs and bipartite permutation graphs. The optimal path cover problem, which asks to find a path cover of the given graph with maximum number of edges, is also a well studied problem. In this paper, we also study the relationship between the number of internal vertices in maximum internal spanning tree and number of edges in optimal path cover for the special graph classes mentioned above.
\end{abstract} 

\section{Introduction}\label{sec:intro}
The Maximum Internal Spanning Tree (MIST) problem is a degree based spanning tree optimization problem, in which we ask to find a spanning tree of a given graph such that the number of vertices of degree at least two is maximized. The MIST problem is motivated by telecommunication network design \cite{salamon}.  We also believe that MIST
problem has its own theoretical importance as it is a generalization of the Hamiltonian path problem, a known NP-complete problem \cite{garey}. The Hamiltonian path problem remains NP-complete for chordal graphs and chordal bipartite graphs \cite{laii, mull}. Hence, we also do not expect polynomial time algorithms for the MIST problem in chordal graphs and chordal bipartite graphs.

The dual problem to MIST, the \textsc{Minimum Leaves Spanning Tree (MLST)} problem asks to find a spanning tree with minimum number of leaves for a given graph. The MLST problem cannot be approximated  within any constant factor unless P=NP~\cite{lu1}. Unlike MLST, several constant factor approximation algorithms have been proposed for the MIST problem in literature. In $2003$, Prieto et al. \cite{prieto} gave a $2$-approximation algorithm for the MIST problem whose running time was later improved by Salamon et al. in 2008 \cite{salam}. Salamon also gave approximation algorithms for claw-free and cubic graphs with approximation factors $\frac{3}{2}$ and $\frac{6}{5}$ respectively \cite{salam}. In 2009, Salamon \cite{sala} gave a $\frac{7}{4}$ -approximation algorithm for graphs with no pendant vertices and later, in 2015, Knauer et al. \cite{knau} showed that a simplified and faster version of Salamon's algorithm yields a $\frac{5}{3}$ -approximation algorithm even on general graphs. In 2014, Li et al. proposed a $\frac{3}{2}$-approximation algorithm using a different approach for general undirected graphs and improved this ratio to $\frac{4}{3}$ for graphs without leaves \cite{li}. Li et al. gave a $\frac{3}{2}$ -approximation algorithm for general graphs using depth-5 local search \cite{li2017}. In 2018, Chen et al. presented a $\frac{17}{13}$ -approximation algorithm which is the best approximation factor till now \cite{chen}. Several FPT-algorithms have also been designed for the MIST problem where the considered parameter is the solution size \cite{prieto, li2017, cohen, fomin, bink, li20152}.

For finding efficient algorithms for the MIST problem, it is often useful to reduce the MIST problem to the path cover problem. A \textit{path cover} $P$ of a graph is a spanning subgraph such that every component of $P$ is a path. A path cover with maximum number of edges is called an \textit{optimal path cover} of $G$. If $P^*$ denotes an optimal path cover of a graph, then number of edges in $P^*$ is denoted by $\vert E(P^*)\vert$. In 2018, Li et al. proposed a polynomial time algorithm for the MIST problem in interval graphs \cite{li2018}. They also proved that number of internal vertices in any MIST of any graph $G$ is at most $\vert E(P^*)\vert - 1$, where $P^*$ is an optimal path cover of $G$. We will observe that number of internal vertices in any MIST of a chain graph is either $\vert E(P^*)\vert - 1$ or $\vert E(P^*)\vert - 2$  and is $\vert E(P^*)\vert - 1$ for cographs. For bipartite permutation, block and cactus graphs, we prove that there is no constant $k$ such that $\vert E(P^*)\vert - k$ is the lower bound on the number of internal vertices in any MIST of such graphs. We also propose linear-time algorithms for the MIST problem in bipartite permutation graphs, block graphs, cactus graphs and cographs. A hierarchy relationship between these classes of graphs is shown in Fig.~\ref{fig:hierarchy}.

The structure of the paper is as follows. In Section \ref{sec:prelim}, we give some basic definitions and notations used in the paper. In Section \ref{sec:blockcactus}, we discuss MIST problem for block graphs and cactus graphs and provide linear-time algorithms for both these graph classes. In Section \ref{sec:cograph}, we prove that MIST of cographs can be computed in linear-time by providing an algorithm. In Section \ref{sec:BP}, we present a linear-time algorithm to find a MIST of an arbitrary bipartite permutation graph. In Section \ref{sec:chain}, we prove a bound for chain graphs regarding number of internal vertices in its MIST.  Finally, Section \ref{sec:conclusion} concludes the paper.
\begin{figure}[h!]
 \centering
\includegraphics[width=8cm, height=3.5cm]{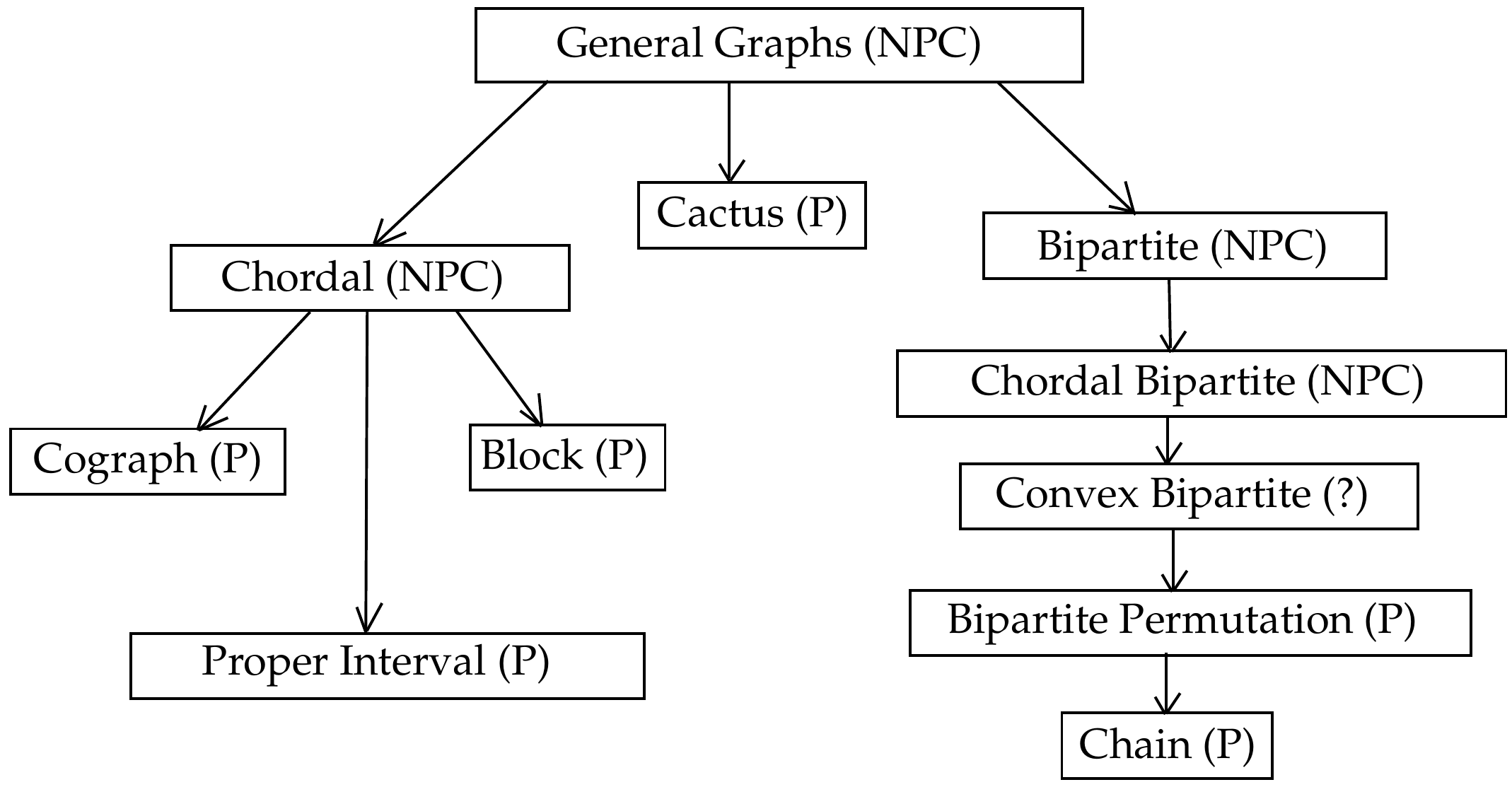}
 \caption{Hierarchy relationship between some classes of graphs}
 \label{fig:hierarchy}
\end{figure}
\section{Preliminaries}
\label{sec:prelim}
Let $G = (V,E)$ be a graph. In this paper, we only consider simple, undirected and connected graphs. For a vertex $u \in G,  d_{G}(u)$ denotes the degree of $u$ in $G$ and $N_G(u)$ denotes the neighborhood of $u$ in $G$. When there is no ambiguity regarding the graph $G$, we simply use $d(u)$ and $N(u)$, to represent the degree of $u$ and neighborhood of $u$, respectively. A vertex $u$ in $V$ is called \textit{pendant} if $d(u) = 1$. The set of pendant vertices in $G$ is denoted by $p(G)$.
The vertex adjacent to a pendant vertex $u$ is called the \textit{support vertex} of $u$, and is denoted by $S(u)$. A vertex $u \in V(G)$ is called \textit{internal}, if $u$ is not pendant, that is,  $d(u) \geq 2$. Let $I(G)$ denotes the set of internal vertices in $G$, and  $i(G) = \vert I(G)\vert$.  For a set $A \subseteq V$ and a spanning tree $T$ of $G$, we define $i_T(A) = \vert{I(T) \cap A}\vert$.

For vertices $x$ and $y$, we denote an edge between $x$ and $y$ by $xy$.  For a subset $S$ of $V(G)$, $G - S$ denotes the subgraph of $G$ obtained after removing vertices of $S$ and edges incident on these vertices from $G$. If $S = \{v\}$, then we simply write $G - v$ for $G - S$. A vertex $v$ of a graph $G$ is called a \textit{cut vertex} if $G - v$  is disconnected.

Throughout this paper, $n$ denotes the number of vertices and $m$ denotes the number of edges in  $G$. A graph $G$ is said to be \textit{bipartite} if $V$ can be partitioned into two disjoint sets $X$ and $Y$ such that every edge of $G$ joins a vertex in $X$ to a vertex in $Y$. Such a partition $(X,Y)$ of $V$ is called a \textit{bipartition}. A bipartite graph with bipartition $(X,Y)$ of $V$ is denoted by $G = (X,Y,E)$. For a set $S\subseteq V$, an \textit{induced subgraph} is the graph whose vertex set is $S$ and edge set consists of all the edges in $E$ that have both the endpoints in $S$, and is denoted by $G[S]$. If $G[C], C \subseteq V$, is a complete subgraph of $G$, then $C$ is called a \textit{clique} of $G$.

A subgraph of $G$ is called a \textit{spanning subgraph} if it contains all the vertices of $G$.  A \textit{path cover} $P$ of a graph is a spanning subgraph such that every component of $P$ is a path. A path cover is an \textit{optimal path cover} if it has the maximum number of edges. A spanning subgraph of $G$ which is also a tree is called a \textit{spanning tree} of $G$. A spanning tree is called a \textit{maximum internal spanning tree(MIST)} if it contains the maximum number of internal vertices among all the spanning trees of $G$. For a graph $G$, the number of internal vertices in any MIST of $G$ is denoted by $Opt(G)$.

Now we state a useful theorem which gives an upper bound on the number of internal vertices in a MIST with respect to the graph's optimal path cover.

\begin{theorem}\label{thm:general upper bound}\cite{li2018} For a graph $G$, the number of internal vertices in a maximum internal spanning tree of $G$  is less than the number of edges in an optimal path cover of $G$, that is,
$Opt(G)\leq \vert E(P^*)\vert - 1$, where $P^*$ denotes an optimal path cover of $G$. Moreover, this bound is tight.
\end{theorem}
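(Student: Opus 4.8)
The plan is to translate both quantities into degree/vertex counts on a single spanning tree and thereby reduce the inequality to a purely combinatorial claim. Write $\ell^\ast$ for the minimum number of leaves over all spanning trees of $G$. A maximum internal spanning tree $T^\ast$ realizes $Opt(G) = n - \ell^\ast$, since in a tree every vertex is either a leaf or internal, so maximizing internal vertices is exactly minimizing leaves. Dually, a path cover with $c$ components on $n$ vertices is a forest and hence has exactly $n - c$ edges, so an optimal path cover satisfies $\vert E(P^*)\vert = n - pc(G)$, where $pc(G)$ denotes the minimum number of paths in any path cover of $G$. Substituting, the desired bound $Opt(G) \le \vert E(P^*)\vert - 1$ is equivalent to $pc(G) \le \ell^\ast - 1$.

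Next I would build a path cover of $G$ from $T^\ast$ itself, using only its edges; since every edge of $T^\ast$ is an edge of $G$, such a cover is a path cover of $G$, so $pc(G)$ is at most its number of components. Deleting edges from a tree always leaves a forest, so it suffices to delete a set $D$ of edges so that the remaining maximum degree is at most $2$ (every component of a forest with maximum degree $\le 2$ is a path, where single vertices count as trivial paths). The resulting forest has $1 + \vert D\vert$ components, so it is enough to guarantee $\vert D\vert \le \ell^\ast - 2$.

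The crux is the following degree count on $T^\ast$. Summing $d(v) - 2$ over all vertices gives $2(n-1) - 2n = -2$; separating the leaves (degree $1$) from the branch vertices (degree $\ge 3$) rewrites this as $-\ell^\ast + \sum_{d(v)\ge 3}(d(v)-2) = -2$, so the total excess $\sum_{d(v)\ge 3}(d(v)-2)$ equals exactly $\ell^\ast - 2$. For each branch vertex $v$ I select any $d(v)-2$ of its incident edges and let $D$ be the union of all these selections: every vertex then has at least $d(v)-2$ incident edges removed and so ends with degree $\le 2$, while $\vert D\vert$ is at most the sum of the selection sizes, namely $\ell^\ast - 2$. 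This produces a path cover with at most $\ell^\ast - 1$ components, giving $pc(G) \le \ell^\ast - 1$ and hence the inequality. I expect this counting/selection step to be the main point needing care, chiefly because an edge shared between two branch vertices is deleted once but lies in two selections; such sharing only decreases $\vert D\vert$, so the bound is preserved, but the bookkeeping must avoid double counting.

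Finally, for tightness I would exhibit a single easy example attaining equality. The path $P_n$ is its own optimal path cover with $\vert E(P^*)\vert = n-1$, while its unique spanning tree has only its two endpoints as leaves, so $Opt(P_n) = n-2 = \vert E(P^*)\vert - 1$; the star $K_{1,t}$ serves equally well, with $Opt = 1$ and $\vert E(P^*)\vert = 2$.
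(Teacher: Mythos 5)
The paper itself does not prove this statement: Theorem~\ref{thm:general upper bound} is imported from \cite{li2018} and used as a black box, so there is no internal proof to compare against. Your argument is correct and self-contained, and it is worth recording. The translation of both sides into counts on forests is sound: for $n\ge 2$, $Opt(G)=n-\ell^\ast$ (leaf/internal dichotomy in a tree) and $\vert E(P^\ast)\vert = n - pc(G)$ (a path cover with $c$ components is a forest with $n-c$ edges, so maximizing edges is minimizing components), which reduces the claim to $pc(G)\le \ell^\ast-1$. The crux is also right: in any tree, summing $d(v)-2$ over all vertices gives $-2$, hence $\sum_{d(v)\ge 3}(d(v)-2)=\ell^\ast-2$, and deleting, for each branch vertex, some $d(v)-2$ of its incident edges leaves a spanning forest of maximum degree at most $2$ (extra deletions caused by a neighbor's selection only lower degrees further), with at most $1+\vert D\vert \le 1+(\ell^\ast-2)=\ell^\ast-1$ components; your remark that an edge selected by both of its endpoints is deleted once and only shrinks $\vert D\vert$ disposes of the double-counting issue. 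Since this forest uses only edges of $G$, it witnesses $pc(G)\le \ell^\ast-1$, and the tightness examples ($P_n$, or $K_{1,t}$ with $t\ge 2$) are correct. The only caveat is degenerate: for $n=1$ the inequality as stated fails ($Opt=0$ but $\vert E(P^\ast)\vert-1=-1$), so your proof's implicit assumption $n\ge 2$ is in fact necessary for the theorem itself; the paper's standing assumption of connected graphs with pendant/internal structure makes this harmless.
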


Note that the vertices which are pendant in $G$ itself, will be pendant in any MIST of $G$. Hence, we have the following observation.

\begin{Ob}
For a graph $G$, if $v$ is a pendant vertex and $u$ is the support vertex of $v$ in $G$, then $v$ remains a pendant vertex and $u$ remains adjacent support vertex of $v$ in any MIST of $G$.
\end{Ob}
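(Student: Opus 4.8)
The plan is to exploit the defining feature of a pendant vertex: since $v$ is pendant in $G$, we have $d_G(v)=1$, so $v$ has a unique neighbor, namely its support vertex $u=S(v)$, and consequently $uv$ is the \emph{only} edge of $G$ incident to $v$. This uniqueness is the entire engine of the argument, so I would state it explicitly at the outset.

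Next I would fix an arbitrary MIST $T$ of $G$ and reason about the edge $uv$. Because $T$ is a spanning tree, it contains every vertex of $G$, including $v$, and it is connected. I would argue that $uv \in E(T)$: since $T$ is a subgraph of $G$ and $uv$ is the only edge of $G$ meeting $v$, omitting $uv$ would leave $v$ with no incident edge in $T$, hence isolated, contradicting the connectivity of the spanning tree $T$. Therefore $uv$ is forced into $T$.

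Finally I would read off both conclusions from this forced edge. Since $uv$ is the unique edge of $G$ incident to $v$ and $T\subseteq G$, we have $d_T(v)=1$, so $v$ is pendant in $T$; and its single neighbor in $T$ is $u$, so $u$ is the support vertex of $v$ in $T$. As this holds for every MIST $T$, the observation follows. I do not expect a genuine obstacle here: the result is immediate once the uniqueness of the incident edge is combined with the connectivity requirement of a spanning tree, and the only point needing care is to note that $T$ being a subgraph of $G$ forbids introducing any new edge at $v$, so its degree cannot exceed $1$.
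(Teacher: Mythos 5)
Your proof is correct and matches the reasoning the paper relies on: since $uv$ is the unique edge of $G$ incident to $v$, connectivity of any spanning tree forces $uv \in E(T)$, whence $d_T(v)=1$ and $u=S(v)$ in $T$. The paper treats this as immediate and states it without a written proof, so your write-up is simply a careful expansion of the same argument.
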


Suppose $G$ is not a tree and $u \in V(G)$ is adjacent to $k$ pendant vertices, say $a_1, \ldots, a_k$. Let $G' = G - \{a_2,\ldots,a_k\}$. Then based on above observation, the number of internal vertices in a MIST of $G$ will be same as the number of internal vertices in any MIST of $G'$. It is also easy to obtain a MIST of $G$ from any MIST of $G'$. Hence, throughout this work, we assume that every vertex has at most one pendant vertex adjacent with it.

Below, we give another result regarding the number of pendant vertices in a spanning tree of a bipartite graph. Note that, if we have $\alpha$ number of internal vertices in a spanning tree of $G$ from one partite set, then at least $\alpha + 1$ vertices must be present in the neighborhood of these $\alpha$ vertices, which lie in the other partite set of the bipartite graph $G$.
\begin{Ob}\label{Ob:basic}
Let $G=(X,Y,E)$ be a bipartite graph with $A \subseteq X \text{ and } B \subseteq Y$. If $N(A) = B$, then there are at least $max\{0, \vert A\vert - \vert B\vert + 1 \}$ pendant vertices from $A$ in any spanning tree of $G$. Similarly, if $N(B) = A$, then there are at least $max\{0, \vert B\vert - \vert A\vert + 1 \}$ pendant vertices from $B$ in any spanning tree of $G$.
\end{Ob}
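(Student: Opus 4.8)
The plan is to fix an arbitrary spanning tree $T$ of $G$ and bound the number of vertices of $A$ that are \emph{internal} in $T$; call this set $A'$. Since the number of pendant vertices of $A$ in $T$ equals $|A| - |A'|$, and since $\max\{0,|A|-|B|+1\}$ is $0$ whenever $|A| \leq |B| - 1$, it suffices to establish the single inequality $|A'| \leq |B| - 1$. The symmetric claim, with the roles of $A$ and $B$ (equivalently $X$ and $Y$) interchanged, then follows by the identical argument.

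To prove $|A'| \leq |B| - 1$, I would look at the set $H$ of edges of $T$ that are incident to some vertex of $A'$. Because $G$ is bipartite and $A' \subseteq A \subseteq X$, the set $X$ is independent, so every edge of $H$ has exactly one endpoint in $A'$, and its other endpoint lies in $N(A) = B$. Two consequences follow. First, counting edges of $H$ by their endpoint in $A'$ gives $|H| = \sum_{a \in A'} d_T(a) \geq 2|A'|$, since every internal vertex has $T$-degree at least $2$. Second, $H$ is a subgraph of the tree $T$, hence a forest, whose vertex set is contained in $A' \cup B$ and therefore has at most $|A'| + |B|$ vertices; a forest on $k$ vertices has at most $k-1$ edges, so $|H| \leq |A'| + |B| - 1$. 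Combining the two bounds yields $2|A'| \leq |A'| + |B| - 1$, i.e. $|A'| \leq |B| - 1$, as required.

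The argument is essentially a degree-sum-versus-acyclicity count, so there is no single hard obstacle; the two points requiring care are (i) verifying that no edge of $H$ is double-counted in $\sum_{a \in A'} d_T(a)$, which is exactly where bipartiteness and $A' \subseteq X$ are used, and (ii) bounding the vertex set of the forest $H$ by $A' \cup B$, which is where the hypothesis $N(A) = B$ enters. Once $|A'| \leq |B|-1$ is established, the bound $|A| - |A'| \geq |A| - |B| + 1$ together with the trivial bound of $0$ gives the stated $\max\{0, |A| - |B| + 1\}$, matching the informal intuition noted before the statement that $\alpha$ internal vertices in one partite set force at least $\alpha+1$ neighbors in the other.
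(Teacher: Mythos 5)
Your proof is correct and takes essentially the same approach as the paper: the paper justifies this observation only by the informal remark preceding it (that $\alpha$ internal vertices in one partite set force at least $\alpha+1$ neighbors in the other), and your degree-sum versus forest-edge-count argument is a rigorous rendering of exactly that counting intuition. No gaps; the one place needing care (that $\sum_{a\in A'} d_T(a)$ counts each edge of $H$ once, via bipartiteness) is handled correctly.
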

\section{Block and Cactus Graphs}
\label{sec:blockcactus}
In this section, we discuss the MIST problem for block graphs and cactus graphs. We will show that the MIST problem can be solved in linear-time for both classes of graphs. Block and cactus graphs will also provide our first family of examples in which $Opt(G)$ cannot be lower bounded in terms of $\vert E(P^*)\vert - k$ where $P^*$ is an optimal path cover of $G$ and $k$ is some constant.

A \textit{block} of a graph $G$ is a maximal connected subgraph with no cut vertices. Note that a block of $G$ is either an edge or a 2-connected subgraph. The set of blocks of a graph is called the \textit{block decomposition} of $G$ and is denoted by $B(G)$. Let $B_0 \in B(G)$ and $u,v$ be two vertices belonging to $B_0$, then a path between $u$ and $v$, which contains all the vertices of the block $B_0$, is called a \textit{spanning path between $u$ and $v$ in $B_0$}. We say a block $B$ is \textit{good} if there exists distinct $u,v \in V(B)$ such that both $u$ and $v$ are cut vertices of $G$ and $B$ has a spanning path between $u$ and $v$.  A block is said to be \textit{bad} otherwise. Let $\text{Bad}(G)$ denote the set of bad blocks of $G$.

A \textit{block graph} is a graph in which every block is a clique. If a block graph $G$ contains only one block then $G$ is a complete graph. A block graph is said to be \textit{nontrivial} if it contains at least two blocks. Note that a trivial block has a Hamiltonian path. Thus for the remainder of the section we only consider nontrivial block graphs.

Let $G$ be a nontrivial block graph. Bad blocks of $G$ have another characterization which we state as the Observation \ref{Ob:block_bad}.
\begin{Ob}\label{Ob:block_bad}
A block $B$ of a block graph $G$ is bad if and only if it contains exactly one cut vertex of $G$.
\end{Ob}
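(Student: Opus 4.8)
The plan is to first reduce the defining condition for a good block to a statement purely about the number of cut vertices it contains, and then to establish that in a nontrivial block graph every block contains at least one cut vertex. Combining the two gives the claimed equivalence.

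First I would exploit that in a block graph each block $B$ is a clique on at least two vertices. A complete graph admits a Hamiltonian path between any two of its distinct vertices, so for every pair of distinct $u,v \in V(B)$ there is automatically a spanning path of $B$ between $u$ and $v$. Hence the spanning-path requirement in the definition of a good block is never an obstruction, and $B$ is good precisely when it contains two distinct cut vertices of $G$. Taking the contrapositive, $B$ is bad if and only if it contains \emph{at most} one cut vertex of $G$. This direction is immediate from the clique structure and already yields the ``if'' part (a block with exactly one cut vertex cannot have two distinct cut vertices, hence is bad).

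It then remains to rule out that a bad block contains \emph{no} cut vertex; equivalently, I must show every block of $G$ meets at least one cut vertex. Here the hypothesis that $G$ is nontrivial is essential: $G$ is connected and $\vert B(G)\vert \geq 2$. Suppose some block $B$ contains no cut vertex. Since there is a second block, there is a vertex $w$ lying outside $V(B)$, and by connectivity I would take a shortest path from a vertex of $B$ to $w$; the vertex at which this path first leaves $V(B)$ separates $B$ from $w$, hence is a cut vertex of $G$ contained in $B$, a contradiction. (Equivalently, a block containing no cut vertex would be an isolated node of the block--cut tree, impossible once that tree has at least two nodes.)

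Putting the two parts together, $B$ is bad if and only if it contains exactly one cut vertex of $G$. The only delicate step is the second one, verifying that connectivity of a nontrivial block graph forces every block to contain a cut vertex; the first reduction is a routine consequence of the fact that blocks are cliques and therefore always possess the required spanning paths.
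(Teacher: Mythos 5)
Your proof is correct. The paper states this as an observation with no proof at all, and your argument supplies exactly the two standard facts it implicitly rests on: a clique has a Hamiltonian path between any two of its vertices (so the spanning-path condition in the definition of a good block is automatic, reducing goodness to having two distinct cut vertices), and in a connected graph with at least two blocks every block contains at least one cut vertex (your block--cut tree justification is the clean way to see this, and it patches the slight hand-wave in the shortest-path phrasing).
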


A graph $G$ is a \textit{cactus graph} if every block of $G$ is either a cycle or an edge. If a cactus graph $G$ contains only one block then $G$ is either a cycle or an edge and in that case finding a MIST of $G$ is trivial. Again, a cactus graph is said to be \textit{nontrivial} if it contains at least two blocks and now we only consider nontrivial cactus graphs.

Let $G$ be a nontrivial cactus graph. A block of $G$ is called an \textit{end block} of $G$ if it contains exactly one cut vertex of $G$. Note that an end block of a cactus graph $G$ is also a bad block of $G$. Bad blocks of a cactus graph $G$ have another characterization which we state in the following observation.
\begin{Ob}\label{Ob:cactus_bad}
A block $B$ of a cactus graph $G$ is bad if and only if $B$ does not contain two adjacent cut vertices of $G$.
\end{Ob}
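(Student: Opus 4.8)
The plan is to establish the logically equivalent statement that a block $B$ of a cactus graph $G$ is \emph{good} if and only if $B$ contains two adjacent cut vertices of $G$, and then negate both sides. Since in a cactus graph every block is either an edge or a cycle, I would split the argument into these two cases.

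First I would dispose of the easy case where $B$ is an edge $uv$. Here the only spanning path of $B$ is the edge $uv$ itself, which joins $u$ and $v$, so $B$ is good precisely when both $u$ and $v$ are cut vertices of $G$. As $u$ and $v$ are adjacent, this is exactly the condition that $B$ contains two adjacent cut vertices, settling this case.

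The substantive case is when $B$ is a cycle $C$ on vertices $v_1, \ldots, v_k$. Two facts drive the argument. First, a cycle admits a spanning (Hamiltonian) path between two distinct vertices $u$ and $v$ if and only if $u$ and $v$ are consecutive on the cycle: deleting the edge joining two consecutive vertices yields a spanning path between them, and conversely any spanning path of $C$ is obtained by deleting a single edge of $C$ (a path on $k$ vertices has $k-1$ edges, all of which must be cycle edges), whose endpoints are consecutive. Second, adjacency in $G$ among vertices of $B$ coincides with adjacency along the cycle; indeed, if $u,v \in V(B)$ were joined by an edge of $G$ not lying in $B$, then that edge (as its own block, or as part of another block) together with $B$ would be two distinct blocks sharing the two vertices $u$ and $v$, contradicting the fact that distinct blocks meet in at most one cut vertex. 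Combining these, $B$ has a spanning path between two cut vertices $u,v$ if and only if $u$ and $v$ are adjacent cut vertices, which is again the desired condition.

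Putting the two cases together shows that $B$ is good exactly when it contains two adjacent cut vertices, and hence bad exactly when it does not. I expect the main obstacle to be the cycle case, and within it the point requiring the most care is the claim that adjacency in $G$ cannot bypass the block structure; this relies essentially on the cactus property that forbids chords in a cycle block, and more generally on the fact that two distinct blocks overlap in at most a single cut vertex.
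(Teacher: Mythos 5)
Your proof is correct. The paper actually states this as an Observation with no proof at all, treating it as self-evident; your case analysis (edge blocks versus cycle blocks, a cycle having a spanning path between two vertices exactly when they are consecutive, and the cactus/block structure ruling out chords so that adjacency in $G$ among vertices of $B$ coincides with adjacency along the cycle) is precisely the reasoning the paper implicitly relies on, so your write-up simply supplies the omitted justification.
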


If $B_i$ and $B_j$ are two blocks of a block/cactus graph $G$ and $V(B_i) \cap  V(B_j) \neq \emptyset$, then $V(B_i) \cap  V(B_j) = 1$ and the vertex $x \in V(B_i) \cap  V(B_j)$ is a cut vertex of $G$. Let $T$ be a MIST of a nontrivial block/cactus graph $G$. Below, we state two observations which hold true for both block and cactus graphs.

\begin{Ob}
\label{Ob:blockcactus1}
$T$  must have at least one leaf in every bad block of $G$.
\end{Ob}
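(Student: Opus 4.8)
The plan is to argue by contradiction: assume $T$ has no leaf inside some bad block $B$, so that $\deg_T(v) \ge 2$ for every $v \in V(B)$, and then derive a contradiction, handling the clique case (block graphs) and the cycle/edge case (cactus graphs) via the characterizations in Observations~\ref{Ob:block_bad} and~\ref{Ob:cactus_bad}. The single structural fact driving everything is that if $w \in V(B)$ is \emph{not} a cut vertex of $G$, then every neighbor of $w$ lies in $B$; indeed, an edge $wz$ with $z \notin V(B)$ would lie in a block $B' \neq B$, forcing $w \in V(B) \cap V(B')$ to be a cut vertex. Consequently, writing $F$ for the subgraph formed by the edges of $T$ lying inside $B$, every non-cut vertex $w$ of $B$ satisfies $\deg_F(w) = \deg_T(w)$, so under the no-leaf assumption $\deg_F(w) \ge 2$. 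Since $F$ is a subgraph of the tree $T$, it is a forest, so $|E(F)| \le |V(B)| - 1$.

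For block graphs, $B$ is a clique and, being bad, contains exactly one cut vertex $c$ (Observation~\ref{Ob:block_bad}). The $|V(B)| - 1$ non-cut vertices each have $\deg_F \ge 2$, so $2|E(F)| = \sum_{v \in V(B)} \deg_F(v) \ge 2(|V(B)| - 1)$, giving $|E(F)| \ge |V(B)| - 1$. Together with the forest bound this forces $F$ to be a spanning tree of $B$ in which $\deg_F(c) = 0$, an isolated vertex in a connected graph on at least two vertices, which is impossible. For cactus graphs, $B$ is a cycle (the single-edge case is immediate, as a bad edge-block has a non-cut endpoint that is pendant in $G$, hence a leaf of $T$), and its cut vertices are pairwise non-adjacent (Observation~\ref{Ob:cactus_bad}). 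Since $T$ is acyclic it must omit at least one edge $xy$ of the cycle $B$, so both endpoints lose a cycle-edge and $\deg_F(x), \deg_F(y) \le 1$. By the no-leaf assumption neither endpoint can be a non-cut vertex, so both $x$ and $y$ are cut vertices; but they are adjacent on the cycle, contradicting that the cut vertices of $B$ form an independent set.

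I would note that the argument uses only that $T$ is a spanning tree, not its optimality. I expect the main obstacle to be packaging the two block types cleanly: the clique case rests on a global degree count ruling out a leafless spanning subtree, whereas the cycle case hinges on the local observation that any omitted cycle-edge would produce two adjacent cut vertices. Some care is also needed with the degenerate single-edge cactus block and with verifying that $F$ is genuinely a forest whose induced degrees capture the full $T$-degree of each non-cut vertex; both points reduce to the block-intersection fact quoted above.
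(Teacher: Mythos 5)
Your proof is correct, and it is actually more than the paper provides: the paper states this as an Observation with no proof at all, so there is no ``paper proof'' to match against. Your two pillars are sound. The block-intersection fact (a vertex of $B$ with a neighbor outside $B$ lies in two blocks, hence is a cut vertex) is exactly what makes $\deg_F(w)=\deg_T(w)$ for non-cut vertices $w$, and both the clique counting argument and the omitted-cycle-edge argument go through, including the degenerate single-edge cactus block (a bad edge-block must have a non-cut endpoint, else the edge itself is a spanning path between two cut vertices and the block would be good). One remark on economy: the two cases can be unified by the standard fact that for any spanning tree $T$ of $G$ and any block $B$, the edges of $T$ inside $B$ form a spanning tree $T_B$ of $B$ (any $T$-path between two vertices of $B$ cannot leave $B$, since re-entry would force a cycle through a second block). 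Then $T_B$ has at least two leaves; in the block-graph case a bad block has only one cut vertex, so some leaf of $T_B$ is a non-cut vertex and hence a leaf of $T$; in the cactus case $T_B$ is a Hamiltonian path of the cycle whose two ends are adjacent, so by Observation~\ref{Ob:cactus_bad} they cannot both be cut vertices. Your version trades this connectivity fact for a degree count in the clique case, which is a perfectly valid exchange; your observation that optimality of $T$ is never used is also correct, as the statement holds for every spanning tree.
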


\begin{Ob}
\label{Ob:blockcactus2}
$Opt(G) \leq n - \vert \text{Bad}(G) \vert$, where $Opt(G)$ denotes the number of internal vertices in $T$.
\end{Ob}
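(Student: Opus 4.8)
The plan is to bound $Opt(G)$ by counting the leaves of a maximum internal spanning tree $T$ of $G$. Since the internal vertices and leaves of $T$ partition $V(G)$, we have $i(T) + (\text{number of leaves of } T) = n$, and $i(T) = Opt(G)$ because $T$ is a MIST; hence it suffices to exhibit at least $\vert \text{Bad}(G)\vert$ leaves in $T$. Observation \ref{Ob:blockcactus1} already supplies at least one leaf of $T$ inside each bad block, so the whole argument reduces to showing that the leaves attached to distinct bad blocks are distinct vertices, so that these per-block contributions add up rather than overlap.

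The crux is the structural fact that no leaf of $T$ can be a cut vertex of $G$. To see this, suppose $v$ is a cut vertex, so $G - v$ has components $C_1,\dots,C_t$ with $t \ge 2$. Since $G$ contains no edge between distinct $C_i$ and $C_j$, every edge of $T$ incident to a component $C_i$ from outside must pass through $v$; consequently $T$ must contain at least one edge from $v$ into each $C_i$, forcing $d_T(v) \ge t \ge 2$. Thus every cut vertex is internal in $T$, and therefore every leaf of $T$ is a non-cut vertex. Combining this with the fact recorded just before the observations, namely that any vertex shared by two blocks is a cut vertex (equivalently, a non-cut vertex lies in exactly one block), we conclude that each leaf of $T$ belongs to a unique block.

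It now follows that for two distinct bad blocks $B$ and $B'$, the leaf guaranteed in $B$ and the leaf guaranteed in $B'$ cannot coincide, since a single leaf lies in only one block. This produces at least $\vert \text{Bad}(G)\vert$ pairwise distinct leaves of $T$, and the desired inequality $Opt(G) \le n - \vert \text{Bad}(G)\vert$ is immediate. I expect the main obstacle to be precisely this distinctness step, that is, ruling out one vertex serving as the guaranteed leaf of two different bad blocks; its resolution hinges entirely on the observation that cut vertices are forced to be internal in any spanning tree. Since this reasoning never uses the specific clique-or-cycle structure of the blocks, I anticipate that a single uniform argument will handle both the block-graph and cactus-graph cases at once.
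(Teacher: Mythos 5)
Your proof is correct and matches the reasoning the paper leaves implicit: it combines Observation \ref{Ob:blockcactus1} with the facts that blocks intersect only in cut vertices and that a cut vertex must be internal in any spanning tree (the same fact the paper invokes in the proof of Theorem \ref{thm:blockcactus}), so the guaranteed leaves in distinct bad blocks are distinct. Your explicit argument that a cut vertex $v$ has $d_T(v)\ge t\ge 2$ is a valid filling-in of the distinctness step the paper takes for granted.
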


Recall that block decomposition of a graph $G$ is the set of blocks of $G$. It can be computed in $O(n)$ time using the following approach. Let $b$ be a cut vertex of a block/cactus graph $G$ and $G_1, G_2, \ldots, G_t$ be the connected components of the graph $G-b$. Let $H_i$ denotes the subgraph $G[V(G_i) \cup \{b\}]$, for each $1 \leq i \leq t$. We call $H_1, H_2, \ldots, H_t$ the $b$-components of $G$. The block decomposition of a block/cactus graph can be found by recursively choosing a cut vertex $b$ and computing the $b$-components.
\subsection{Algorithm for Block and Cactus Graphs}
\label{blockcactus}
In this subsection, we first prove a theorem which relates the number of internal vertices in a MIST of a block/cactus graph $G$ to the number of bad components of $G$. Then, we outline a linear-time algorithm to compute a MIST of $G$.
\begin{theorem}
	\label{thm:blockcactus}
	Let $G$ be a graph with a nontrivial block decomposition such that each block has a spanning path with a cut vertex as an endpoint. Then $G$ has a spanning tree $T$ in which number of internal vertices is $n - \vert Bad(G)\vert$.
\end{theorem}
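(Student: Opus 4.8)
The plan is to construct the required spanning tree directly as a union of carefully chosen spanning paths, one inside each block, and then to verify by an edge count that this union is a tree whose only leaves are those forced by the bad blocks. This will simultaneously match the upper bound of Observation~\ref{Ob:blockcactus2}.

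First I would fix, for every block $B$ of $G$, a spanning path $P_B$ of $B$ as follows. Since $G$ has a nontrivial block decomposition, every block contains at least one cut vertex (each block-node of the block--cut tree is adjacent to a cut-node). If $B$ is good, take $P_B$ to be a spanning path between two cut vertices of $B$, which exists by definition. If $B$ is bad, take $P_B$ to be a spanning path with a cut vertex as an endpoint, which is exactly what the hypothesis of the theorem guarantees; crucially, the other endpoint of this path must then be a non-cut vertex, since otherwise $P_B$ would join two cut vertices and $B$ would be good. I then set $T = \bigcup_{B} P_B$, the union taken over all blocks.

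Next I would argue that $T$ is a spanning tree. It spans $V(G)$ because each $P_B$ visits all of $V(B)$ and every vertex lies in some block. Since distinct blocks share at most a single (cut) vertex and no edge, the edge sets of the paths are disjoint, so $\vert E(T)\vert = \sum_{B}(\vert V(B)\vert - 1)$. Using the standard block--cut tree identity $\sum_{B}\vert V(B)\vert = n + \vert B(G)\vert - 1$ (a non-cut vertex is counted once, and the total number of cut-vertex/block incidences equals the number of edges of the block--cut tree, namely $\vert B(G)\vert + (\#\text{cut vertices}) - 1$), this gives $\vert E(T)\vert = n - 1$. Finally $T$ is connected, because any two vertices can be joined by routing along a $G$-path through the blocks connecting them, using that each $P_B$ connects all vertices of its block; a connected graph on $n$ vertices with $n-1$ edges is a tree.

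It then remains to count the leaves of $T$. For any vertex $v$ we have $d_T(v) = \sum_{B \ni v} d_{P_B}(v)$. A cut vertex lies in at least two blocks and has degree at least $1$ in each corresponding path, so $d_T(v) \geq 2$ and it is internal; hence every leaf of $T$ is a non-cut vertex, and each non-cut vertex lies in a unique block. By the choice of paths, a good block contributes no non-cut endpoint while a bad block contributes exactly one, and these leaves are distinct across blocks. Thus $T$ has exactly $\vert \text{Bad}(G)\vert$ leaves, so its number of internal vertices is $n - \vert \text{Bad}(G)\vert$, as required. The only steps demanding genuine care are the edge count establishing that gluing the per-block paths at cut vertices never creates a cycle, and the observation that a bad block's spanning path is \emph{forced} to terminate at a non-cut vertex; the remainder is bookkeeping.
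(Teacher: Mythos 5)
Your proposal is correct and takes essentially the same approach as the paper's proof: build $T$ as the union over blocks of a spanning path joining two cut vertices in each good block and a spanning path ending at the unique cut vertex in each bad block, then observe that the leaves of $T$ are exactly the non-cut endpoints contributed by the bad blocks. The only difference is that you make explicit the spanning-tree verification (edge count via the block--cut tree) and the leaf count, which the paper asserts without detail.
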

\begin{proof}
	Let $l$ be the number of blocks in $G$ and $B_i \in B(G)$ be an arbitrary block of $G$. If $B_i$ is good, then let $P_i$ be a spanning path between two cut vertices of $B_i$. If $B_i$ is bad, we let $P_i$ be a spanning path with a single cut vertex as an endpoint. Let $T = \bigcup_{i = 1}^{l} P_i$. Note that $T$ is a spanning tree of $G$. Furthermore, as any cut vertex of $G$ cannot be a leaf of $T$, we have $i(T) = n - \vert Bad(G)\vert$.
\end{proof}

The proof of Theorem \ref{thm:blockcactus} gives a simple algorithm for a block or cactus graph. First find a block decomposition, this takes $O(n)$ time. Then for each block $B$, determine if $B$ is bad or not and find the corresponding path. This takes $O(\vert B\vert)$ time. In total we have a linear-time algorithm. As both block and cactus graphs satisfy the hypothesis of Theorem \ref{thm:blockcactus}, combining with Obervation \ref{Ob:blockcactus2} we have the following,

\begin{corollary}\label{cor:block}
	If $G$ is a block or cactus graph, then $Opt(G) = n - \vert Bad(G)\vert$.
\end{corollary}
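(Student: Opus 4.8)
The plan is to obtain the stated equality by sandwiching $Opt(G)$ between two bounds already available in the excerpt: the lower bound supplied by Theorem \ref{thm:blockcactus} and the upper bound from Observation \ref{Ob:blockcactus2}. Working within the nontrivial setting assumed throughout this section, the only genuine work is to check that both block graphs and cactus graphs meet the hypothesis of Theorem \ref{thm:blockcactus}, namely that every block admits a spanning path having a cut vertex of $G$ as one of its endpoints.

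First I would verify the hypothesis for block graphs. Since $G$ is nontrivial and connected, every block $B$ contains at least one cut vertex $x$ of $G$; otherwise $B$ would share no vertex with the rest of $G$ and $G$ would be disconnected. As each block of a block graph is a clique, $B$ admits a Hamiltonian path between any prescribed pair of its vertices, and in particular a spanning path with $x$ as an endpoint. Hence block graphs satisfy the hypothesis.

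Next I would carry out the same check for cactus graphs. Again every block $B$ contains a cut vertex $x$. If $B$ is a single edge, then $B$ itself is a spanning path with $x$ as an endpoint. If $B$ is a cycle $v_1 v_2 \cdots v_t v_1$ with $x = v_1$, then deleting the edge $v_t v_1$ yields the spanning path $v_1 v_2 \cdots v_t$ with $x$ at one end. Thus cactus graphs also satisfy the hypothesis.

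With the hypothesis verified in both cases, Theorem \ref{thm:blockcactus} produces a spanning tree of $G$ with exactly $n - \vert Bad(G)\vert$ internal vertices, so $Opt(G) \geq n - \vert Bad(G)\vert$. Combined with the reverse inequality $Opt(G) \leq n - \vert Bad(G)\vert$ from Observation \ref{Ob:blockcactus2}, we conclude $Opt(G) = n - \vert Bad(G)\vert$. I do not anticipate a serious obstacle: the result is a direct corollary of the preceding theorem and observation, and the main care needed is the routine confirmation that each block of a nontrivial block or cactus graph contains a cut vertex, so that the required spanning path can be anchored at it.
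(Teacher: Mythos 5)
Your proposal is correct and follows essentially the same route as the paper: the paper also obtains the equality by combining the spanning tree from Theorem \ref{thm:blockcactus} (lower bound) with Observation \ref{Ob:blockcactus2} (upper bound), merely asserting without detail that block and cactus graphs satisfy the theorem's hypothesis. Your explicit verification of that hypothesis (cliques are Hamiltonian-connected; cycles yield a spanning path by deleting an edge at the cut vertex) is exactly the routine check the paper leaves implicit.
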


\subsection{Relationship between $Opt(G)$ and $\vert E(P^*)\vert $}
\label{relationship}

We now show for a block or cactus graph $G$ there does not exist a constant $k$ such that $Opt(G) \ge \vert E(P^*)\vert - k$ where $P^*$ is an optimal path cover of $G$.  Recall Corollary \ref{cor:block} states that $Opt(G) = n - \vert Bad(G)\vert$ and Theorem \ref{thm:general upper bound} states $Opt(G) \le \vert E(P^*)\vert -1$.
Note that the number of edges in the optimal path cover $P^*$ and the number of components in $P^*$ adds up to $n$. So, we see that $n - \vert Bad(G)\vert = \vert E(P^*)\vert -(\vert Bad(G)\vert-\vert P^*\vert)$. Thus, $Opt(G) = \vert E(P^*)\vert -(\vert Bad(G)\vert- \vert P^*\vert)$ for both block and cactus graphs.

For every integer $n = 5k ~ ( k \geq 1)$, we construct a connected  graph $G_n$ with $n$ vertices and $Opt(G_n) = \vert E(P^*)\vert - O(n) $. The graph $G_n$ is both a block graph and a cactus graph as every block of $G_n$ is either an edge or a clique on three vertices. The vertex set of $G_n$  is $V(G_n) = V_1 \cup V_2 \cup \ldots \cup V_k$, where $V_i = \{x_1^i, x_2^i, \ldots, x_5^i\}$ for each $ i \in \{1,2, \ldots, k\}$. The edge set is $E(G_n) = E_1 \cup E_2 \cup \ldots \cup E_k \cup E'$, where $E_i = \{x_1^i x_2^i, x_2^i x_3^i, x_3^i x_1^i, x_3^i x_4^i, x_4^i x_5^i, x_5^i x_3^i\}$ for each $i$ and $E'$ contains the edges of the form $x_3^i x_3^{i+1}$ for $1 \leq i \leq (k-1)$. Note $\vert E(G_n)\vert = 7k -1$. We obtain an optimal path cover $P^*$ for $G_n$ having $4k$ edges and $k$ components \cite{pak}. The number of bad blocks in $G_n$ is $2k$. Using Theorem \ref{thm:blockcactus}, we obtain a MIST $T$ of $G_n$ with $n-\vert Bad(G)\vert=5k-2k=3k$ internal vertices. Thus, $Opt(G_n) = 3k = 4k - k = 4k - \frac{n}{5} = \vert E(P^*)\vert - O(n)$. Fig.~\ref{fig:bcbound} provides an illustration for $G_{20}$.

\begin{figure}[h!]
 \centering
 \includegraphics[width=10cm, height=6cm]{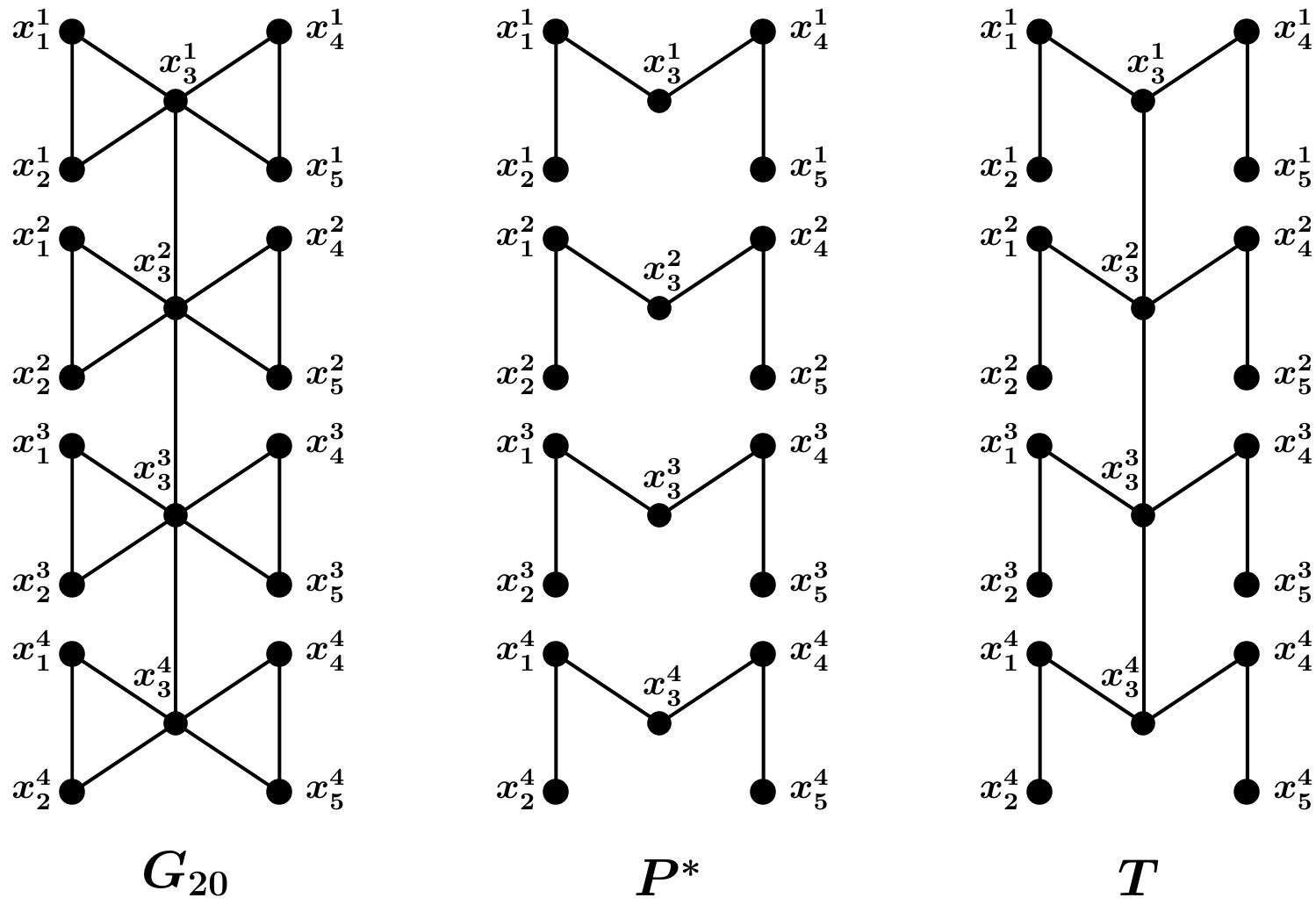}\\
 \caption{Graph $G_{20}$, its optimal path cover $P^*$ and its MIST $T$}
\label{fig:bcbound}
\end{figure}
 Here, we see that $\vert Bad(G_n)\vert- \vert P^*\vert=2k-k=k$ which implies that for arbitrary $n=5k$, we have $Opt(G_n) = \vert E(P^*)\vert -k$. So, block and cactus graphs do not have lower bound for $Opt(G)$ of the form $\vert E(P^*)\vert - c$ for some fixed natural number $c$, independent of $n$.
\section{Cographs}
\label{sec:cograph}
In this section, we discuss the MIST problem for cographs. The complement-reducible graphs or \textit{cographs}  have been discovered independently by several authors since the 1970s \cite{sein, jung}. In the literature, the cographs are also known as $P_4$-free graphs, $D^*$-graphs, Hereditary Dacey graphs and $2$-parity graphs. The class of cographs is defined recursively as follows:
\begin{itemize}
    \item  A single-vertex graph is a cograph;
    \vspace*{1 mm}
	\item If $G$ is a cograph, then its complement $\overline{G}$ is also a cograph;
	 \vspace*{1 mm}
	\item If $G$ and $H$ are cographs, then their disjoint union is also a cograph.
\end{itemize}
Cographs admit a rooted tree representation. This tree is called \textit{cotree} of a cograph $G$, denoted $T(G)$. The cotree of a cograph rooted at a node $r$ possesses the following properties.
\begin{enumerate}
\item Every internal vertex except $r$ has at least two children. Furthermore, $r$ has exactly one child if and only if the underlying cograph $G$ is disconnected.
\item The internal nodes are labeled by either $0$ ($0$-nodes) or $1$ ($1$-nodes) in such a way that the root is always a $1$-node, and $1$-nodes and $0$-nodes alternate along every path in $T(G)$ starting at the root.
\item The leaves of $T(G)$ are precisely the vertices of $G$, such that vertices $x$ and $y$ are adjacent in $G$ if, and only if, the lowest common ancestor of $x$ and $y$ in $T(G)$ is a $1$-node.
\end{enumerate}
Fig.~\ref{fig:cograph} illustrates a cograph $G$ along with its cotree $T(G)$.
\begin{figure}[h!]
 \centering
 \includegraphics[width=8cm]{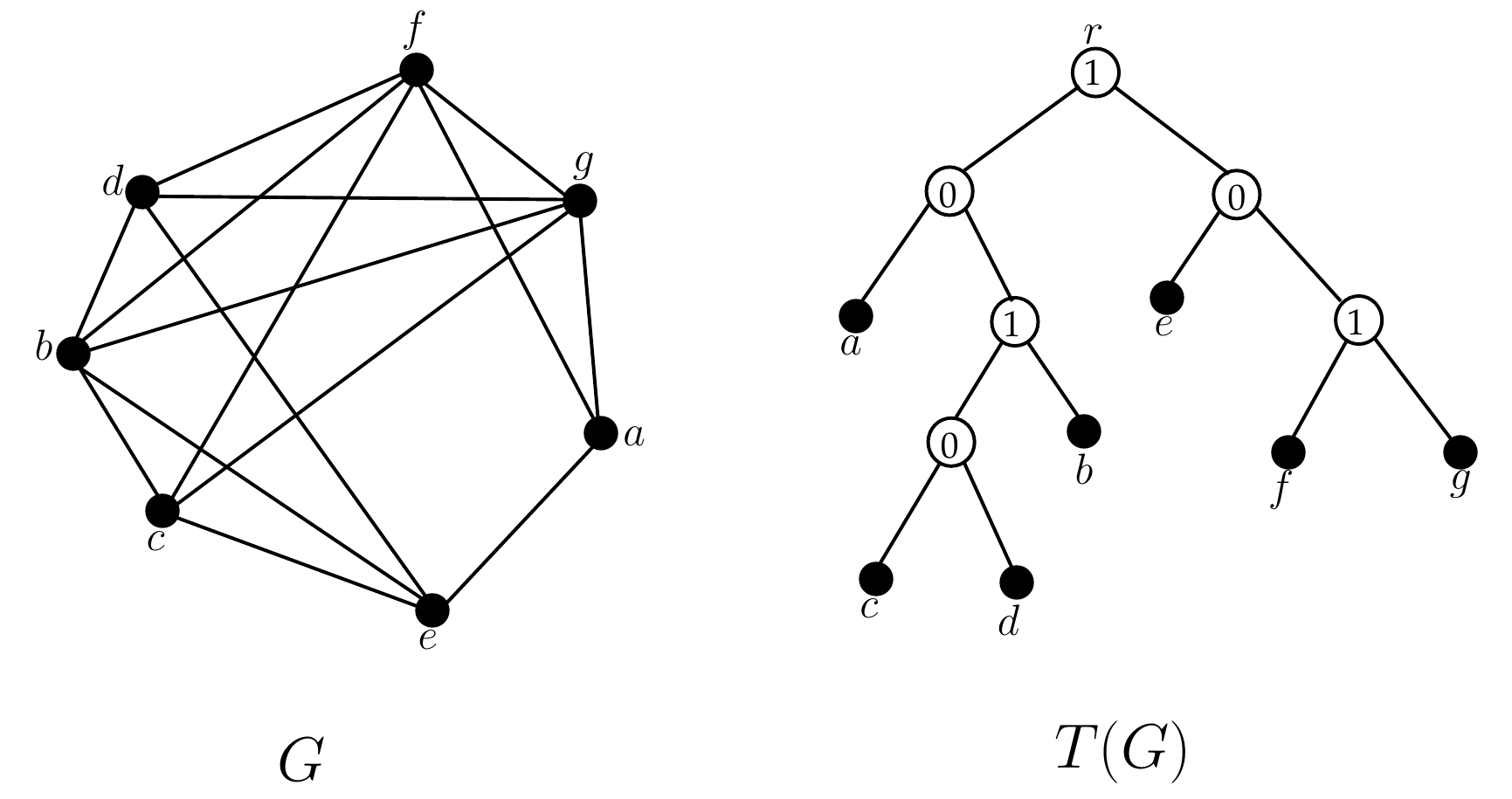}
 \caption{Illustrating a cograph and its cotree}
\label{fig:cograph}
\end{figure}

According to \cite{lin}, the cotree of any cograph $G$ can be preprocessed such that it is a binary tree. So, we may assume that $T(G)$ is a binary tree rooted at a vertex $r$. The set of leaves of the left subtree  of  $T(G)$ is denoted by $L(r_{left})$ and the set of leaves of the right subtree of $T(G)$ is denoted by $L(r_{right})$.

Note that every leaf of $T(G)$ represents a vertex of the graph $G$. If we consider one vertex from $L(r_{left})$ and one vertex from $L(r_{right})$ then their least common ancestor is the root node. As the root node is always a $1$-node, we have the following observation.

\begin{Ob}
\label{Ob:cograph}
For any $x \in L(r_{left}), y \in L(r_{right})$, we have $xy \in E(G)$.
\end{Ob}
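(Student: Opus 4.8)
The plan is to read the conclusion directly off the three structural properties of the cotree $T(G)$ stated just above, together with the standing assumption that $T(G)$ has been preprocessed into a binary tree rooted at $r$. The single idea driving the whole argument is to locate the lowest common ancestor of $x$ and $y$ in $T(G)$; once that is pinned down, the adjacency in $G$ follows from the labeling rules.

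First I would establish that for any $x \in L(r_{left})$ and any $y \in L(r_{right})$, the lowest common ancestor of $x$ and $y$ in $T(G)$ is exactly the root $r$. The reason is that $x$ lies in the left subtree of $r$ while $y$ lies in the right subtree of $r$, and these two subtrees are vertex-disjoint; hence the only vertex of $T(G)$ that is simultaneously an ancestor of $x$ and of $y$ is $r$ itself. Being the unique common ancestor, $r$ is in particular the lowest one. Then I would invoke property $2$, which guarantees that the root is always a $1$-node, and property $3$, which states that two vertices are adjacent in $G$ if and only if their lowest common ancestor in $T(G)$ is a $1$-node. Since the lowest common ancestor of $x$ and $y$ is the $1$-node $r$, property $3$ immediately yields $xy \in E(G)$.

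The hard part will be essentially nonexistent: the observation is a direct consequence of the cotree axioms, and the only claim that needs a sentence of justification is the location of the lowest common ancestor, which follows from the disjointness of the left and right subtrees of $r$. The value of recording this as an observation is downstream rather than in its proof, as it will let us connect any left-subtree vertex to any right-subtree vertex freely when assembling spanning trees of the cograph.
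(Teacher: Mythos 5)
Your proof is correct and matches the paper's own justification exactly: the paper also observes that the lowest common ancestor of any $x \in L(r_{left})$ and $y \in L(r_{right})$ is the root $r$, which is a $1$-node, so adjacency follows from the cotree's defining property. No gaps; your extra sentence pinning down the LCA via disjointness of the two subtrees is the only detail the paper leaves implicit.
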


Recall that a path cover $P$ of a graph $G$ is a spanning subgraph such that every component of $P$ is a path. A path cover is an optimal path cover if it has the maximum number of edges. \cite{lin} gave a linear-time algorithm to compute an optimal path cover  of a cograph $G$. The optimal path cover $P^*$ constructed in \cite{lin} is one of the following type:
\begin{itemize}
	\item The path cover $P^*$ contains a single path component which is a Hamiltonian path of $G$.\\
	\item The path cover $P^*$ contains at least two path components. In this case, there exists exactly one path $p$ in $P^*$ which contains vertices from both the sets $L(r_{left})$ and $L(r_{right})$ and all other paths in $P^*$ contain vertices from $L(r_{right})$ only. Fig.~\ref{fig:cograph_MIST} illustrates this case.
\end{itemize}

\begin{figure}[h!]
 \centering
 \includegraphics[width=12cm, height=2cm]{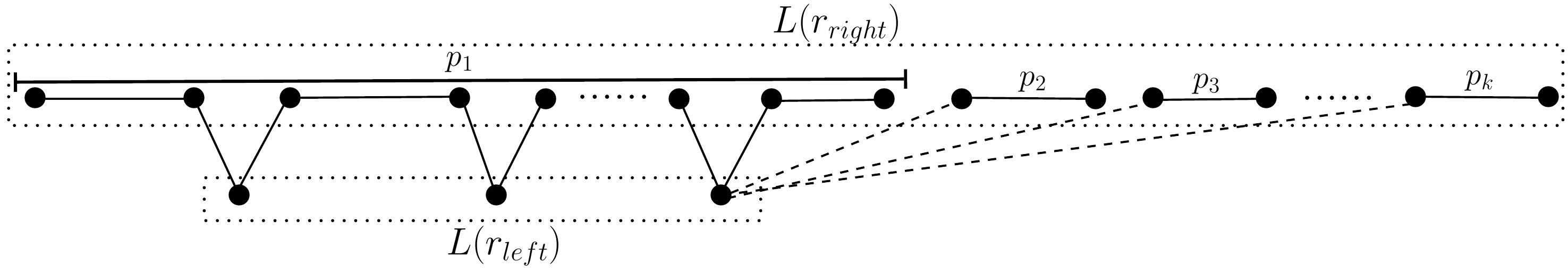}
 \caption{Optimal path cover of $G$ contains more than $1$ path components}
\label{fig:cograph_MIST}
\end{figure}
%
Algorithm \ref{algo:cographMIST} uses the optimal path cover constructed from \cite{lin} to compute a MIST of a cograph $G$. \\
\begin{algorithm}[ht!]
\caption{\label{algo:cographMIST} Algorithm for finding a MIST of a cograph $G$}
  \textbf{Input:} {A cograph $G$ and a cotree $T(G)$ of $G$}
  \textbf{Output:} {A Maximum Internal Spanning Tree $T$ of $G$}
  \nl Let $P^* = \{P_1, P_2, \ldots, P_k \}$ be the optimal path cover of $G$ computed by the algorithm in \cite{lin};

 \nl   $V(T) = V(G)$ and $E(T) = E(P^*) $;

 \nl    \If{ $k=1$}{
 \nl             return $T$;}
   \nl   \Else{
       {/* $P_1$ is the path which contains vertices from both the sets $L(r_{left})$ and $L(r_{right})$ and all other paths in $P^*$ contain vertices from $L(r_{right})$ only */}\\
                \nl  Let $u \in (V(P_1) \cap L(r_{left}))$;\\
             \nl     Let $v_i$ be an end vertex of the path $P_i$, for $2 \leq i \leq k$;\\
            \nl      $E(T) = E(T) \cup \{uv_2, uv_3, \ldots, uv_k\}$;\\
              \nl    return $T$.}
\end{algorithm}

Note that by Theorem \ref{thm:general upper bound} we have $Opt(G) \le \vert{E(P^*)}\vert - 1$ for an optimal path cover $P^*$. Below, we give a theorem which implies that Algorithm \ref{algo:cographMIST} also outputs a spanning tree which attains this upper bound.

\begin{theorem}
\label{thm:cographs}
Algorithm \ref{algo:cographMIST} outputs a spanning tree $T$ of a cograph $G$ such that, $i(T) = \vert E(P^*)\vert-1$, where $P^*$ is an optimal path cover of $G$. Hence, $Opt(G) = \vert E(P^*)\vert-1$.
\end{theorem}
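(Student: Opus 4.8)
The plan is to analyze the two cases produced by the optimal path cover algorithm of \cite{lin} separately, showing in each case that the spanning tree $T$ output by Algorithm \ref{algo:cographMIST} is indeed a spanning tree and that its number of internal vertices equals $\vert E(P^*)\vert - 1$. First I would verify the trivial case $k = 1$: here $P^*$ is a Hamiltonian path of $G$, so $T = P^*$ is already a spanning tree with exactly $n - 2$ internal vertices, and since a Hamiltonian path on $n$ vertices has $n - 1$ edges, we have $i(T) = n - 2 = \vert E(P^*)\vert - 1$ immediately.

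The substantive case is $k \geq 2$. Here I would argue in two steps. The first step is to confirm that $T$ is genuinely a spanning tree. By construction $V(T) = V(G)$ and $E(T)$ consists of all path-cover edges together with the new edges $\{u v_2, \dots, u v_k\}$, where $u \in V(P_1) \cap L(r_{\mathrm{left}})$ and each $v_i$ is an endpoint of $P_i$. I would first invoke Observation \ref{Ob:cograph} to justify that every added edge $u v_i$ actually exists in $G$: since $u \in L(r_{\mathrm{left}})$ and (by the structure guaranteed by \cite{lin}) each $v_i$ for $i \geq 2$ lies in $L(r_{\mathrm{right}})$, the edge $u v_i$ is present. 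Then I would observe that attaching each path $P_i$ to the single vertex $u$ via an endpoint connects all $k$ components into one connected graph while adding exactly $k - 1$ edges; since the disjoint union of $k$ paths on $n$ vertices has $n - k$ edges, the resulting graph has $n - k + (k-1) = n - 1$ edges and is connected, hence it is a spanning tree.

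The second and key step is the internal-vertex count. I would count which vertices of $T$ are leaves. Every internal vertex of each path $P_i$ remains internal in $T$. For the attachment vertex $u$, and for each endpoint $v_i$ ($i \geq 2$) that receives the new edge, its degree rises to at least two, so it becomes (or stays) internal. The only vertices at risk of being leaves in $T$ are the path endpoints that do not receive a new edge: namely the endpoint of $P_1$ on the side not containing $u$ (call it the far endpoint of $P_1$), and the endpoint of each $P_i$ ($i \geq 2$) opposite to the attachment point $v_i$. I would show that exactly one genuine leaf survives. The cleanest route is to count edges against degrees: since $T$ is a tree on $n$ vertices it has $n - 1$ edges and $\sum_v d_T(v) = 2(n-1)$, and I would use the fact that $\vert E(P^*)\vert = n - k$ to rewrite the target $\vert E(P^*)\vert - 1 = n - k - 1$ and match it against the number of degree-one vertices. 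The main obstacle is precisely this bookkeeping: I must argue carefully that only the far endpoint of $P_1$ together with exactly one endpoint per path fails to become internal, and must rule out the degenerate possibility that some $P_i$ is a single vertex (so that $v_i$ is simultaneously both endpoints), which would change the count. Once the number of leaves is pinned down to exactly $k + 1$, it follows that $i(T) = n - (k+1) = (n-k) - 1 = \vert E(P^*)\vert - 1$, and combining with the upper bound $Opt(G) \le \vert E(P^*)\vert - 1$ from Theorem \ref{thm:general upper bound} yields $Opt(G) = \vert E(P^*)\vert - 1$.
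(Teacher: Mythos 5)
Your overall strategy is the same as the paper's (attach $P_2,\ldots,P_k$ to a vertex $u \in V(P_1) \cap L(r_{left})$ using Observation \ref{Ob:cograph}, verify tree-ness, count internal vertices, and finish with Theorem \ref{thm:general upper bound}), and both your $k=1$ case and your spanning-tree verification are correct. But there is a genuine gap in the leaf count, and it sits exactly where the content of the theorem lies. You enumerate the vertices ``at risk'' of being leaves as the far endpoint of $P_1$ together with the endpoint of each $P_i$ ($i \ge 2$) opposite $v_i$ --- that is $1 + (k-1) = k$ vertices --- and yet you later assert that the leaf count is ``exactly $k+1$'', which is what the identity $i(T) = n - (k+1) = \vert E(P^*)\vert - 1$ requires. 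These two statements are inconsistent, and the discrepancy is controlled by something you never address: where $u$ sits inside $P_1$. If $u$ were an endpoint of $P_1$, your enumeration would be the right one; the tree would then have $k$ leaves and $n - k = \vert E(P^*)\vert$ internal vertices, contradicting Theorem \ref{thm:general upper bound} --- a sign that this configuration is impossible, not that the theorem fails. The missing argument is: when $k \ge 2$ and $P^*$ is optimal, no endpoint of $P_1$ can lie in $L(r_{left})$, for otherwise by Observation \ref{Ob:cograph} that endpoint would be adjacent to an endpoint of $P_2$ (all of whose vertices lie in $L(r_{right})$), and joining $P_1$ to $P_2$ by this edge would yield a path cover with $\vert E(P^*)\vert + 1$ edges. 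Hence every vertex of $V(P_1) \cap L(r_{left})$ is interior to $P_1$; this is precisely the choice the paper makes explicit (``$u$ is not an end vertex of $P_1$''). With $u$ interior, both endpoints of $P_1$ remain leaves, each $P_i$ ($i \ge 2$) contributes exactly one leaf, and the count $2 + (k-1) = k+1$ goes through.

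Separately, the degenerate case you flag --- some $P_i$ ($i \ge 2$) being a single vertex --- does not need to be ``ruled out,'' and in fact cannot be: optimal path covers of cographs can contain singleton components (the star $K_{1,3}$ is a cograph exhibiting this). It also does not change the count: a singleton $P_i$ contributes exactly one leaf, namely $v_i$ itself, which stays pendant after the edge $uv_i$ is added, just as a nontrivial $P_i$ contributes exactly one leaf (the endpoint opposite $v_i$). The paper's per-path bookkeeping --- $P_1$ contributes $e_1 - 1$ internal vertices and each $P_i$, $i \ge 2$, contributes $e_i$, summing to $\vert E(P^*)\vert - 1$ --- absorbs the singleton case automatically (there $e_i = 0$), which is one advantage of counting internal vertices path by path rather than counting leaves globally as you propose.
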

\begin{proof}
Let $P^* = \{P_1, P_2, \ldots, P_k\}$ be the optimal path cover computed in step 1 of Algorithm \ref{algo:cographMIST}. If $\vert P^*\vert = 1$, then $G$ has a Hamiltonian path and Algorithm \ref{algo:cographMIST} returns a Hamiltonian path. Now, suppose $\vert P^*\vert > 1$, then the path $P_1$ contains vertices from both sets $L(r_{left})$ and $L(r_{right})$ and $P_i \cap L(r_{left}) = \emptyset$ for all $i \geq 2$. Now, let $u \in V(P_1) \cap L(r_{left})$ such that $u$ is not an end vertex of $P_1$.

For each path in $P_i \in P^* \setminus \{P_1\}$, consider a pendent vertex $v_i$ of the path. By Observation \ref{Ob:cograph}, $v_i$ and $u$ are adjacent. Let $T = \bigcup_{i = 1}^k P_i \cup \{v_iu : 2 \le i \le k\}$. These new edges connect one internal vertex with a pendant vertex of path of $P^*$. This is illustrated by the dash edges in Fig.~\ref{fig:cograph_MIST}. Note then the number of internal vertices of $T$ is $\vert E(P^*)\vert - 1$, hence $i(T) = Opt(G) = \vert E(P^*)\vert - 1$ by Theorem \ref{thm:general upper bound}.
\end{proof}

Note that step 1 of Algorithm \ref{algo:cographMIST} can be performed in linear-time \cite{lin}. Furthermore, note that the construction of $T$ in Theorem \ref{thm:cographs} is also linear-time. Therefore Algorithm \ref{algo:cographMIST} outputs a MIST of $G$ in linear-time.
\section{Bipartite Permutation Graphs}
\label{sec:BP}
In this section, we discuss the MIST problem for bipartite permutation graphs.
A graph $G = (V,E)$ with $V = \{v_1, v_2, \ldots, v_n\}$ is said to be a \textit{permutation graph} if there is a permutation $\sigma$ over $\{1,2, \ldots, n\}$ such that $v_i v_j \in E$ if and only if $(i-j) (\sigma(i) - \sigma(j)) < 0$. Intuitively, each vertex $v$ in a permutation graph corresponds to a line segment $l_v$ joining two points on two parallel lines $L_1$ and $L_2$,  which is called a line representation. Then, two vertices $v$ and $u$ are adjacent if and only if the corresponding line segments $l_v$ and $l_u$ are crossing. Vertex indices give the ordering of the points on $L_1$, and the permutation of the indices gives the ordering of the points on $L_2$. When a permutation graph is bipartite, it is said to be a \textit{bipartite permutation graph}.

A strong ordering $(<_X, <_Y)$ of a bipartite graph $G = (X,Y,E)$ consists of an ordering $<_X$ of $X$ and an ordering $<_Y$ of $Y$, such that for all edges $ab, a'b'$ with $a, a' \in X$ and $b, b' \in Y$: if $a <_X a'$ and $b' <_Y b,$ then $ab'$ and $a'b$ are edges in $G$. An ordering $<_X$ of $X$  has the \textit{adjacency property} if, for every vertex in $Y$, its neighbors in $X$ are consecutive in $<_X$. The ordering $<_X$ has the \textit{enclosure property} if, for every pair of vertices $y, y'$ of $Y$ with $N(y) \subseteq N(y')$, the vertices of $N(y') \setminus N(y)$ appear consecutively in $<_X$. These properties are useful for characterizing bipartite permutation graphs.

\cite{hegg} proved that \textit{a bipartite graph is a bipartite permutation graph if and only if it admits a strong ordering}. Furthermore if we assume that the graph is connected, then we can say more.

\begin{lemma}\cite{hegg}\label{lemma:strongbpg}
Let $(<_X , <_Y)$ be a strong ordering of a connected bipartite permutation graph $G = (X, Y, E)$. Then both $<_X$ and $<_Y$ have the adjacency property and the enclosure property.
\end{lemma}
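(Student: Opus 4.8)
The plan is to prove the equivalence by verifying each of the two named properties (adjacency and enclosure) for the ordering $<_X$; the argument for $<_Y$ is symmetric and follows by swapping the roles of $X$ and $Y$. Throughout I would work directly from the definition of a strong ordering, which is the only hypothesis available, and I would repeatedly exploit its defining implication: whenever $a <_X a'$, $b' <_Y b$, and $ab, a'b' \in E$, we automatically get $ab', a'b \in E$ as well.

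For the \emph{adjacency property}, I would fix a vertex $y \in Y$ and let $a, a'' \in X$ be two of its neighbors with $a <_X a''$. I must show every $a'$ with $a <_X a' <_X a''$ is also adjacent to $y$. The natural approach is a proof by contradiction or by a direct crossing argument: since $G$ is connected, $a'$ has at least one neighbor $y' \in Y$, and comparing $y'$ with $y$ under $<_Y$ together with the strong-ordering implication applied to the edges $ay$, $a'y'$ (or $a''y$, $a'y'$) should force $a'y \in E$. The key is to set up the inequalities so that the strong-ordering condition is applicable in the correct orientation; connectivity is what guarantees $a'$ is not isolated and hence has a witness neighbor to run the argument against.

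For the \emph{enclosure property}, I would take $y, y' \in Y$ with $N(y) \subseteq N(y')$ and show that $N(y') \setminus N(y)$ occupies a consecutive block in $<_X$. Here I expect to argue that, because $N(y) \subseteq N(y')$, the neighbors of $y$ form a sub-interval of the neighbors of $y'$ (both being intervals by the adjacency property just established), so $N(y') \setminus N(y)$ is the union of at most two end-segments of the interval $N(y')$; I would then use the strong ordering to rule out the split case or otherwise show the leftover vertices are consecutive. Establishing adjacency first is important because it lets me treat each neighborhood as an interval in $<_X$, which is exactly the structure the enclosure property is about.

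The main obstacle I anticipate is the enclosure property rather than adjacency: when $N(y') \setminus N(y)$ could a priori fall on \emph{both} sides of the interval $N(y)$, I must rule out the disconnected configuration, and this is where a careful application of the strong ordering, the enclosure hypothesis $N(y) \subseteq N(y')$, and possibly connectivity must be combined. Getting the inequalities in $<_X$ and $<_Y$ aligned so that the strong-ordering implication fires in the needed direction is the delicate bookkeeping step. I would lean on the interval characterization of neighborhoods from the adjacency property to reduce the enclosure claim to showing a single interval cannot be punctured, which should make the final contradiction clean.
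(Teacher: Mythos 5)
The paper does not prove this lemma at all --- it is imported verbatim from \cite{hegg} and stated without proof --- so there is no in-paper argument to compare yours against; I can only assess your proposal on its own terms, and on those terms it is sound. For the adjacency property, your witness-neighbor argument works exactly as you set it up: if $a <_X a' <_X a''$ with $ay, a''y \in E$, connectivity gives $a'$ a neighbor $y'$, and then either $y' <_Y y$, in which case the strong-ordering implication applied to the edges $ay$ and $a'y'$ yields $a'y \in E$, or $y <_Y y'$, in which case applying it to $a'y'$ and $a''y$ yields $a'y \in E$; both orientations fire correctly. For the enclosure property, your reduction is also correct: by adjacency, $N(y)$ and $N(y')$ are intervals in $<_X$ with $N(y)$ a sub-interval of $N(y')$, so $N(y') \setminus N(y)$ consists of at most two end-segments, and if both were nonempty you could pick $u$ in the left segment, $v \in N(y)$, and $w$ in the right segment with $u <_X v <_X w$; then whichever of $y <_Y y'$ or $y' <_Y y$ holds, the strong-ordering implication (applied to the edge pair $uy', vy$ in the first case and to $vy, wy'$ in the second) forces $uy \in E$ or $wy \in E$ respectively, contradicting $u, w \notin N(y)$. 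The one refinement worth noting is that the step you flagged as the main obstacle --- ruling out the split configuration --- is in fact a two-line case analysis rather than a genuine difficulty, and connectivity is needed only in the adjacency half (to guarantee the witness $y'$): in the enclosure half the split case automatically supplies the middle vertex $v \in N(y)$, since a gap vertex of $N(y') \setminus N(y)$ must lie in $N(y)$ or else it would puncture the interval $N(y')$ itself.
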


Throughout this section, $G=(X,Y,E)$ denotes a connected bipartite permutation graph. A strong ordering of a bipartite permutation graph can be computed in linear-time \cite{spin}. Let $(<_X , <_Y)$ be a strong ordering of $G$, where $<_X = (x_1,x_2, \ldots, x_{n_1})$ and $<_Y = (y_1,y_2, \ldots, y_{n_2})$. We write strong ordering of vertices of $G$ as $(<_X, <_Y) = (x_1, x_2, \ldots, x_{n_1}, y_1, y_2, \ldots, y_{n_2})$. For $u,v \in V(G)$, we write $u <_X v$ if $u,v \in X$ and $u$ appears before $v$ in the strong ordering; we define $u <_Y v$ in a similar manner. We write $x_i < x_j$ (or, $y_i < y_j$) when $i<j$. For vertices $u,v$ of $G$, $u \leq v$  denotes either $u <_X v$, $u <_Y v$, or $u = v$ holds.

Since each vertex of $G$ satisfies the adjacency property, the neighborhood of any vertex  consists of consecutive vertices in the strong ordering. We define the first neighbor of a vertex as the vertex with minimum index in its neighborhood and the last neighbor of a vertex as the vertex with maximum index in its neighborhood. We notate the first and last neighbors of a vertex $u$ as $f(u)$ and $l(u)$ respectively. Combining the above statements for a bipartite permutation graph $G$ with its strong ordering $(<_X , <_Y )$, $G$ has the following properties \cite{lai19}:
\begin{enumerate}
	\item  For any vertex of $G$, its neighborhood consists of consecutive vertices in $<_X$ or $<_Y$ .
	\item For any pair of vertices $u, v$ from $X$ or $Y$, if $u < v$ then $f(u) \leq f(v)$ and $
	l(u) \leq l(v)$.
\end{enumerate}

Now, we define some terminology which we require for the remainder of this section.
A vertex $x_i \in X, (1 \leq i \leq n_1)$ with $l(x_i) = y_j$ is of \textit{type 1}  if $j \geq i$. A vertex $y_i \in Y, (1 \leq i \leq n_2)$ with $l(y_i) = x_j$ is of \textit{type 1}  if $j \geq i+1$. Similarly, a vertex $x_i \in X, (1 \leq i \leq n_1)$ with $l(x_i) = y_j$ is of \textit{type 2}  if $j \geq i+1$ and a vertex $y_i \in Y, (1 \leq i \leq n_2)$ with $l(y_i) = x_j$ is of \textit{type 2}  if $j \geq i$. Note that a type 2 vertex $x \in X$ is also a type 1 vertex but the converse may not be true. Furthermore, a type 1 vertex $y \in Y$ is also a type 2 vertex. Characterizing the vertices in this way is an important distinction for our algorithm. We now prove two important lemmas which will be used to prove the correctness of Algorithm \ref{algo:bpgMIST}.

\begin{lemma}\label{lemma:case1bpg}
Let $X' = \{x_1, x_2, \ldots, x_k, x_{k+1}\} \subseteq X,~~Y' = \{y_1, y_2, \ldots, y_k\} \subseteq Y$. Furthermore, suppose  each vertex of $X'$ and $Y'$ is of type 1 except $x_{k+1}$, $l(x_{k+1}) = y_k$ and $N(X') = Y'$. Then there exists a MIST $T$ of $G$, in which $x_1$ are $x_{k+1}$ are pendant. Moreover; if $X \setminus X' \neq \emptyset$, then the support vertex of $x_{k+1}$ is of degree at least $3$ in $T$.
\end{lemma}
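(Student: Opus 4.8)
The plan is to combine the pendant lower bound of Observation~\ref{Ob:basic} with an exchange argument that pushes the forced leaves of $X'$ onto the two extreme vertices $x_1$ and $x_{k+1}$, and then a routing argument that controls the support vertex of $x_{k+1}$.

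First I would record the structural consequences of the hypotheses. Since $N(X')=Y'$ and each neighborhood is an interval with $f(x_1)\le f(x_i)$, the vertex $y_1$ can be reached from $X'$ only by vertices whose first neighbor is $y_1$; as $y_1\in N(X')$ this forces $f(x_1)=y_1$, and the type~1 assumption on $y_1$ (so $l(y_1)=x_j$ with $j\ge 2$) yields $N(y_1)=\{x_1,\dots,x_j\}$ with $j\ge 2$, so $y_1$ is also adjacent to $x_2$. Symmetrically, from $l(x_i)\le l(x_{k+1})=y_k$ and the type~1 condition $l(x_k)\ge y_k$ I get $l(x_k)=l(x_{k+1})=y_k$, so $x_k$ and $x_{k+1}$ share the neighbor $y_k$. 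Applying Observation~\ref{Ob:basic} with $A=X'$, $B=Y'$ and $|X'|=|Y'|+1$ shows that every spanning tree of $G$ has at least two pendant vertices in $X'$; hence no spanning tree has more than $k-1$ internal vertices inside $X'$, and it suffices to produce a MIST attaining this bound with the two $X'$-leaves located precisely at $x_1$ and $x_{k+1}$.

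Next I would start from an arbitrary MIST $T$ and relocate its $X'$-leaves by local exchanges that never decrease $i(T)$. To free $x_1$: if $x_1$ is internal, at least two vertices of $X'$ are pendant and none of them is $x_1$, so a pendant $x_s$ with $s\ge 2$ exists; I reroute the tree so that $y_1$ is attached through $x_2$ (available by the first step) and transfer the remaining incidences of $x_1$ to $x_s$, using $f(x_1)\le f(x_s)$ and $l(x_1)\le l(x_s)$ to guarantee the replacement edges exist. The monotonicity of $f$ and $l$ together with the adjacency and enclosure properties of Lemma~\ref{lemma:strongbpg} ensure each swap keeps $T$ a spanning tree with the same internal count. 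To free $x_{k+1}$: since $N(x_{k+1})\subseteq\{y_1,\dots,y_k\}$ and $l(x_{k+1})=y_k=l(x_k)$, I transfer the internal role of $x_{k+1}$ onto $x_k$ (and earlier vertices) in the same manner. The result is a MIST $T'$ in which $x_1$ and $x_{k+1}$ are pendant while $x_2,\dots,x_k$ are internal.

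Finally I would pin down the degree of $S(x_{k+1})$. Because $N(X')=Y'$ and $G$ is bipartite, the only edges leaving $X'\cup Y'$ run from $Y'$ to $X\setminus X'$; so if $X\setminus X'\neq\emptyset$, connectivity of $G$ forces an edge $y_p x_q$ with $x_q\in X\setminus X'$, i.e.\ $q\ge k+2$, whence $l(y_k)\ge l(y_p)\ge x_q>x_{k+1}$ by monotonicity and $y_k$ itself has a neighbor in $X\setminus X'$. I would therefore build $T'$ so that all of $G-(X'\cup Y')$ hangs off the tree through such an edge at $y_k$, take $y_k$ as the support vertex of $x_{k+1}$, and retain the edge $x_k y_k$. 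Then $y_k$ is adjacent in $T'$ to $x_{k+1}$, to $x_k$, and to a vertex of $X\setminus X'$, giving $d_{T'}(S(x_{k+1}))\ge 3$. The main obstacle is the exchange step: carrying out the two relocations simultaneously and verifying that they preserve both connectedness and the internal-vertex count when the neighborhoods have gaps, so that the naive zigzag $x_1 y_1 x_2 y_2\cdots x_{k+1}$ need not be a path. The bookkeeping here rests entirely on the monotonicity of $f$ and $l$ and the adjacency/enclosure properties, together with the fact that the type~1 hypotheses exclude the degenerate configurations (such as $N(y_1)=\{x_1\}$) that would otherwise force $x_1$ or $x_{k+1}$ to remain internal.
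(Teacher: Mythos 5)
Your overall strategy matches the paper's in outline: Observation~\ref{Ob:basic} caps the number of internal vertices of $X'$ at $k-1$, so it suffices to exhibit a MIST whose two $X'$-leaves are exactly $x_1$ and $x_{k+1}$, with $y_k$ as a high-degree support vertex. But the mechanism you propose for achieving this --- relocating leaves by local exchanges --- has a genuine gap at precisely the step you yourself flag as ``the main obstacle.'' To make $x_1$ pendant you ``transfer the remaining incidences of $x_1$ to $x_s$, using $f(x_1)\le f(x_s)$ and $l(x_1)\le l(x_s)$ to guarantee the replacement edges exist.'' Monotonicity of $f$ and $l$ does not guarantee this: neighborhoods are intervals, but shifted intervals need not be nested. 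For instance $N(x_1)=\{y_1,y_2\}$ and $N(x_s)=\{y_2,y_3\}$ is fully consistent with the adjacency, enclosure and monotonicity properties of Lemma~\ref{lemma:strongbpg}, yet a tree edge $x_1y_1$ has no replacement edge at $x_s$. So the exchange step, as stated, can demand edges that do not exist, and no repair is offered; acknowledging that this is the hard part does not discharge it.

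The fact you missed --- and which your hypotheses do guarantee, contrary to your closing worry --- is that the zigzag $x_1y_1x_2y_2\cdots x_ky_kx_{k+1}$ \emph{is} a path of $G$; this is the first thing the paper proves and it is what lets one avoid local exchanges altogether. Indeed, if $x_iy_i\notin E(G)$, write $l(x_i)=y_j$ and $l(y_i)=x_l$; the type 1 conditions give $j\ge i$ and $l\ge i+1$, and applying the strong-ordering condition to the edges $x_iy_j$ and $x_ly_i$ forces $x_iy_i\in E(G)$, a contradiction; then $y_ix_{i+1}\in E(G)$ because $y_i$ is type 1 and its neighborhood is consecutive. With the zigzag in hand, the paper performs one global surgery on a MIST $T^*$: delete every edge of $T^*$ incident with $X'$, insert the zigzag (the result is connected because $N(X')=Y'$), destroy any cycle by repeatedly deleting an edge $vy_i$ with $v\in X\setminus X'$ on the cycle, and then show by counting --- using Observation~\ref{Ob:basic} together with a forest edge-count on $X'\cup Y'\cup A$, where $A$ is the set of outside vertices made pendant during cycle removal --- that the new tree has at least $i(T^*)$ internal vertices, hence is itself a MIST with the desired leaves. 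Finally, the degree-$3$ condition on $S(x_{k+1})=y_k$ needs only a single swap: if $d_T(y_k)=2$, take any tree edge $vy_i$ with $v\in X\setminus X'$ and $i<k$ and replace it by $vy_k$, which is an edge of $G$ since $v>x_{k+1}$ implies $l(v)\ge l(x_{k+1})=y_k$. Your plan to hang all of $G-(X'\cup Y')$ off $y_k$ is stronger than needed and is not justified as a MIST-preserving operation; the one-edge swap is all that is required.
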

\vspace*{-.5 cm}
\begin{proof}
We first show $x_iy_i,y_ix_{i+1} \in E(G)$ for all $1 \le i \le k$. Suppose there exists $1 \le i \le k$ such that $x_iy_i \not \in E(G)$. Let $l(x_i) =y_j$ and $l(y_i) =x_l$. As both $x_i$ and $y_i$ are type 1, we have $y_j \geq y_i$ and $x_l > x_i$. As $(<_X,<_Y)$ is a strong ordering, we have $x_iy_i \in E(G)$, a contradiction. Thus we may assume $x_iy_i \in E(G)$. Furthermore as $y_i$ is type 1, we have $x_iy_i,y_ix_{i+1} \in E(G)$ for all $1 \le i \le k$.
	
Suppose $X =  X'$. Note that as $(<_X,<_Y)$ is a strong ordering of $G$, we have for all $x \in X$, $l(x) \le l(x_{k+1}) = y_k$. As we assumed $G$ is connected, we have that $Y = Y'$ as well. Note that this implies that $G$ has the Hamiltonian path $x_1y_1x_2 \ldots x_ky_kx_{k+1}$ which is a MIST.  So, we may assume that $X \setminus X' \neq \emptyset$.

Now, let $T^*$ be a MIST of $G$.  If $x_1$ and $x_{k+1}$ are pendant in $T^*$ and degree of $S(x_{k+1})$ is at least $3$ in $T^*$, then we are done. Suppose otherwise, and we modify $T^*$ in the following way. We first remove all edges of $T^*$ incident with the vertices of $X'$ and then add edges  $x_1 y_1, y_1 x_2, x_2 y_2, \ldots, x_k y_k \text{~and~} y_k x_{k+1}$ to obtain a new graph $T$. Note that as $N(X') = Y'$, $T$ is connected.

First suppose $T$ contains no cycle. Note that $T$ is a spanning tree of $G$. If $d_{T}(y_k) =2$, then as $N(X') = Y'$ we can choose an edge $v y_i (i<k)$ in $T$ such that $v \in X \setminus X'$. Since the strong ordering $(<_X,<_Y)$ of the vertices of $G$ satisfies property $2$, we have $vy_k \in E(G)$. So we can further modify $T$ by removing the edge $vy_i$ and replacing with the edge $vy_k.$ We see that
\begin{align*}
i(T^*) &= i_{T^*}(X') + i_{T^*}(X \setminus X') + i_{T^*}(Y') + i_{T^*}(Y \setminus Y')\\
	 & \leq (k-1) +  i_{T^*}(X \setminus X') + k + i_{T^*}(Y \setminus Y')= i(T).
\end{align*}

So, we have $i(T^*) \leq i(T)$. Since $T$ is a spanning tree and $T^*$ is a MIST of $G$, we have that $T$ is also a MIST. Thus, we obtain our desired MIST in which $x_1$ are $x_{k+1}$ are pendant and the support vertex of $x_{k+1}$ is of degree at least $3$.

Now, suppose $T$ contains a cycle $C$. This implies that there exists $v \in X \setminus X'$ such that $vy_i, vy_j  \in E(C)$ with $i<j \leq k$. Now, we modify $T$ by removing the edge $vy_i$. This step reduces degree of $v$ by $1$ while leaving the graph $T$ connected. We repeat this modification until $T$ has no more cycles, thus $T$ will be a spanning tree of $G$. Let us assume that there are  $\alpha$ such vertices which became pendant in this process of updating $T$.
Let  $A \subseteq X \setminus X'$ be the set of $\alpha$ vertices. Note these $\alpha$ vertices were internal in $T^*$.
Suppose $i_{T^*}(X') > k - (\alpha + 1)$. As $N(X') = Y'$, then the subforest of $T^*$ induced by the set $X' \cup Y' \cup A$ would have at least $2(k - \alpha) + (1 + \alpha) + 2\alpha = 2k + 1 + \alpha$ edges. As $2k + 1 + \alpha > \vert X' \cup Y' \cup A \vert - 1$, this contradicts the fact that $T^*$ was a tree. Thus we have $i_{T^*}(X') \le k - (\alpha + 1)$. It follows,
\begin{align*}
	i(T^*) &= i_{T^*}(X') + i_{T^*}(X \setminus X') + i_{T^*}(Y') + i_{T^*}(Y \setminus Y')\\
	&\leq (k-(\alpha +1)) +  i_{T^*}(X \setminus X') + k + i_{T^*}(Y \setminus Y')\\
	&= (k-1) + (i_{T^*}(X \setminus X') - \alpha) + k + i_{T^*}(Y \setminus Y') =i(T).
\end{align*}

Again, we have $i(T^*) \leq i(T)$ which implies that $T$ is also a MIST. If $d_T(y_k) =2$, then we can choose an edge $v y_i (i<k)$ in $T$, such that $v \in X \setminus X'$. Since the strong ordering $(<_X,<_Y)$ satisfies property $2$, we have  $vy_k \in E(G)$. So we update the tree $T$ by removing the edge $vy_i$ and adding the edge $vy_k$. Thus, we obtain a MIST $T$ in which $x_1$ are $x_{k+1}$ are pendant and support vertex of $x_{k+1}$ is of degree at least $3$.
\end{proof}
\begin{lemma}\label{lemma:case2bpg}
Let $X' = \{x_1, x_2, \ldots, x_k\} \subseteq X,~~Y' = \{y_1, y_2, \ldots, y_k\} \subseteq Y$. Furthermore, suppose each vertex of $X'$ and $Y'$ is of type 1 except $y_{k}$, $l(y_k) = x_k$ and $N(Y') = X'$.
\begin{enumerate}
\item[(a)] If $x_iy_{i+1} \in E(G)$  for all $1 \leq i \leq  (k-1)$, then there exists a MIST $T$ of $G$, in which $y_1$ is pendant.
\item[(b)] If there exists $1 \leq t \leq  (k-1)~\text{such that~} x_ty_{t+1} \notin E(G)$, then there exists a MIST $T$ of $G$, in which $x_1$ and $y_{k}$ are pendant. Moreover; if $Y \setminus Y' \neq \emptyset$, then support vertex of $y_k$ is of degree at least $3$ in $T$.
\end{enumerate}
\end{lemma}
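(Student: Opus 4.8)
The plan is to mirror the exchange argument of Lemma \ref{lemma:case1bpg}, now with the roles of $X$ and $Y$ interchanged and with the balanced sizes $|X'| = |Y'| = k$ driving the split into cases (a) and (b). First I would record the backbone edges. Exactly as in Lemma \ref{lemma:case1bpg}, the type-1 hypotheses together with the strong ordering give $x_i y_i \in E(G)$ for $1 \le i \le k$ and $y_i x_{i+1} \in E(G)$ for $1 \le i \le k-1$; the only new point is $i = k$, where $x_k y_k \in E(G)$ follows directly from $l(y_k) = x_k$. Consequently $G[X' \cup Y']$ has the spanning path $Q = x_1 y_1 x_2 y_2 \cdots x_k y_k$ with endpoints $x_1, y_k$, and under the extra hypothesis of (a) the additional edges $x_i y_{i+1}$ yield the spanning path $Q' = y_1 x_1 y_2 x_2 \cdots y_k x_k$ with endpoint $y_1$. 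Since $N(Y') = X'$ with $|Y'| = |X'| = k$, Observation \ref{Ob:basic} guarantees at least one pendant vertex of $Y'$ in every spanning tree, so $i_T(Y') \le k-1$ always.

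The heart of the proof is the same remove-and-rebuild exchange. Starting from a MIST $T^*$, I would delete every edge of $T^*$ incident to $Y'$ and insert $Q'$ (for (a)) or $Q$ (for (b)); because $N(Y') = X'$, the only edges surviving outside the backbone join $X'$ to $Y \setminus Y'$ or $X\setminus X'$ to $Y\setminus Y'$, so any cycle created must pass through some $v \in Y \setminus Y'$ adjacent to two vertices of $X'$. Deleting one cycle edge at each such $v$ turns $\alpha \ge 0$ of them pendant; an edge count on the induced subforest $T^*[X' \cup Y' \cup A]$, where $A$ is this set of $\alpha$ vertices, shows $i_{T^*}(Y') \le k-1-\alpha$ (each internal $Y'$-vertex and each $A$-vertex sends at least two edge-endpoints into $X'$, while the subforest has at most $2k+\alpha-1$ edges). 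Substituting this into
\[
i(T^*) = i_{T^*}(X') + i_{T^*}(X\setminus X') + i_{T^*}(Y') + i_{T^*}(Y\setminus Y')
\]
and comparing with the constructed tree $T$, whose backbone already makes $k-1$ vertices of each of $X'$ and $Y'$ internal, yields $i(T^*) \le i(T)$, so $T$ is again a MIST.

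The two cases differ precisely in how the second backbone pendant is absorbed. In (b), the hypothesis $x_t y_{t+1} \notin E(G)$ forces $l(x_t) = y_t$ by consecutivity of neighborhoods, hence $N(\{x_1,\dots,x_t\}) = \{y_1,\dots,y_t\}$ and, by Observation \ref{Ob:basic} again, $i_{T^*}(\{x_1,\dots,x_t\}) \le t-1$, so $i_{T^*}(X') \le k-1$; thus using $Q$ (which makes $x_1$ and $y_k$ pendant) costs nothing and the count closes immediately. In (a) there is no such forced $X'$-pendant, so $i_{T^*}(X')$ may equal $k$; here I would first dispose of the degenerate case $l(x_k)=y_k$ (which forces $V = X'\cup Y'$ and makes $Q'$ a Hamiltonian path with $y_1$ pendant), and otherwise reconnect the backbone endpoint $x_k$ to a neighbor in $Y\setminus Y'$ (which exists since then $l(x_k) > y_k$) so that $i_T(X') = k$, restoring the balance. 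For the \emph{moreover} clause of (b), I would show that when $Y \setminus Y' \ne \emptyset$ the support $x_k = l(y_k)$ of $y_k$ satisfies $x_k y_{k+1} \in E(G)$: otherwise monotonicity of $f$ would give $N(Y\setminus Y') \subseteq X\setminus X'$ and hence $N(X') = Y'$, disconnecting $X'\cup Y'$ from $Y\setminus Y'$ and contradicting connectivity; routing an external edge through $x_k$ then raises its degree to at least $3$.

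The step I expect to be the main obstacle is the cycle-breaking bookkeeping. One must verify that the deleted cycle edges can always be chosen incident to $Y\setminus Y'$-vertices rather than to the reconnecting edge $x_k y_{k+1}$ in case (a) or to $x_1$ in case (b), that the result is genuinely a spanning tree, and that the bound $i_{T^*}(Y') \le k-1-\alpha$ is exactly tight enough to offset the $\alpha$ internal vertices lost in breaking cycles. This is where the balanced sizes $|X'|=|Y'|=k$ make the accounting more delicate than in Lemma \ref{lemma:case1bpg}, since, unlike there, we cannot afford any slack in the $Y'$ count.
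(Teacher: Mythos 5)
Your proposal follows essentially the same route as the paper's proof: the same backbone edges $x_iy_i$, $y_ix_{i+1}$ (with $x_ky_k$ from $l(y_k)=x_k$), the same remove-all-$Y'$-incident-edges-and-insert-$Q$-or-$Q'$ exchange, the same cycle-breaking at $Y\setminus Y'$ vertices with the subforest edge count giving $i_{T^*}(Y')\le k-1-\alpha$, the same use of Observation \ref{Ob:basic} on $\{x_1,\ldots,x_t\}$ to force $i_{T^*}(X')\le k-1$ in case (b), the same degenerate Hamiltonian-path case, and the same connectivity argument for the \emph{moreover} clause. The only mechanical difference is that you make $x_k$ internal by adding the edge $x_ky_{k+1}$, whereas the paper rewires an existing tree edge $vx_i$ ($v\in Y\setminus Y'$, $i<k$) to $vx_k$, which avoids creating a new cycle; the cycle-breaking bookkeeping you flag as the main obstacle is exactly the step the paper settles with the edge count you already describe.
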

\vspace*{-.5 cm}
\begin{proof}
	
We first argue that $x_iy_i,y_ix_{i+1} \in E(G)$ for $1 \le i \le k-1$. First assume for some $i$, $x_iy_i \not \in E(G)$. As both $x_i$ and $y_i$ are type 1, we have $x_i < l(y_i)$ and $y_i < l(x_i)$. As $(<_X,<_Y)$ is a strong ordering, we have $x_iy_i \in E(G)$, a contradiction. Furthermore, as $y_i$ is type 1, we have $y_ix_{i+1} \in E(G)$ for all $1 \le i \le k-1$.

Suppose $Y' = Y$. As $N(Y') = X'$, and $G$ is connected we have that $X = X'$ as well. Note then if $x_iy_{i +1} \in E(G)$ for all $1 \le i \le (k-1)$, then $y_1x_1 \ldots x_{k-1}y_kx_k$ is a Hamiltonian path. Otherwise, if there exists  $1 \le t \le (k-1)~\text{such that~} x_ty_{t+1} \notin E(G)$, then the path $x_1y_1,y_1x_2,x_2y_3, \ldots,y_{k-1}x_k$ and $x_ky_k$ gives the desired Hamiltonian path.

 So, we may assume that $Y \setminus Y' \neq \emptyset$ and we will first prove part (a). Let $T^*$ be a MIST of $G$ and suppose $y_1$ is not pendant in $T^*$. Let $T$ be the graph obtained from $T^*$ where we remove all edges incident to the vertices of $Y'$ and add edges $y_1 x_1 , x_1 y_2, y_2 x_2, \ldots, x_{k-1} y_k \text{~and~} y_k x_{k}$. Note as $N(Y') = X'$, $y_1$ is pendant in $T$.

First, suppose $T$ contains no cycles. Note then that $T$ is a spanning tree of $G$. We argue that we may assume $d_T(x_k) \ge 2$. Suppose otherwise, that is, $d_T(x_k) = 1$. Let $v \in Y \setminus Y'$ such that $vx_i (i<k)$.  As the strong ordering of the vertices of $G$ satisfies property 2, we have $vx_k \in E(G)$ as well. So we further modify $T$ by removing the edge $vx_i$ and adding the edge $vx_k$. We see that
\begin{align*}
	i(T^*) &= i_{T^*}(X') + i_{T^*}(X \setminus X') + i_{T^*}(Y') + i_{T^*}(Y \setminus Y')\\
	&\leq k +  i_{T^*}(X \setminus X') + (k-1) + i_{T^*}(Y \setminus Y') = i(T).
\end{align*}

So, we have that $i(T^*) \leq i(T)$. Since $T$ is a spanning tree and $T^*$ is a MIST of $G$, $T$ is also a MIST of $G$.

Next, suppose $T$ is not a tree. We now modify $T$ to remove the cycles. Let $C$ be a cycle of $T$. Note then there is a vertex $v \in Y \setminus Y'$ such that $vx_i, vx_j \in E(C)$ with $i < j \leq k$. We then modify $T$ by removing the edge $vx_i$. Note that the degree of $v$ decreases by 1. We repeat this process until no cycles remain in $T$. Assume that $\alpha$ cycles were removed during this process and thus at most $\alpha$ pendant vertices were created in this process. As $N(Y') = X'$ and $T^*$ is a tree, we have $i_{T^*}(Y') \leq k-\alpha-1$. We see that,
\begin{align*}
	i(T^*) &= i_{T^*}(X') + i_{T^*}(X \setminus X') + i_{T^*}(Y') + i_{T^*}(Y \setminus Y')\\
	&\leq k +  i_{T^*}(X \setminus X') + (k-\alpha -1) + i_{T^*}(Y \setminus Y')\\
	&\leq k +  i_{T^*}(X \setminus X') + (k -1) + (i_{T^*}(Y \setminus Y') -\alpha) = i(T)
\end{align*}

Again, we have that $i(T^*) \leq i(T)$ which implies that $T$ is also a MIST. Hence, part (a) holds.

Next, we prove part (b). Let $T^*$ be a MIST of $G$. If $x_1$ and $y_k$ are pendant in $T^*$ and degree of $S(y_k)$ is at least $3$ in $T^*$, then we are done, so assume otherwise.
Let $T$ be the graph obtained from modifying $T^*$ where we  remove all edges incident on the vertices of $Y'$ and add edges $x_1 y_1 , y_1 x_2, x_2 y_2, \ldots, y_{k-1} x_k$ and $x_k y_{k}$.

First suppose $T$ contains no cycle, then $T$ is a spanning tree of $G$. If $d_T(x_k) \geq 3$, then we are done modifying, so suppose $d_T(x_k) = 2$. As $Y \setminus Y' \neq \emptyset$ and $N(Y') = X'$, there exists an edge $vx_i (i<k)$ in $T$. Since the strong ordering of the vertices of $G$ satisfies property 2, we have $vx_k \in E(G)$. Thus we further modify $T$ where we remove $vx_i$ and add the edge $vx_k$. As we assumed there exists a $1 \le t \le (k -1)$ such that $x_ty_{t+1} \not \in E(G)$, we have $N(\{x_1,x_2,...,x_t\}) = \{y_1,y_2,...,y_t\}$. Let $X'' = \{x_1,x_2, \ldots, x_t\}$ and note $N(X'') = Y'' = \{y_1,y_2,...,y_t\}$. By Observation \ref{Ob:basic}, we see that for any spanning tree of $G$, $X''$ contains at least one pendant vertex. So, $i_{T^*}(X') \leq (k-1)$.We see that
\begin{align*}
i(T^*) &= i_{T^*}(X') + i_{T^*}(X \setminus X') + i_{T^*}(Y') + i_{T^*}(Y \setminus Y')\\
&\leq (k-1) +  i_{T^*}(X \setminus X') + (k-1) + i_{T^*}(Y \setminus Y') = i(T).
\end{align*}

Again, we have $i(T^*) \leq i(T)$ which implies that $T$ is a MIST. Thus, we obtained a MIST $T$ in  which $x_1$ and $y_{k}$ are pendant and support vertex of $y_{k}$ is of degree at least $3$.

Now, suppose $T$ contains a cycle. We now modify $T$ to be a spanning tree. Let $C$ be a cycle contained in $T$. This implies that there is a vertex $v \in Y \setminus Y'$ such that $vx_i, vx_j \in E(C)$ with $i < j \leq k$. We remove the edge $vx_i$ from $T$. This decreases the degree of $v$ by 1. We repeat this process until no cycles remain in $T$. Let $A \subseteq Y \setminus Y'$ with $\vert A\vert = \alpha$ be the set of the vertices made pendant in this process. Suppose $i_{T^*}(Y') \ge (k - \alpha)$. As $N(Y') = X'$, the subgraph of $T^*$ induced by $X' \cup Y' \cup A$ has at least $(2 k- \alpha) + k + 2\alpha = 2k + \alpha$ edges. As $\vert X' \cup Y' \cup A\vert = 2k + \alpha$, this contradicts the fact that $T^*$ is a tree. Thus we may assume $i_{T^*}(Y') \le k - \alpha - 1$. As before, we may assume $d_T(x_k) \ge 3$. It follows,
\begin{align*}
	i(T^*) &= i_{T^*}(X') + i_{T^*}(X \setminus X') + i_{T^*}(Y') + i_{T^*}(Y \setminus Y')\\
	&\leq (k-1) +  i_{T^*}(X \setminus X') + (k-\alpha -1) + i_{T^*}(Y \setminus Y')\\
	&\leq (k-1)
	+  i_{T^*}(X \setminus X') + (k -1) + (i_{T^*}(Y \setminus Y') -\alpha)= i(T)
\end{align*}

This implies that $T$ is also a MIST. Thus, we obtained a MIST $T$ in  which $x_1$ and $y_{k}$ are pendant and support vertex of $y_{k}$ is of degree at least $3$.
\end{proof}
We state similar results when the vertices are of type $2$. By symmetry, the proofs of Lemmas \ref{lemma:case3bpg} and \ref{lemma:case4bpg} follow from Lemmas \ref{lemma:case1bpg} and \ref{lemma:case2bpg}.
\begin{lemma}\label{lemma:case3bpg}
Let $X' = \{x_1, x_2, \ldots, x_k\} \subseteq X,~~Y' = \{y_1, y_2, \ldots, y_k, y_{k+1}\} \subseteq Y$. Furthermore, suppose each vertex of $X'$ and $Y'$ is of type 2 except $y_{k+1}$, $l(y_{k+1}) = x_k$ and $N(Y') = X'$. Then there exists a MIST $T$ of $G$, in which $y_1$ are $y_{k+1}$ are pendant. Moreover; if $Y \setminus Y' \neq \emptyset$, then support vertex of $y_{k+1}$ is of degree at least $3$ in $T$.
\end{lemma}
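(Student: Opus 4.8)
The plan is to obtain Lemma \ref{lemma:case3bpg} from Lemma \ref{lemma:case1bpg} by exploiting the left--right duality of a strong ordering, rather than repeating the case analysis. The first step is to check that interchanging the two sides of the bipartition preserves all the structure used in the proof of Lemma \ref{lemma:case1bpg}. If $(<_X,<_Y)$ is a strong ordering of $G=(X,Y,E)$, then regarding $G$ as the bipartite graph $(Y,X,E)$ equipped with the ordering $(<_Y,<_X)$ again yields a strong ordering: the defining implication ``$a<a'$ and $b'<b$ imply $ab',a'b\in E$'' is symmetric in the two sides, so it survives the swap. Consequently, by Lemma \ref{lemma:strongbpg}, the monotonicity of first and last neighbors (property 2) — which is exactly the ingredient invoked in Lemma \ref{lemma:case1bpg} to produce edges such as $vy_k$ — continues to hold after the swap.

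The second step is to match the definitions of type under this duality. Comparing the four defining inequalities, the condition ``$x_i$ with $l(x_i)=y_j$ is type 1 if $j\ge i$'' becomes, after relabelling the vertex as lying in the first side $Y$, precisely the condition ``$y_i$ with $l(y_i)=x_j$ is type 2 if $j\ge i$,'' and likewise ``$y_i$ type 1 ($j\ge i+1$)'' becomes ``$x_i$ type 2 ($j\ge i+1$).'' Hence the swap that interchanges $X$ and $Y$ sends type-1 vertices to type-2 vertices and vice versa, and in particular sends the exceptional, non-type-1 vertex $x_{k+1}$ (with $l(x_{k+1})=y_k$) of Lemma \ref{lemma:case1bpg} to the exceptional, non-type-2 vertex $y_{k+1}$ (with $l(y_{k+1})=x_k$) of Lemma \ref{lemma:case3bpg}.

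Given these two observations, I would simply transport the hypotheses of Lemma \ref{lemma:case3bpg} across the duality. The set $Y'=\{y_1,\dots,y_k,y_{k+1}\}$ of type-2 vertices with $N(Y')=X'$ corresponds, under the swap, to a set of size $k+1$ of type-1 vertices whose neighborhood is the opposite side of size $k$ — exactly the configuration of Lemma \ref{lemma:case1bpg}. Since being a spanning tree, being a MIST, and being pendant, a support vertex, or a vertex of degree at least $3$ are all notions intrinsic to $G$ and independent of which side we name $X$, the conclusion of Lemma \ref{lemma:case1bpg} translates verbatim: there is a MIST $T$ of $G$ in which $y_1$ and $y_{k+1}$ are pendant, and whenever $Y\setminus Y'\ne\emptyset$ (the image of $X\setminus X'\ne\emptyset$) the support vertex of $y_{k+1}$ has degree at least $3$ in $T$.

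The step I expect to require the most care — the main obstacle in an otherwise mechanical argument — is the bookkeeping in the second paragraph: one must confirm that the indexing in Lemma \ref{lemma:case1bpg}, where $X'$ is the larger side (of size $k+1$) and $Y'$ has size $k$, lines up correctly with the indexing of Lemma \ref{lemma:case3bpg}, where $Y'$ is the larger side, and that the exceptional vertex together with its last-neighbor condition is carried to the right place by the duality. Once the type definitions are seen to be exchanged by the swap and these index roles are matched, no further work is needed, since the remainder is the identical cycle-removal and internal-vertex-counting analysis already carried out for Lemma \ref{lemma:case1bpg}.
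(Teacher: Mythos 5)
Your proposal is correct and takes essentially the same route as the paper, which disposes of Lemma \ref{lemma:case3bpg} in one line by stating that it follows from Lemma \ref{lemma:case1bpg} ``by symmetry.'' Your write-up simply makes that symmetry explicit --- verifying that the strong-ordering property survives swapping $X$ and $Y$, that the swap exchanges type-1 and type-2 vertices, and that the exceptional vertex and index roles line up --- which is exactly the justification the paper leaves implicit.
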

\begin{lemma}\label{lemma:case4bpg}
Let $X' = \{x_1, x_2, \ldots, x_k\} \subseteq X,~~Y' = \{y_1, y_2, \ldots, y_k\} \subseteq Y$. Furthermore, suppose each vertex of $X'$ and $Y'$ is of type 2 except $x_{k}$, $l(x_{k}) = y_k$ and $N(X') = Y'$.
\begin{enumerate}
\item[(a)] If $y_i x_{i+1} \in E(G) ~\forall ~ 1 \leq i \leq  (k-1)$, then there exists a MIST $T$ of $G$, in which $x_1$ is pendant.
\item[(b)] If $\exists ~ 1 \leq t \leq  (k-1)~\text{such that~} y_t x_{t+1} \notin E(G)$, then there exists a MIST $T$ of $G$, in which $y_1$ and $x_{k}$ are pendant. Moreover; if $X \setminus X' \neq \emptyset$, then support vertex of $x_{k}$ is of degree at least $3$ in $T$.
\end{enumerate}
\end{lemma}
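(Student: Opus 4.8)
The plan is to deduce Lemma \ref{lemma:case4bpg} from Lemma \ref{lemma:case2bpg} via the natural symmetry that swaps the two sides of the bipartition. First I would observe that if $(<_X,<_Y)$ is a strong ordering of $G=(X,Y,E)$, then interchanging the roles of the two partite sets — regarding $Y$ as the ``first'' side with ordering $<_Y$ and $X$ as the ``second'' side with ordering $<_X$ — yields a strong ordering of the same graph $G^\ast=(Y,X,E)$. This follows by relabeling the defining inequality of a strong ordering, so that every statement proved about $G$ with ordering $(<_X,<_Y)$ applies verbatim to $G^\ast$ with ordering $(<_Y,<_X)$. Since $G$ and $G^\ast$ are the same abstract graph, their spanning trees coincide and a MIST of one is a MIST of the other.

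The crucial point is how this swap acts on the vertex types. Writing $x_i^\ast\in X^\ast$ for the image of $y_i\in Y$ and $y_i^\ast\in Y^\ast$ for the image of $x_i\in X$, and noting that last neighbors are preserved (only their labels change, $l(x_i)=y_j$ becoming $l(y_i^\ast)=x_j^\ast$), I would check against the type definitions. Because the type-1 and type-2 thresholds are interchanged between the $X$-side ($j\ge i$ versus $j\ge i+1$) and the $Y$-side ($j\ge i+1$ versus $j\ge i$), the swap sends type-2 vertices of $G$ to type-1 vertices of $G^\ast$ and type-1 vertices of $G$ to type-2 vertices of $G^\ast$. In particular the exceptional vertex $x_k$ with $l(x_k)=y_k$ of Lemma \ref{lemma:case4bpg} becomes the exceptional vertex $y_k^\ast$ with $l(y_k^\ast)=x_k^\ast$ of Lemma \ref{lemma:case2bpg}.

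With these two facts in hand the translation is mechanical. The hypotheses of Lemma \ref{lemma:case4bpg} on $G$ (all of $X',Y'$ type 2 except $x_k$, $l(x_k)=y_k$, $N(X')=Y'$) become exactly the hypotheses of Lemma \ref{lemma:case2bpg} on $G^\ast$ (all type 1 except $y_k^\ast$, $l(y_k^\ast)=x_k^\ast$, $N(Y'^\ast)=X'^\ast$); the edge condition $y_i x_{i+1}\in E$ becomes $x_i^\ast y_{i+1}^\ast\in E$; and $X\setminus X'$ becomes $Y^\ast\setminus Y'^\ast$. I would then apply Lemma \ref{lemma:case2bpg}(a) and (b) to $G^\ast$ and read the conclusions back through the relabeling ($x_1\leftrightarrow y_1^\ast$, $y_1\leftrightarrow x_1^\ast$, $x_k\leftrightarrow y_k^\ast$, support vertices corresponding), obtaining precisely parts (a) and (b) of Lemma \ref{lemma:case4bpg}.

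I expect the only real obstacle to be bookkeeping: one must confirm that the swap genuinely interchanges the two types — this is the single substantive observation, and it is exactly what makes ``type 1'' and ``type 2'' the correct mirror-image notions — and that each indexed object in the statement (the exceptional vertex, the edge condition, which endpoints become pendant, and the side-condition on $X\setminus X'$) maps to its counterpart without an off-by-one error. Once the type interchange is verified, no new spanning-tree manipulation is required, as all of it is inherited from the proofs of Lemmas \ref{lemma:case1bpg} and \ref{lemma:case2bpg}.
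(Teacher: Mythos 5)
Your proposal is correct and takes essentially the same route as the paper: the paper proves Lemma \ref{lemma:case4bpg} by exactly this appeal to symmetry with Lemma \ref{lemma:case2bpg}, stating it in one line, and your argument simply carries that symmetry out in full (the swap of partite sets preserving strong orderings, the interchange of type-1 and type-2 vertices, and the hypothesis-by-hypothesis translation). Your verification that the type thresholds mirror correctly is the substantive check the paper leaves implicit, and it is done correctly here.
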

Next, we propose an algorithm to find a MIST of $G$ based on the Lemmas \ref{lemma:case1bpg}, \ref{lemma:case2bpg}, \ref{lemma:case3bpg} and \ref{lemma:case4bpg}. In our algorithm, we first find a vertex $u$ such that it is a pendant vertex in some MIST $T$ of $G$ and the degree of support vertex of $u$ in $T$ is at least $3$. Now, if we remove $u$ from $G$ and call the remaining graph $G'$, then we see that the number of internal vertices in a MIST of $G$ is same as the number of internal vertices in a MIST of $G'$. Note that we can easily construct a MIST of $G$ from a MIST of $G'$ by adding the pendant vertex $u$ to the corresponding support vertex. So, after finding the vertex $u$, the problem is reduced to finding MIST of $G \setminus \{u\}$, say $G'$. We continue doing the same until no such vertex $u$ exists and then the resultant graph has a Hamiltonian path.

In our algorithm, we visit the vertices alternatively from the partitions $X$ and $Y$. We consider two special orderings $(x_1, y_1, x_2, y_2,...)$ and $(y_1, x_1, y_2, x_2,...)$ of $V(G)$ which we call $\alpha$ and $\beta$, respectively.
Below, we describe the method to find a vertex $u$ which is pendant in some MIST $T$ of $G$ and $d_{T}(S(u))$ is at least $3$.

We first visit the vertices of $G$ in the ordering $\alpha$ and search for the first vertex, which is not of type 1. Let $u$ be such a vertex. If $u \in X$ or $u \in Y$ and the conditions of part (b) of Lemma \ref{lemma:case2bpg} are satisfied, then there exists a MIST $T$ of $G$ in which $u$ is a pendant vertex and the degree of support vertex of $u$ in $T$ is at least $3$. So, we remove $u$ from $G$ and find a MIST $T'$ of $G \setminus \{u\}$. Later, we obtain a MIST of $G$ by adding $u$ to $T'$. But, if $u \in Y$, say $u =y_k$ and conditions of part (a) of Lemma \ref{lemma:case2bpg} are satisfied, then there exists a MIST $T$ of $G$ in which $y_1$ is a pendant vertex. In this case, we start visiting the vertices of $G$ in the ordering $\beta$, starting from $y_1$. At this step, we do not maintain any information from $\alpha$ search.

Now, let $u$ be the first vertex not of type 2 in the ordering $\beta$. If $u \in Y$ or $u \in X$ and the conditions of part (b) of Lemma \ref{lemma:case4bpg} are satisfied, then there exists a MIST $T$ of $G$ in which $u$ is a pendant vertex and the degree of support vertex of $u$ in $T$ is at least $3$. So, we remove $u$ from $G$ and find a MIST $T'$ of $G \setminus \{u\}$. Later, we obtain a MIST of $G$ by adding $u$ to $T'$. Here, if $u \in X$ and conditions of part (a) of Lemma \ref{lemma:case4bpg} are satisfied, then there exists a MIST $T$ of $G$ in which $x_1$ is a pendant vertex. But, we have already explored this possibility while visiting the vertices of $G$ in the ordering $\alpha$. So, we do not get such a vertex $u$.  To see this, suppose that we get such a vertex $u$. Then, $u=x_t$ for some $t$, where $t > k$. Now, part (a) of Lemma \ref{lemma:case4bpg} tells that $y_i x_{i+1} \in E(G)$ for all $1 \le i \le (t-1)$ implying that $y_kx_{k+1} \in E(G)$. But, while visiting the vertices in the ordering $\alpha$, we got a vertex $y_k$ satisfying $l(y_k) = x_k$, so $y_k x_{k+1 } \notin E(G)$, a contradiction.

The detailed procedure for computing a MIST of a bipartite permutation graph is presented in Algorithm \ref{algo:bpgMIST}. Algorithm \ref{algo:bpgMIST} either finds a vertex which is not of type 1 or a vertex which is not of type 2. When such a vertex $u$ is found, we call $u$ as an \textit{encountered} vertex. All the encountered vertices are found while executing the steps written in lines 4, 11, 17, 22, 31 or 39 of Algorithm \ref{algo:bpgMIST}. We see that the algorithm returns a spanning tree $T$ of $G$. Before proving the correctness of the Algorithm \ref{algo:bpgMIST}, we state a necessary lemma.

%

\begin{algorithm}
\caption{\label{algo:bpgMIST}\textbf{Algorithm for finding a MIST of a bipartite permutation graph G}}
\small
 \textbf{Input:}{ A bipartite permutation graph $G$ and a strong ordering $(<_X, <_Y) = (x_1,x_2, \ldots, x_{n_1}, y_1, y_2, \ldots, y_{n_2})$ of $V(G)$}.
  \textbf{Output:}{ A MIST $T$ of $G$.}\\
 \setcounter{AlgoLine}{0}
\nl $V(T)=X \cup Y, E(T) = \emptyset, t=0$; $flag =1$;\\
\nl $\alpha = (x_1, y_1, x_2, y_2,...)$ and $\beta = (y_1, x_1, y_2, x_2,...)$;\\
\nl Visit the vertices of $V(G)$ in the ordering $\alpha$; \\
\nl Let $u$ be the first vertex with minimum index in the ordering $\alpha$ which is not of type 1;\\
\nl \While{$flag ==1$}{
\nl \If{$u \in X$}{
\nl         Let $u = x_{k+1}$ for some $k$;\\
\nl           \If{$k+1 \neq n_1$}{
\nl             $t=t+1$; rename $x_{k+1}$ as $a_t$;  $E(T) = E(T) \cup \{y_k a_t\}$;\\
\nl              Rename $x_i$ as $x_{i-1}$ for every $k+2 \leq i \leq n_1 $; $n_1=n_1 - 1$;\\
\nl         Continue looking for a next vertex which is not of type 1 in the ordering $\alpha$,
            call it $u$;}
\nl         \Else{
 \nl         $E(T) = E(T) \cup \{x_1 y_1, y_1 x_2, x_2 y_2, \ldots, x_k y_k, y_k x_{k+1} \}$; return $T$;
        }
   }
\nl \Else{
\nl     Let $u = y_k$ for some $k$;\\
\nl     \If{$x_iy_{i+1} \in E(G) ~\forall ~ 1 \leq i \leq (k-1)$}{
\nl        Find a vertex which is not of type 2 in the ordering $\beta$ starting from $y_1$, call it $u$;  $flag = 2$;
       }
\nl      \Else{
\nl          \If{$k \neq n_2$}{
\nl         $t=t+1$; rename $y_k$ as $a_t$; $E(T) = E(T) \cup \{x_k a_t\}$;\\
\nl            Rename $y_i$ as $y_{i-1}$ for every $k+1 \leq i \leq n_2 $;  $n_2=n_2 - 1$;\\
\nl         Continue looking for a next vertex which is not of type 1 in the ordering $\alpha$, call it $u$;
}
\nl         \Else{
\nl            $E(T) = E(T) \cup \{x_1 y_1, y_1 x_2, x_2 y_2, \ldots,  y_{k-1}
                 x_k, x_k y_k \}$; return $T$;
            }
          }
       }
  }
\nl \While{$flag ==2$}{
\nl  \If{$u \in Y$}{
\nl     Let $u = y_{k+1}$ for some $k$; \\
\nl     \If{$k+1 \neq n_2$}{
\nl         $t=t+1$; rename $y_{k+1}$ as $a_t$; $E(T) = E(T) \cup  \{x_k a_t\}$;\\
\nl         Rename $y_i$ as $y_{i-1}$ for every $k+2 \leq i \leq n_2 $;  $n_2=n_2 - 1$;\\
\nl  Continue looking for a next vertex which is not of type 2 in the ordering $\beta$, call it $u$;
}
\nl     \Else{
\nl       $E(T) = E(T) \cup \{y_1 x_1, x_1 y_2, y_2 x_2, \ldots, y_k x_k, x_k y_{k+1} \}$; return $T$;

        }
    }
\nl     \Else {
\nl               Let $u = x_k$ for some $k$;\\
 \nl                \If{$k \neq n_1$}{
 \nl                $t=t+1$; rename $x_k$ as $a_t$;  $E(T) = E(T) \cup \{y_k a_t\}$;\\
\nl            Rename $x_i$ as $x_{i-1}$ for every $k+1 \leq i \leq n_1 $; $n_1=n_1 - 1$;\\
\nl        Continue looking for a next vertex which is not of type 2 in the ordering $\beta$,  call it $u$;
}
 \nl        \Else{
 \nl           $E(T) = E(T) \cup \{y_1 x_1, x_1 y_2, y_2 x_2, \ldots,
            x_{k-1} y_k, y_k x_k \}$;  return $T$;
            }
          }
       }
\end{algorithm}
%
%

\begin{lemma}\label{lemma:basicstep}
Let $G$ be the input bipartite permutation graph for the Algorithm \ref{algo:bpgMIST} and $a_1$ denotes the first encountered vertex in either the $\alpha$ or $\beta$ search. Suppose that $T$ is the spanning tree of $G$ returned by Algorithm \ref{algo:bpgMIST}. Let $X_1 \subseteq X$ be the set of vertices which are visited from $X$ side till $a_1$ and $Y_1 \subseteq Y$ be the set of vertices which are visited from $Y$ side till $a_1$. Then there exists a MIST $T^*$ of $G$ such that $E(T^*[X_1 \cup Y_1]) = E(T[X_1 \cup Y_1])$.
\end{lemma}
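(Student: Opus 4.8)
The plan is to argue by cases according to which branch of Algorithm~\ref{algo:bpgMIST} isolates the first encountered vertex $a_1$, invoking in each case the matching lemma among Lemmas~\ref{lemma:case1bpg}--\ref{lemma:case4bpg}. The point is that the vertices visited before $a_1$, together with $a_1$ itself, form exactly the configuration hypothesized by one of those lemmas, so the lemma supplies a MIST $T^*$ whose edges inside the prefix are completely determined; it then remains to check that these determined edges coincide with $E(T[X_1\cup Y_1])$.

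I would first treat the representative branch in which the $\alpha$ search finds $a_1=x_{k+1}\in X$ and attaches it at line~9. Here $X_1=\{x_1,\dots,x_{k+1}\}$ and $Y_1=\{y_1,\dots,y_k\}$, so $X_1\cup Y_1=X'\cup Y'$ in the notation of Lemma~\ref{lemma:case1bpg}. Since $a_1$ is the \emph{first} vertex that is not of type~1, every vertex of $\{x_1,\dots,x_k\}$ and of $Y_1$ is of type~1. Using property~2 of the strong ordering together with $x_k$ being of type~1 gives $y_k\le l(x_k)\le l(x_{k+1})$, while $a_1$ failing to be of type~1 forces $l(x_{k+1})\le y_k$; hence $l(x_{k+1})=y_k$. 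The monotonicity of first and last neighbours, the consecutiveness of neighbourhoods, connectivity of $G$, and the minimality of $a_1$ then yield $N(X')=Y'$. These are precisely the hypotheses of Lemma~\ref{lemma:case1bpg}.

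Lemma~\ref{lemma:case1bpg} then produces a MIST $T^*$ in which the only edges lying inside $X'\cup Y'$ form the spanning path $x_1y_1x_2\cdots x_ky_kx_{k+1}$, because the edge modifications in its proof touch only edges incident to $X\setminus X'$, which lie outside $X'\cup Y'$. To match this against $T$, I would observe that inside $X_1\cup Y_1$ the algorithm contributes the attachment edge $y_ka_1$ of line~9 together with the terminal path edges $x_1y_1,y_1x_2,\dots,x_ky_k$ restricted to the prefix; since the renaming only shifts indices exceeding $k$, every later encountered vertex and every later attachment involves indices outside the frozen prefix and so adds no edge inside $X_1\cup Y_1$. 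Hence $E(T[X_1\cup Y_1])$ is exactly the same path, giving $E(T^*[X_1\cup Y_1])=E(T[X_1\cup Y_1])$. The branch $a_1=y_k$ of line~20 is handled identically via Lemma~\ref{lemma:case2bpg}(b), and the two $\beta$-search branches via Lemmas~\ref{lemma:case3bpg} and~\ref{lemma:case4bpg}(b); when the $\alpha$ search instead switches to $\beta$ at line~17, one uses that no prefix edge was committed before the switch, so $a_1$ is genuinely the first encountered vertex of the $\beta$ search.

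The main obstacle I anticipate is bookkeeping rather than any single hard idea: rigorously extracting the neighbourhood equalities $N(X')=Y'$ (and $N(Y')=X'$) from properties~1 and~2 of the strong ordering and the minimality of $a_1$, confirming that the index renaming never reintroduces an edge inside the frozen prefix, and checking in each lemma that its edge modifications stay outside $X'\cup Y'$ so that the determined prefix path, assembled from the terminal path together with the single attachment edge, matches $E(T[X_1\cup Y_1])$ on the nose.
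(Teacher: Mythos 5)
Your proposal is correct and takes essentially the same route as the paper's own proof: a case analysis on which branch of Algorithm~\ref{algo:bpgMIST} produces $a_1$, a check that the visited prefix $X_1\cup Y_1$ satisfies the hypotheses of the matching lemma among Lemmas~\ref{lemma:case1bpg}--\ref{lemma:case4bpg}, and the observation that the MIST constructed in that lemma's proof agrees with $T$ on the prefix. In fact your write-up supplies bookkeeping (deriving $l(x_{k+1})=y_k$ and $N(X')=Y'$, and checking that renaming and later attachment edges add nothing inside $X_1\cup Y_1$) that the paper asserts without detail.
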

\begin{proof}
We have four cases to consider.

\noindent
\textbf{Case 1}: $a_1 \in X$ and it is not of type 1. Then the  vertex $a_1$ was found when flag = 1 in Algorithm \ref{algo:bpgMIST}, that is, when searching for the first vertex not of type 1. Let $a_1 = x_{k+1}$ for some $k$. Then the sets $X' = \{x_1, x_2, \ldots, x_k, x_{k+1}\} \subseteq X,~~Y' = \{y_1, y_2, \ldots, y_k\} \subseteq Y$ satisfy the hypothesis of Lemma \ref{lemma:case1bpg}.  Thus by Lemma \ref{lemma:case1bpg}, there exists a MIST $T^*$ of $G$ such that $E(T^*[X_1 \cup Y_1]) = \{x_1 y_1, y_1 x_2, x_2 y_2, \ldots, x_k y_k, y_k x_{k+1}\}$. In particular, $E(T^*[X_1 \cup Y_1]) = E(T[X_1 \cup Y_1])$.

\noindent
\textbf{Case 2}: $a_1 \in Y$ and it is not of type 1. Then the vertex $a_1$ was also found when  flag = 1 in the algorithm. Let $a_1 = y_{k}$ for some $k$. Then the sets $X' = \{x_1, x_2, \ldots, x_k\} \subseteq X,~~Y' = \{y_1, y_2, \ldots, y_k\} \subseteq Y$ satisfy the hypothesis of part (b) of Lemma \ref{lemma:case2bpg}. Thus by Lemma \ref{lemma:case2bpg}, there exists a MIST $T^*$ of $G$ such that $E(T'[X_1 \cup Y_1]) = \{x_1 y_1, y_1 x_2, x_2 y_2, \ldots, x_k y_k\} = E(T[X_1 \cup Y_1])$.

By symmetry, the other two cases ($a_1 \in X$ and it is not of type 2; $a_1 \in Y$ and it is not of type 2) follow from Lemmas \ref{lemma:case3bpg} and \ref{lemma:case4bpg}.
Thus there exists a MIST $T^*$ of $G$ such that $E(T^*[X_1 \cup Y_1]) = E(T[X_1 \cup Y_1])$ in all cases.
\end{proof}

Now, we prove the correctness of Algorithm \ref{algo:bpgMIST}.
\begin{theorem}\label{theorem:bpgcorrectness}
Algorithm \ref{algo:bpgMIST} returns a maximum internal spanning tree of $G$.
\end{theorem}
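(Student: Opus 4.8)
The plan is to induct on $n = |V(G)|$, peeling off one encountered vertex at a time and showing that each deletion preserves $Opt(G)$.

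Before the induction I would isolate the single structural fact that drives everything: \emph{if a vertex $u$ is pendant in some MIST $T_0$ of a connected graph $H$ with $d_{T_0}(S(u)) \ge 3$, then $H - u$ is connected and $Opt(H) = Opt(H - u)$}. Connectivity is immediate because $T_0 - u$ is a spanning tree of $H - u$. For the equality, deleting the leaf $u$ from $T_0$ leaves $S(u)$ of degree at least $2$, hence internal, so $i(T_0 - u) = i(T_0) = Opt(H)$ and thus $Opt(H - u) \ge Opt(H)$; conversely, reattaching $u$ to $S(u)$ in any MIST of $H - u$ never lowers the internal count, giving $Opt(H) \ge Opt(H - u)$. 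I would also record that $H - u$, equipped with the strong ordering obtained by deleting $u$ and reindexing (exactly the operation performed in lines 10, 21, 30 and 38 of Algorithm~\ref{algo:bpgMIST}), is again a connected bipartite permutation graph, so the induction hypothesis applies to it.

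For the base case, if the algorithm never renames a vertex as some $a_t$ it must reach one of the terminating branches (lines 13, 24, 33, 41); in each the committed edges form a Hamiltonian path of $G$, which has $n - 2$ internal vertices and is therefore a MIST. For the inductive step, let $a_1$ be the first encountered vertex. The algorithm encounters rather than terminates precisely when $X \setminus X' \neq \emptyset$ or $Y \setminus Y' \neq \emptyset$, so the applicable statement among Lemmas~\ref{lemma:case1bpg}, \ref{lemma:case2bpg}(b), \ref{lemma:case3bpg} and \ref{lemma:case4bpg}(b) yields a MIST of $G$ in which $a_1$ is pendant and $d(S(a_1)) \ge 3$. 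By the structural fact, $Opt(G) = Opt(G - a_1)$ and $G - a_1$ is a connected bipartite permutation graph carrying the reindexed strong ordering. Since the continuation of the search on $G$ after $a_1$ is identical to a fresh run of Algorithm~\ref{algo:bpgMIST} on $G - a_1$, the output obeys $T - a_1 = T'$, where $T'$ is the algorithm's output on $G - a_1$; by the induction hypothesis $i(T') = Opt(G - a_1) = Opt(G)$. A squeeze then finishes it: $i(T) \ge i(T - a_1) = Opt(G)$ because adjoining the pendant $a_1$ cannot decrease the internal count, while $i(T) \le Opt(G)$ since $T$ is a spanning tree of $G$; hence $i(T) = Opt(G)$.

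I expect the main obstacle to be verifying that the post-$a_1$ continuation genuinely equals a fresh run on $G - a_1$, and in particular the handling of the $\alpha$-to-$\beta$ switch in lines 16--17. When the $\alpha$-search meets a vertex $y_k$ that is not of type $1$ but satisfies the hypothesis of Lemma~\ref{lemma:case2bpg}(a), no vertex is deleted and the search instead restarts in the $\beta$ order; I must justify, following the reasoning in the paragraph preceding the algorithm, that the part-(a) case of Lemma~\ref{lemma:case4bpg} can never occur during this $\beta$-search, so that $a_1$ is always governed by one of the four deletion cases and the peeling is well defined. Lemma~\ref{lemma:basicstep} is precisely the tool that certifies the committed prefix edges agree with some global MIST, which is what licenses fixing that prefix and recursing on the remainder.
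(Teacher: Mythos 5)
Your recursive skeleton (peel off a vertex that is pendant in some MIST with support of degree at least $3$, use $Opt(G)=Opt(G-a_1)$, recurse) is sound in outline and genuinely different from the paper's proof, which never recurses on a subgraph: the paper fixes one MIST $T^*$ of $G$ and, by induction on the number of encountered vertices $a_1,\dots,a_p$, repeatedly exchanges edges of $T^*$ so that it agrees with the algorithm's output on the visited sets $\bigcup_j (X_j\cup Y_j)$, using the counting bound of Observation \ref{Ob:basic} to certify that each exchange does not decrease the number of internal vertices. However, your proposal has a genuine gap exactly where you say you ``expect the main obstacle'': the claim that the continuation of the run on $G$ after deleting $a_1$ is identical to a fresh run of Algorithm \ref{algo:bpgMIST} on $G-a_1$ is asserted, not proven, and the hard part of it is \emph{not} the $\alpha$-to-$\beta$ switch you single out. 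The real issue is that deleting $a_1$ and reindexing can change the type of vertices that the run on $G$ has already certified and will never re-examine. Concretely, suppose flag $=1$ and $a_1=x_{k+1}$, so $l(x_{k+1})=y_k$. In $G-x_{k+1}$ the old $x_{k+2}$ becomes the new $x_{k+1}$; if $l(y_k)=x_{k+1}$ held in $G$, then in the reindexed graph $l(y_k)=x_k<x_{k+1}$, so $y_k$ is no longer of type 1, and a fresh run on $G-a_1$ would encounter $y_k$ (possibly even triggering the switch to $\beta$), whereas the continuation skips $y_k$ forever. In that situation your induction hypothesis says nothing about what the continuation produces, and the squeeze argument collapses.

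The gap is fillable, but it needs an argument you never give: whenever the algorithm deletes a vertex rather than terminating (i.e.\ $X\setminus X'\neq\emptyset$, resp.\ $Y\setminus Y'\neq\emptyset$), connectivity of $G$ together with the strong ordering forces every previously visited vertex to retain its type after reindexing. For the case above: if $l(y_k)=x_{k+1}$ and $l(x_{k+1})=y_k$, then property 2 of the strong ordering gives $l(y_j)\le x_{k+1}$ for all $j\le k$ and $l(x_i)\le y_k$ for all $i\le k+1$, so $\{x_1,\dots,x_{k+1}\}\cup\{y_1,\dots,y_k\}$ has no neighbors outside itself; connectivity then forces $X=X'$, and the algorithm terminates with a Hamiltonian path instead of continuing. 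Hence on a continuing branch $l(y_k)\ge x_{k+2}$ and $y_k$ stays type 1. You would have to carry out this verification for all four encounter types, and additionally across a $\beta$-switch, where the fresh run on $G-a_1$ must re-traverse the entire $\alpha$-prefix, switch at the same vertex, and then re-traverse the $\beta$-prefix up to $a_1$. Only with those checks in place does ``continuation $=$ fresh run'' hold, and only then does your (correct) structural fact --- $i(T_0-u)=i(T_0)$ when $u$ is pendant in $T_0$ and $d_{T_0}(S(u))\ge 3$, hence $Opt(G)=Opt(G-a_1)$ --- combine with induction on $n$ to prove Theorem \ref{theorem:bpgcorrectness}.
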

\begin{proof}
Let $T^*$ be a MIST of $G$ and $T$ be the spanning tree of $G$ returned by Algorithm \ref{algo:bpgMIST}. Recall in the execution of Algorithm \ref{algo:bpgMIST}, we either search for a vertex not of type 1 with the ordering $\alpha$ or we search for a vertex not of type 2 with the ordering $\beta$. This is ensured since either we never arrive at line 17 or we arrive at it once and after that flag remains 2 throughout the algorithm. Let $a_1, a_2, \ldots, a_p$ be the sequence of vertices encountered in the execution of Algorithm \ref{algo:bpgMIST}. Let $X_1$ and $Y_1$ denote the set of vertices visited till $a_1$ from $X$ and $Y$ side respectively. For $1 < i < p$, let $X_i$ denotes the set of vertices visited from $X$ side after $a_{i-1}$ and upto $a_i$. Similarly, let $Y_i$ denotes the set of vertices visited from $Y$ side after $a_{i-1}$ and upto $a_i$. Let $X_p$ and $Y_p$ denote the set of all vertices visited after $a_{p-1}$ from $X$ and $Y$ side respectively.

First suppose Algorithm \ref{algo:bpgMIST} is searching for a vertex not of type 1 with the ordering $\alpha$ and it never arrives at line 17. This means that flag is 1 throughout the algorithm. To prove that $T$ is a MIST of $G$, we will prove that $T^*$ can be modified so that it remains a MIST of $G$ and $E(T^*)$ is same as $E(T)$, that is,
\begin{equation}\label{eq}
E(T^*[\bigcup\limits_{j=1}^{p} X_j \cup \bigcup\limits_{j=1}^{p} Y_j]) = E(T[\bigcup\limits_{j=1}^{p} X_j \cup \bigcup\limits_{j=1}^{p} Y_j]).
\end{equation}
 We prove (\ref{eq}) using induction on $p$. If $p=1$, we have $E(T^*[X_1 \cup Y_1]) = E(T[X_1 \cup Y_1])$ due to Lemma \ref{lemma:basicstep}. Hence, (\ref{eq}) is true for $p=1$. Assume that (\ref{eq}) is true for $p=i$.

We now show that (\ref{eq}) is true for $p=i+1$. So, consider vertex $a_{i+1}$ for $i \geq 2$. Two possible cases arise.

\smallskip
\noindent \textbf{Case 1}: $a_{i+1} \in X$.

First suppose $a_j \in X$ for every $1 \leq j \leq i$. Then for $X^* = \bigcup\limits_{j=1}^{i+1} X_j$ and $Y^* = \bigcup\limits_{j=1}^{i+1} Y_j$, we have $\vert X^*\vert = \sum_{j = 1}^{i + 1}\vert X_j\vert$ and $\vert Y^*\vert = \sum_{j = 1}^{i + 1}\vert Y_j\vert = \sum_{j = 1}^{i + 1}(\vert X_j\vert - 1) = \vert X^*\vert - (i+1)$. As $N(X^*) = Y^*,$ by Observation \ref{Ob:basic} we have that the number of pendant vertices from $X^*$ in any spanning tree of $G$ is at least $\vert X^*\vert - \vert Y^*\vert + 1 = i + 2$. Therefore $i_{T^*}(X^*) \le \vert X^*\vert - (i + 2)$.

If $E(T^*[\bigcup\limits_{j=1}^{i+1} X_j \cup \bigcup\limits_{j=1}^{i+1} Y_j]) \neq E(T[\bigcup\limits_{j=1}^{i+1} X_j \cup \bigcup\limits_{j=1}^{i+1} Y_j])$, we remove all edges of $T^*$ who have one end in $\bigcup\limits_{j=1}^{i} (X_j \cup Y_j)$ and the other in $(X_{i+1} \cup Y_{i+1})$ and all edges incident with the vertices of $X_{i+1}$ within $T^*$. We then add all edges from $E(T[X_{i+1} \cup Y_{i+1}])$ and the edge of $T$ which connects $\bigcup\limits_{j=1}^{i} (X_j \cup Y_j)$ to $(X_{i+1} \cup Y_{i+1})$ in $T^*$. If cycles were created in this process, then we can remove those cycles without introducing more pendant vertices using the method discussed in Lemmas \ref{lemma:case1bpg} and \ref{lemma:case2bpg}. Let $T^*_{new}$ denote this updated tree. We have,
\begin{align*}
 	i(T^*) &= i_{T^*}(X^*) + i_{T^*}(X \setminus X^*) + i_{T^*}(Y^*) + i_{T^*}(Y \setminus Y^*)\\
 	&\leq \vert X^* \vert - (i+2)+  i_{T^*}(X \setminus X^*) + \vert Y^* \vert + i_{T^*}(Y \setminus Y^*) = i(T^*_{new})
\end{align*}
Thus $T^*_{new}$ is also a MIST of $G$ such that $$E(T^*_{new}[\bigcup\limits_{j=1}^{i+1} X_j \cup \bigcup\limits_{j=1}^{i+1} Y_j]) = E(T[\bigcup\limits_{j=1}^{i+1} X_j \cup \bigcup\limits_{j=1}^{i+1} Y_j]).$$

Now, suppose $a_j \in Y$ for some  $1 \leq j \leq i$. We choose the largest $j \in \{1,2, \ldots, i\}$ such that $a_j \in Y$. Then for $X^* = \bigcup\limits_{t=j+1}^{i+1} X_t$ and $Y^* = \bigcup\limits_{t=j+1}^{i+1} Y_t$, we have $\vert X^*\vert = \sum_{t = j+1}^{i + 1} \vert X_t\vert$ and $\vert Y^*\vert= \sum_{t = j+1}^{i + 1}\vert Y_t\vert= \vert X_{j+1}\vert+ \sum_{t = j+2}^{i + 1}(\vert X_t\vert-1) = \vert X^* \vert-(i-j)$. As $N(X^*) = Y^*$, by Observation \ref{Ob:basic} we have that the number of pendant vertices from $X^*$ in any spanning tree of $G$ is at least $\vert X^*\vert - \vert Y^*\vert + 1 = i -j+ 1$. Therefore, $i_{T^*}(X^*) \leq \vert X^*\vert - (i-j+1)$.

If $E(T^*[\bigcup\limits_{j=1}^{i+1} X_j \cup \bigcup\limits_{j=1}^{i+1} Y_j]) \neq E(T[\bigcup\limits_{j=1}^{i+1} X_j \cup \bigcup\limits_{j=1}^{i+1} Y_j])$, we remove all edges of $T^*$ whose one end belongs to $\bigcup\limits_{j=1}^{i} (X_j \cup Y_j)$ and another end to $(X_{i+1} \cup Y_{i+1})$, all edges incident on the vertices of $X_{i+1}$ from $T^*$ and add all edges from $E(T[X_{i+1} \cup Y_{i+1}])$ and the edge of $T$ which connects $\bigcup\limits_{j=1}^{i} (X_j \cup Y_j)$ to $(X_{i+1} \cup Y_{i+1})$ in $T^*$. If cycles were created in this process, then we can remove those cycles without introducing more pendant vertices using the method discussed in Lemmas \ref{lemma:case1bpg} and \ref{lemma:case2bpg}. So, we may assume these modifications of $T^*$ do not create any cycles. Let $T^*_{new}$ denotes the updated tree. We have,
$$i(T^*) = i_{T^*}(\bigcup\limits_{t=1}^{j} X_t)+i_{T^*}(X^*) + i_{T^*}(\bigcup\limits_{t=i+2}^{p} X_t) + i_{T^*}(\bigcup\limits_{t=1}^{j} Y_t)+i_{T^*}(Y^*) + i_{T^*}(\bigcup\limits_{t=i+2}^{p} Y_t)$$\\
$$\leq i_{T^*}(\bigcup\limits_{t=1}^{j} X_t) +\vert X^*\vert - (i-j+1) +  i_{T^*}(\bigcup\limits_{t=i+2}^{p} X_t) + i_{T^*}(\bigcup\limits_{t=1}^{j} Y_t)+ \vert Y^*\vert + i_{T^*}(\bigcup\limits_{t=i+2}^{p} Y_t)$$ $= i(T^*_{new}).$\\
Thus, $T^*_{new}$ is also a MIST of $G$ such that $$E(T^*_{new}[\bigcup\limits_{j=1}^{i+1} X_j \cup \bigcup\limits_{j=1}^{i+1} Y_j]) = E(T[\bigcup\limits_{j=1}^{i+1} X_j \cup \bigcup\limits_{j=1}^{i+1} Y_j]).$$
\smallskip
\noindent \textbf{Case 2}: $a_{i+1} \in Y$.

First suppose $a_j \in Y$ for every $1 \leq j \leq i$. Then for $X^* =  \bigcup\limits_{j=1}^{i+1} X_j$ and $Y^* = \bigcup\limits_{j=1}^{i+1} Y_j$, we have $\vert Y^*\vert = \sum_{j = 1}^{i + 1}\vert Y_j\vert$ and $\vert X^*\vert = \sum_{j = 1}^{i + 1}\vert X_j\vert = \vert Y_1\vert + \sum_{j = 2}^{i + 1}(\vert Y_j\vert - 1) = \vert Y^*\vert - i$. As $N(Y^*) = X^*,$ by Observation \ref{Ob:basic} we have that the number of pendant vertices from $Y^*$ in any spanning tree of $G$ is at least $\vert Y^*\vert - \vert X^*\vert + 1 = i + 1$. Therefore $i_{T^*}(Y^*) \le \vert Y^*\vert - (i + 1)$. Here, $a_1 \in Y$, so, let $a_1 = y_k$ for some $k$. As we have assumed that flag is 1, this implies that there exists an index $t,~ 1 \leq t \leq k-1$ such that $x_t y_{t+1} \notin E(G)$. So, for $X'= \{x_1,x_2, \ldots, x_t\}$ and $Y' = \{y_1,y_2, \ldots, y_t\}$, we have $N(X') = Y'$. Now, by Observation \ref{Ob:basic}, we know that the number of pendant vertices within $X'$ in any spanning tree of $G$ is at least $\vert X'\vert -\vert Y'\vert + 1 = 1$. So, $i_{T^*}(X') \leq \vert X'\vert - 1$, implying that $i_{T^*}(X^*) \leq \vert X^*\vert - 1$.

If $E(T^*[\bigcup\limits_{j=1}^{i+1} X_j \cup \bigcup\limits_{j=1}^{i+1} Y_j]) \neq E(T[\bigcup\limits_{j=1}^{i+1} X_j \cup \bigcup\limits_{j=1}^{i+1} Y_j])$, we remove all edges of $T^*$ who have one end in $\bigcup\limits_{j=1}^{i} (X_j \cup Y_j)$ and the other in $(X_{i+1} \cup Y_{i+1})$ and all edges incident with the vertices of $Y_{i+1}$ within $T^*$. We then add all edges from $E(T[X_{i+1} \cup Y_{i+1}])$ and the edge of $T$ which connects $\bigcup\limits_{j=1}^{i} (X_j \cup Y_j)$ to $(X_{i+1} \cup Y_{i+1})$ in $T^*$. As before, if cycles are present, we may further modify $T^*$ to remove these cycles without introducing more pendant vertices. Let $T^*_{new}$ denote this updated tree. We have,
 \begin{align*}
 	i(T^*) &= i_{T^*}(X^*) + i_{T^*}(X \setminus X^*) + i_{T^*}(Y^*) + i_{T^*}(Y \setminus Y^*)\\
 	&\leq \vert X^* \vert - 1 +  i_{T^*}(X \setminus X^*) + \vert Y^*\vert - (i+1) + i_{T^*}(Y \setminus Y^*) = i(T^*_{new})
 \end{align*}
Thus $T^*_{new}$ is also a MIST of $G$ such that $$E(T^*_{new}[\bigcup\limits_{j=1}^{i+1} X_j \cup \bigcup\limits_{j=1}^{i+1} Y_j]) = E(T[\bigcup\limits_{j=1}^{i+1} X_j \cup \bigcup\limits_{j=1}^{i+1} Y_j]).$$

Now, suppose there is some vertex of $X$ side among the vertices $a_j$ for  $1 \leq j \leq i$. We choose the largest $j \in \{1,2, \ldots, i\}$ such that $a_j \in X$. Then for $X^* = \bigcup\limits_{t=j+1}^{i+1} X_t$ and $Y^* = \bigcup\limits_{t=j+1}^{i+1} Y_t$, we have $\vert Y^*\vert = \sum_{t = j+1}^{i + 1}\vert Y_t\vert$ and $\vert X^*\vert = \sum_{t = j+1}^{i + 1}\vert X_t\vert= \vert Y_{j+1}\vert+ \sum_{t = j+2}^{i + 1}(\vert Y_t\vert-1) = \vert Y^*\vert-(i-j)$. As $N(Y^*) = X^*$, by Observation \ref{Ob:basic} we have that the number of pendant vertices from $Y^*$ in any spanning tree of $G$ is at least $\vert Y^*\vert - \vert X^*\vert + 1 = i -j+ 1$. Therefore, $i_{T^*}(Y^*) \leq \vert Y^*\vert - (i-j+1)$.

If $E(T^*[\bigcup\limits_{j=1}^{i+1} X_j \cup \bigcup\limits_{j=1}^{i+1} Y_j]) \neq E(T[\bigcup\limits_{j=1}^{i+1} X_j \cup \bigcup\limits_{j=1}^{i+1} Y_j])$, we remove all edges of $T^*$ who have one end in $\bigcup\limits_{j=1}^{i} (X_j \cup Y_j)$ and the other in $(X_{i+1} \cup Y_{i+1})$ and all edges incident with the vertices of $Y_{i+1}$ within $T^*$. We then add all edges from $E(T[X_{i+1} \cup Y_{i+1}])$ and the edge of $T$ which connects $\bigcup\limits_{j=1}^{i} (X_j \cup Y_j)$ to $(X_{i+1} \cup Y_{i+1})$ in $T^*$. As before, if we created cycles with this modification, we remove them with the same method used in Lemmas \ref{lemma:case1bpg} and \ref{lemma:case2bpg}. Let $T^*_{new}$ denote this updated tree. We have,

$$i(T^*) = i_{T^*}(\bigcup\limits_{t=1}^{j} X_t)+i_{T^*}(X^*) + i_{T^*}(\bigcup\limits_{t=i+2}^{p} X_t) + i_{T^*}(\bigcup\limits_{t=1}^{j} Y_t)+i_{T^*}(Y^*) + i_{T^*}(\bigcup\limits_{t=i+2}^{p} Y_t)$$\\
$$\leq i_{T^*}(\bigcup\limits_{t=1}^{j} X_t) +\vert X^*\vert  +  i_{T^*}(\bigcup\limits_{t=i+2}^{p} X_t) + i_{T^*}(\bigcup\limits_{t=1}^{j} Y_t)+ \vert Y^*\vert -(i-j+1) + i_{T^*}(\bigcup\limits_{t=i+2}^{p} Y_t)$$ $= i(T^*_{new}).$\\
Thus $T^*_{new}$ is also a MIST of $G$ such that $$E(T^*_{new}[\bigcup\limits_{j=1}^{i+1} X_j \cup \bigcup\limits_{j=1}^{i+1} Y_j]) = E(T[\bigcup\limits_{j=1}^{i+1} X_j \cup \bigcup\limits_{j=1}^{i+1} Y_j]).$$

Hence, (\ref{eq}) is true for $p=i+1$, that is, $$E(T^*[\bigcup\limits_{j=1}^{i+1} X_j \cup \bigcup\limits_{j=1}^{i+1} Y_j]) = E(T[\bigcup\limits_{j=1}^{i+1} X_j \cup \bigcup\limits_{j=1}^{i+1} Y_j]).$$
Thus, we get that $E(T^*[X \cup Y] = E(T[X \cup Y])$ in all cases, when flag is 1.

If algorithm arrives at line 17, then flag changes to 2 and it remains 2 throughout the algorithm. So, it searches vertex not of type 2 in the ordering $\beta$ starting from $y_1$. There will be analogous arguments for this case also, using Lemmas \ref{lemma:case3bpg} and \ref{lemma:case4bpg} instead. For a quick justification why,  with the assumption flag = 1, the above analysis fails if  we encounter a vertex, say $u_1 = y_j$, such that $u_1$ is not type 1 and $x_iy_{i+1} \in E(G)$ for all $1 \le i \le (j-1)$. The analogous failure case for the flag = 2 is, when we encounter a vertex $u_2 = x_k$ that is not of type 2 and $y_ix_{i+1} \in E(G)$ for all $1 \le i \le (k - 1)$. Note that these cases cannot simultaneously occur. Otherwise the analysis is symmetric. Consequently, Algorithm \ref{algo:bpgMIST} returns a maximum internal spanning tree of $G$.
\end{proof}

Now, we discuss the running time of Algorithm \ref{algo:bpgMIST}. Suppose Algorithm \ref{algo:bpgMIST} returns a MIST $T$. Recall that we visit the vertices in one of the orders $\alpha = (x_1, y_1, x_2, y_2, \ldots )$, or $\beta = (y_1, x_1, y_2, x_2, \ldots )$. Furthermore, any  vertex encountered during the execution of the algorithm must be pendant in $T$. As we never visit the same vertex twice, these pendant vertices are found in linear-time. The remaining graph must have a Hamiltonian path, and finding the Hamiltonian path is also linear-time in our algorithm. So, all the steps of Algorithm~\ref{algo:bpgMIST} can be executed in $O(n+m)$ time. Hence we have the following corollary.

\begin{corollary}
A maximum internal spanning tree of a bipartite permutation graph can be computed in linear-time.
\end{corollary}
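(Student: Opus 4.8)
The plan is to separate the statement into its two obligations, correctness and running time, and to dispatch them from different sources. Correctness is already settled: Theorem~\ref{theorem:bpgcorrectness} shows that Algorithm~\ref{algo:bpgMIST} returns a maximum internal spanning tree of $G$. Hence the only thing left to prove is that the algorithm can be implemented to run in $O(n+m)$ time, and that is where I would concentrate the argument.

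First I would account for preprocessing. A strong ordering $(<_X,<_Y)$ of a connected bipartite permutation graph is computable in linear time~\cite{spin}, so it may be taken as input. With the ordering in hand, a single scan of the adjacency lists computes, for every vertex $u$, its first neighbor $f(u)$ and last neighbor $l(u)$; this is correct because, by the adjacency property, each neighborhood is an interval of the ordering, and the scan costs $O(n+m)$. Given $f$ and $l$, deciding whether a vertex is of type 1 or type 2 is a single index comparison, and testing membership such as $x_iy_{i+1}\in E(G)$ reduces to checking $f(x_i)\le y_{i+1}\le l(x_i)$. Thus every predicate evaluated by the algorithm is an $O(1)$ operation.

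Next I would bound the two \textbf{while} loops. The structural fact I would lean on, already established in the proof of Theorem~\ref{theorem:bpgcorrectness}, is that the traversal moves monotonically: the algorithm either stays in the $\alpha$-search (the flag stays $1$) or, after passing line 17 exactly once, switches permanently to the $\beta$-search (the flag becomes $2$). In either regime the cursor into the ordering only advances and every encountered vertex $a_t$ is removed from further consideration and never revisited. Charging each vertex $O(1)$ for its type test, and each encountered vertex an extra $O(1)$ for attaching the single new edge $y_ka_t$ or $x_ka_t$ to $T$ (lines 9, 20, 28, 36), the work summed over the whole traversal is $O(n)$.

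The step I expect to be the main obstacle is verifying that the bookkeeping does not secretly exceed linear time, since two constructs in the pseudocode look dangerous if taken literally. The first is the relabeling ``rename $x_i$ as $x_{i-1}$'' (lines 10, 21, 29, 37): carried out by physically shifting indices it would cost $\Theta(n)$ per encountered vertex, giving $\Theta(n^2)$ overall. I would resolve this by treating the relabeling as purely notational and maintaining the surviving vertices of the strong ordering in a doubly linked list with a single advancing cursor, so that deleting an encountered vertex is $O(1)$. The second is the conjunction test $x_iy_{i+1}\in E(G)$ for all $1\le i\le(k-1)$ in line 16, whose naive evaluation is $\Theta(k)$ and could recur. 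Here I would argue that because the cursor never retreats, the edges $x_iy_{i+1}$ are inspected in strictly increasing order of $i$ across the entire execution and each is examined at most once using the $O(1)$ interval test above; maintaining a single boolean that records whether a ``gap'' edge has yet been seen lets the condition be decided incrementally, so the total cost of all line-16 evaluations is $O(n)$ rather than $O(k)$ per iteration. Once the loop halts, the residual graph admits a Hamiltonian path assembled directly from the consecutive edges $x_1y_1,y_1x_2,\dots$ of the surviving ordering in $O(n)$ time (lines 13, 24, 33, 41). Summing the linear preprocessing, the linear traversal with $O(1)$-amortized bookkeeping, and the linear path assembly yields the claimed $O(n+m)$ bound, which together with Theorem~\ref{theorem:bpgcorrectness} proves the corollary.
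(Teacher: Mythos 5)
Your proposal is correct and takes essentially the same route as the paper: correctness is delegated to Theorem~\ref{theorem:bpgcorrectness}, and linearity is argued from the fact that the $\alpha$/$\beta$ scans advance monotonically, never revisit a vertex, and finish by assembling a Hamiltonian path of the residual graph in linear time. Your account is in fact more careful than the paper's own brief justification, which never addresses the two bottlenecks you isolate (the literal index renaming and the repeated line-16 conjunction test); the linked-list cursor and the incremental gap flag you describe are precisely the bookkeeping needed to make the $O(n+m)$ claim hold for a literal implementation of the pseudocode.
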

%
\section{Bounds for Chain Graphs}
\label{sec:chain}
A bipartite graph $G = (X, Y,E)$ is a \textit{chain graph} if the neighborhoods of the vertices of $X$ form a chain, that is, the vertices of $X$ can be linearly ordered, say  $\{x_1,x_2, \ldots ,x_{n_1}\}$ such that $N(x_1) \subseteq N(x_2) \subseteq \ldots \subseteq N(x_{n_1})$ and $n_1 = \vert X\vert$. If $G=(X,Y,E)$ is a chain graph, then the neighborhoods of the vertices of $Y$ also form a chain. If $n_2 = \vert Y\vert$, an ordering $\alpha=(x_1,x_2, \ldots ,x_{n_1},y_1,y_2, \ldots ,y_{n_2})$ is called a \textit{chain ordering} if  $N(x_1) \subseteq N(x_2) \subseteq \ldots \subseteq N(x_{n_1})$ and $N(y_1) \supseteq N(y_2) \supseteq \ldots \supseteq N(y_{n_2})$. If a vertex $x_{i}$ appears before $x_{j}$ in chain ordering, we write $x_{i}<x_{j}$. Given a chain graph $G$, a chain ordering of $G$ can be computed in linear-time \cite{hegge}. Note that a chain ordering is also a strong ordering. So, every chain graph is also bipartite permutation graph.

 In this section, we will prove the following lower bound for number of internal vertices in a MIST of a chain graph G.
\begin{theorem}\label{thm:chainlowerbound}
For a chain graph $G$, let $P^*$ be an optimal path cover of $G$. Then $Opt(G) \geq \vert E(P^*)\vert - 2$.
\end{theorem}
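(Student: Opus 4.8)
The plan is to prove the bound by exhibiting a single spanning tree $T$ of $G$ with $i(T)\ge \vert E(P^*)\vert-2$; since $Opt(G)$ is the maximum of $i(T)$ over all spanning trees, this suffices. Write $c$ for the number of paths in the optimal path cover $P^*=\{P_1,\dots,P_c\}$. Using that the number of edges plus the number of components equals $n$, we have $\vert E(P^*)\vert=n-c$, and since $i(T)=n-\ell(T)$ where $\ell(T)$ is the number of leaves of $T$, the target $i(T)\ge\vert E(P^*)\vert-2$ is equivalent to producing a spanning tree with $\ell(T)\le c+2$. If $c=1$ then $P^*$ is a Hamiltonian path and $i(T)=n-2=\vert E(P^*)\vert-1$ already (consistent with Theorem~\ref{thm:general upper bound}), so I may assume $c\ge 2$.

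First I would record the structural fact that drives the argument. From the chain ordering with $N(x_1)\subseteq\cdots\subseteq N(x_{n_1})$ and $N(y_1)\supseteq\cdots\supseteq N(y_{n_2})$, together with the connectivity of $G$, every $y\in Y$ lies in some $N(x_i)\subseteq N(x_{n_1})$, so $x_{n_1}$ is adjacent to all of $Y$; symmetrically $y_1$ is adjacent to all of $X$. These two ``dominating'' vertices let me reattach any stray endpoint of a path: an endpoint in $X$ can always be joined to $y_1$, and an endpoint in $Y$ can always be joined to $x_{n_1}$.

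The construction then keeps every edge of $P^*$ and joins the $c$ paths into a ``double star'' centred at the two dominating vertices. Let $C_x$ and $C_y$ be the paths of $P^*$ containing $x_{n_1}$ and $y_1$. For each remaining path $C_i$ I pick one endpoint and connect it to a dominating vertex: add $ay_1$ if $C_i$ has an endpoint $a\in X$, and otherwise (both endpoints in $Y$) add $bx_{n_1}$ for a $Y$-endpoint $b$; both edges exist by the dominating property. Finally, if $C_x\ne C_y$, add $x_{n_1}y_1$ to join the two centres. The added edges realise a spanning tree on the set of paths (a double star with centres $C_x,C_y$), so $T$ is acyclic and spanning, and there are exactly $c-1$ of them.

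For the count I would argue $\ell(T)\le c+2$ directly: every internal vertex of every $P_i$ keeps degree at least $2$, so the leaves of $T$ can only be original path-endpoints or singletons. Each of the $c-2$ non-central paths is attached by a single new edge and hence contributes at most one leaf (its unused endpoint, or itself if it is a singleton), while each central path $C_x,C_y$ has at most two endpoints and so contributes at most two leaves; thus $\ell(T)\le(c-2)+2+2=c+2$. The step needing care, and the main obstacle, is the bookkeeping around the two dominating vertices: one must treat the degenerate configurations where $C_x=C_y$ (a single centre, in which case the bound in fact improves to $\vert E(P^*)\vert-1$) or where $x_{n_1}$ or $y_1$ is itself an endpoint or a singleton of its path, and confirm that in every such case the two central paths jointly still contribute at most four leaves. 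This is precisely where the second unit is lost relative to the cograph bound: when $C_x\ne C_y$ the joining edge $x_{n_1}y_1$ runs between two already-internal vertices and upgrades no endpoint, which is the structural reason a chain graph can be forced down to $\vert E(P^*)\vert-2$.
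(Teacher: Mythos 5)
Your proof is correct, but it takes a genuinely different route from the paper. The paper works with the specific optimal path cover produced by the algorithm of Srikant et al.\ (Algorithm \ref{algo:optpathcover}), whose paths are maximal contiguous; it then joins \emph{consecutive} paths by ``combining edges'' chosen through a case analysis on which side each path starts and ends, and the heart of the argument is showing that the only bad configuration (Subcase 4.2, where the combining edge joins two already-internal vertices) forces the attachment point to be $x_{n_1}$ and hence can occur at most once. You instead work with an \emph{arbitrary} path cover and exploit the two dominating vertices of a connected chain graph ($x_{n_1}$ adjacent to all of $Y$, $y_1$ adjacent to all of $X$), assembling the paths into a double star centred at these two vertices and then counting leaves: at most one per non-central path and at most two per central path, giving $\ell(T)\le c+2$, i.e.\ $i(T)\ge \vert E(P^*)\vert-2$. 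Your dominating-vertex fact is right (connectivity plus the chain ordering gives it immediately), the double-star graph is indeed a spanning tree, and the degenerate cases you flag ($C_x=C_y$, or $x_{n_1}$, $y_1$ being endpoints or singletons) only reduce the leaf count, so the bookkeeping closes. What your approach buys: it is more elementary and self-contained (no appeal to the contiguous/maximal path cover machinery), and it actually proves the slightly stronger statement that $Opt(G)\ge \vert E(P)\vert-2$ for \emph{every} path cover $P$, not just an optimal one. What the paper's approach buys: by pinning the loss of the second unit to a single occurrence of Subcase 4.2, it localizes exactly when $Opt(G)=\vert E(P^*)\vert-2$ rather than $\vert E(P^*)\vert-1$ can happen, which pairs naturally with the tightness examples in Fig.~\ref{fig:chaintightbound}, and its path-joining construction reuses the algorithmic framework developed for bipartite permutation graphs.
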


In order to prove Theorem \ref{thm:chainlowerbound}, we look at  optimal path covers of bipartite permutation graphs. \cite{srik} gave an algorithm to find an optimal path cover of a bipartite permutation graph. Note that this algorithm applies to chain graphs as well. We will recall the algorithm given in \cite{srik}, but first we cover some notations used in the algorithm. A path cover $P^* = \{P_1, P_2, \ldots, P_k\}$ is \textit{contiguous} if it satisfies the following two conditions:
\begin{enumerate}
\item If $x\in X$ is the only vertex in $P_i$ and if $x' < x < x''$, then $x'$ and $x''$ belong to different paths.
\item If $xy$ is an edge in $P_i$ and $x'y'$ is an edge in $P_j$, where $i \neq j$ and $x < x'$, then $y < y'$.
\end{enumerate}

A path $P$ is \textit{contiguous} if it is one of the following forms:
\noindent
 $x_i y_j x_{i+1} y_{j+1} \ldots y_{t-1}x_r$, $x_i y_j x_{i+1} y_{j+1} \ldots y_{t-1}x_ry_t$,
 $y_jx_iy_{j+1}x_{i+1} \ldots x_{r-1}y_tx_r$, or $y_jx_iy_{j+1}x_{i+1} \ldots x_{r-1}y_t$ such that $r \ge i$ and $t \ge j$. Note that every path in a contiguous path cover is contiguous. Let $P$ be a contiguous path which ends with some edge, say $x_p y_q$. If $y_q x_{p+1} \notin E(G)$, then we say that the path $P$ is not extendable on the right. A contiguous path is said to be a \textit{maximal contiguous path} if it is not extendable on the right. An optimal path cover  $P^* = \{P_1, \ldots, P_k\}$ is a \textit{maximum optimal path cover} if each $P_i$ covers the maximum number of vertices in $V(G) \setminus \{P_1 \cup P_2 \cup \ldots P_{i-1}\}.$ According to \cite{srik}, there exists an optimal path cover which is a maximum optimal path cover for any bipartite permutation graph $G$ such that every path in the path cover is a maximal contiguous path.

As a chain graph is an instance of a bipartite permutation graph, we recall the algorithm from \cite{srik} which finds this desired maximum optimal path cover for a chain graph (Algorithm \ref{algo:optpathcover}). From this point, we will refer such a path cover as an optimal path cover only.

\begin{algorithm}[ht!]
 \caption{\label{algo:optpathcover} Algorithm for finding an optimal path cover of $G$}
 \textbf{Input:} {A chain graph $G=(X,Y,E)$ with the ordering of its vertices}\\
  \textbf{Output:} {An optimal path cover $P$ of $G$}
   \setcounter{AlgoLine}{0}

   \nl Mark all vertices in $X$ and $Y$ as not visited; let $P= \emptyset $.

   \nl \While{all vertices of $G$ are not visited}{

   \nl Let $x$ and $y$ be the first vertices in $X$ and $Y$ which are not visited.

   \nl Let $P_x$ and $P_y$ be the maximal contiguous paths starting from $x$ and $y$, respectively.

   \nl $Q$:= Maximum of $P_x$ and $P_y$.

   \nl $P:=P \cup Q$.

   \nl Mark all vertices in $Q$ as visited.}

   \nl Output $P$.
\end{algorithm}
Now, we give the proof of Theorem \ref{thm:chainlowerbound}.

\begin{proof}[Proof of Theorem~{\upshape\ref{thm:chainlowerbound}}]
Let $P^* $ be the optimal path cover obtained from Algorithm \ref{algo:optpathcover}.  A path in $P^*$ is nontrivial if it has at least two vertices.  We may assume that the path components of $P^*$ are ordered with respect to their appearance in Algorithm \ref{algo:optpathcover}. To complete the proof, we will construct a spanning tree by connecting the paths of $P^*$ with edges of $G$. Let $P$ and $Q$ be two consecutive nontrivial path components in $P^*$, and we will now find a sufficient edge connecting $P$ and $Q$. We have four cases to consider.\\
\\
\textbf{Case 1:} \underline{$P$ ends at $X$ side and $Q$ starts from $X$ side}\\
Suppose that $P$ ends at some vertex $x$ and $Q$ starts from some vertex $x'$, where $x < x'$. Let $y$ be the vertex adjacent to $x$ in $P$, then $y \in N(x')$ as $G$ is a chain graph. Here, we consider the edge $yx'$ as the combining edge for path components $P$ and $Q$. We see that $y$ is internal in $P$ and $x'$ is pendant in $Q$. Fig.~\ref{fig:casea} provides an illustration.
\begin{figure}[h!]
 \centering
 \includegraphics[width=6cm, height=2cm]{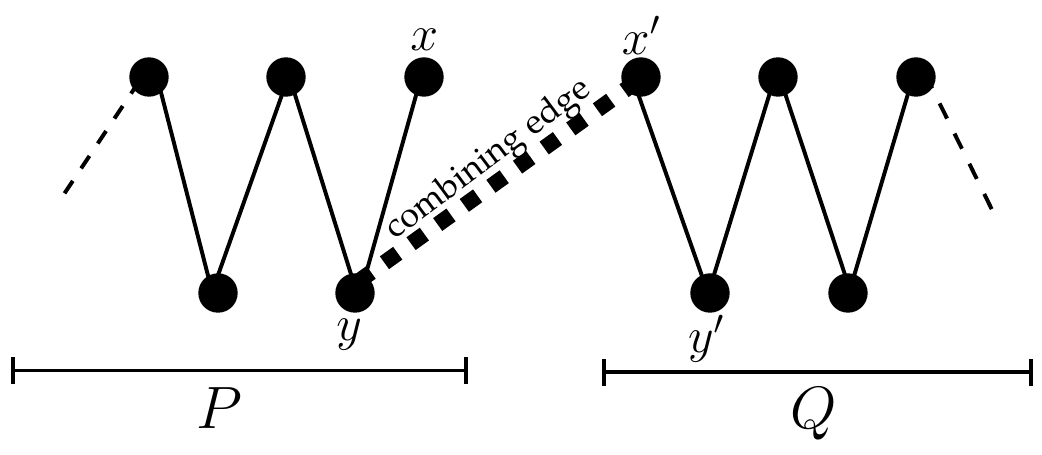}
 \caption{Case 1)}
\label{fig:casea}
\end{figure}

\noindent
\textbf{Case 2:} \underline{$P$ ends at $Y$ side and $Q$ starts from $Y$ side}\\
As $P^*$ was constructed from Algorithm \ref{algo:optpathcover}, every path component in $P^*$ is maximal contiguous. But, in this case, $P$ is extendable on right. So, this case will not arise. Fig.~\ref{fig:caseb} provides an illustration.\\
\begin{figure}[h!]
 \centering
 \includegraphics[width=6cm, height=1.5cm]{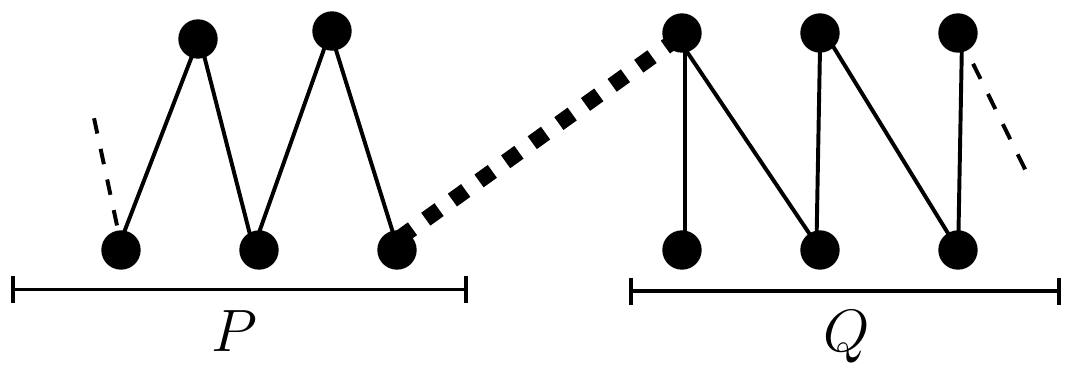}
 \caption{Case 2)}
\label{fig:caseb}
\end{figure}

\noindent
\textbf{Case 3:} \underline{$P$ ends at $Y$ side and $Q$ starts from $X$ side}\\
This case will also not arise. The reason is same as of Case 2. Fig.~\ref{fig:casec} provides an illustration.
\begin{figure}[h!]
\centering
 \includegraphics[width=6cm, height=1.5cm]{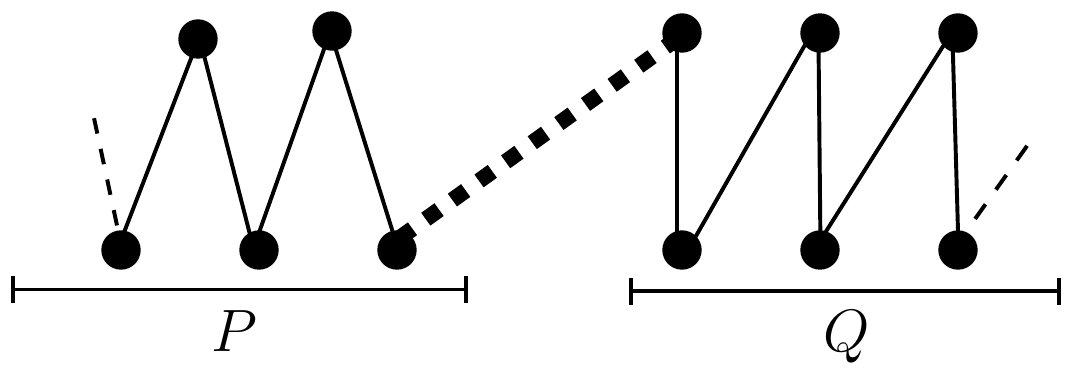}
 \caption{Case 3)}
\label{fig:casec}
\end{figure}

\noindent
\textbf{Case 4:} \underline{$P$ ends at $X$ side and $Q$ starts from $Y$ side}\\
Suppose that $P$ ends at some vertex $x$ and $Q$ starts from some vertex $y = y_j$. As the number of vertices in $G$ is finite, path $Q$ must end at $X$ or $Y$. Now, we consider two subcases depending on $Q$ ending at $X$ side or $Y$ side.

\noindent
\textbf{Subcase 4.1:} \underline{$Q$ ends at $X$ side}\\
 Let $Q = yx_iy_{j+1}\ldots x_ty_kx_{t+1}$. As $G$ is a chain graph, we have that $Q' = x_iyx_{i+1}\ldots y_{k-1}x_{t+1}y_k$ is also a path in $G$. Note that $V(Q) = V(Q')$ and $Q'$ is a maximal contiguous path. We can replace $Q$ with $Q'$ in the path cover $P^*$. Now as $Q'$ starts from $X$, we have reduced to Case 1). Fig.~\ref{fig:cased1} provides an illustration.

\begin{figure}[h!]
 \centering
 \includegraphics[width=7.5cm, height=4cm]{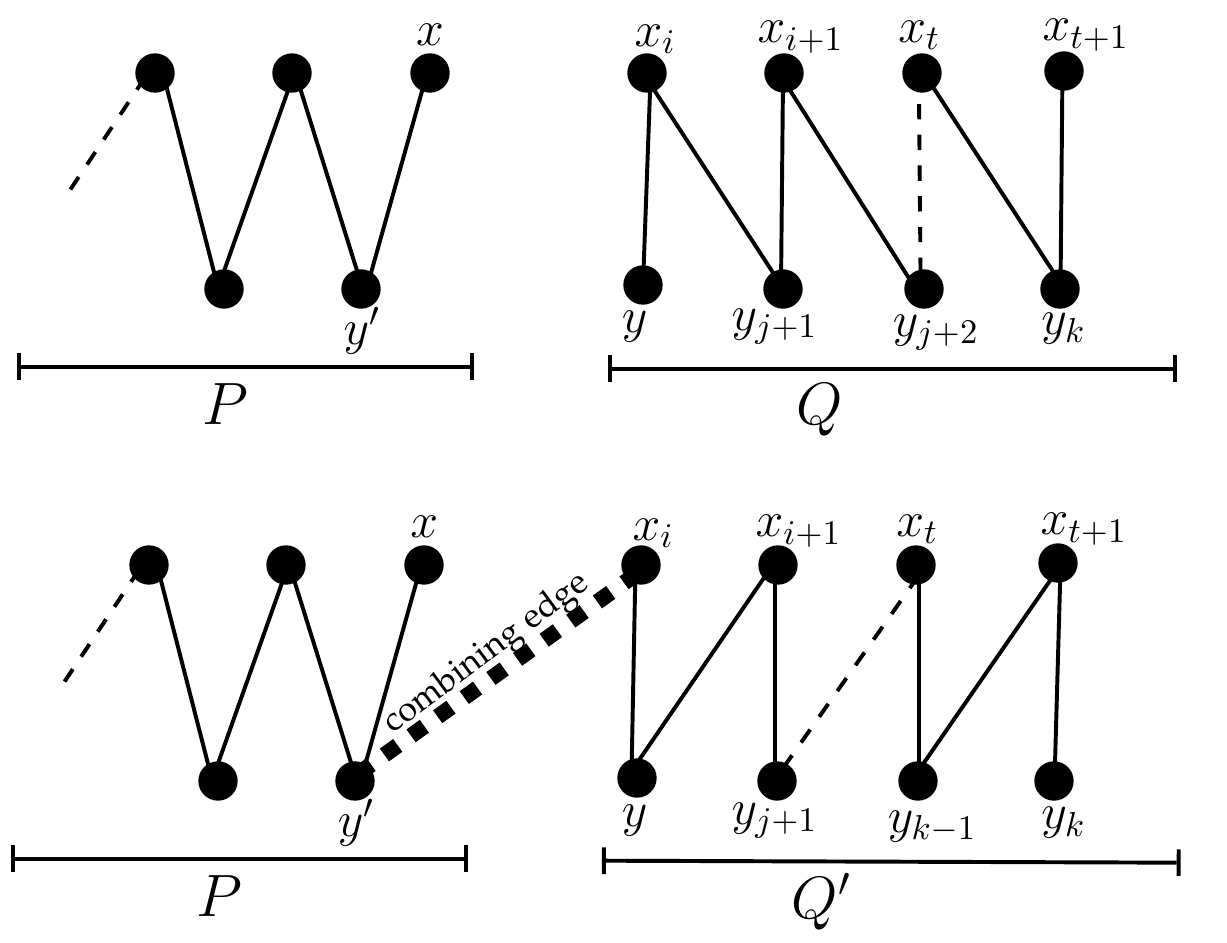}\\
 \caption{Subcase 4.1}
\label{fig:cased1}
\end{figure}

\noindent
\textbf{Subcase 4.2:} \underline{$Q$ ends at $Y$ side}\\
Let $y'$ be the neighbor of $x$ in $P$ and $x'$ be the neighbor of $y$ in $Q$. Since, $x < x'$ and $G$ is a chain graph, edge $y'x' \in E(G)$. Here, we consider the edge $y'x'$ as the combining edge for path components $P$ and $Q$. We see that $y'$ is internal in $P$ and $x'$ is also internal in $Q$. Fig.~\ref{fig:cased2} provides an illustration.

\begin{figure}[h!]
 \centering
 \includegraphics[width=7cm, height=2.5cm]{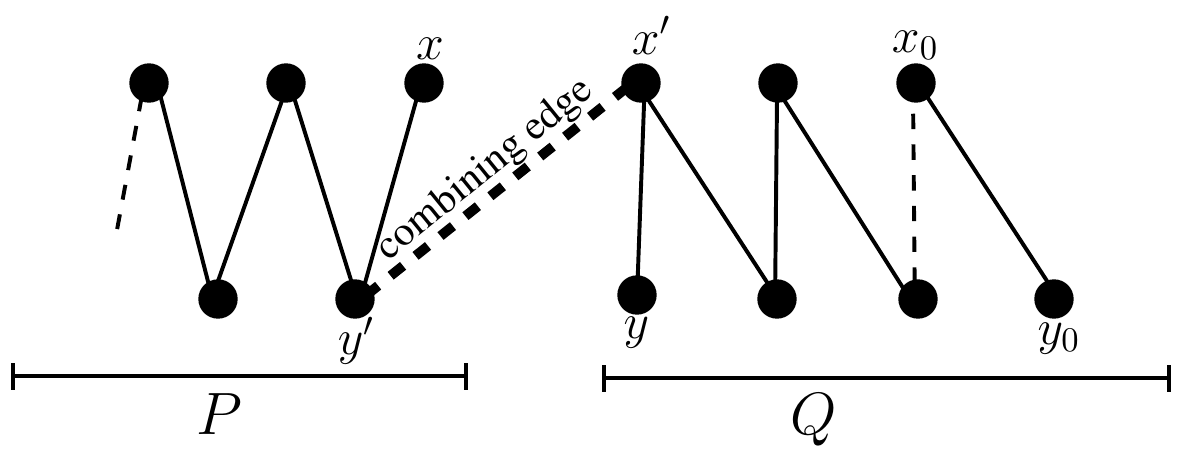}\\
 \caption{Subcase 4.2}
\label{fig:cased2}
\end{figure}

We see that in each possible case, we get a combining edge which connects both the path components $P$ and $Q$. If we connect each consecutive nontrivial path component with the combining edges and connect the remaining single vertex components by an arbitrary edge incident with an internal vertex of a nontrivial path component, we obtain a spanning tree of $G$.

Suppose $P^*$ has $k$ path components $P_1, P_2, \ldots, P_k$. Let us denote number of edges of the component $P_i$ by $e_i$ for every $1 \leq i \leq k$. This implies that $e_1 + e_2 + \ldots + e_k = \vert E(P^*)\vert$. Note that the number of internal vertices in a path with $e_{i}$ edges is $e_{i}-1$.

We now observe the case where $G$ is a graph such that Subcase 4.2 does not arise. Note then every combining edge connects one internal and one pendant vertex from different path components.  So, $i(T) = e_1 -1 + e_2 + e_3 + \ldots + e_k = \vert E(P^*)\vert-1$.
Now, suppose that Subcase 4.2 arises for some consecutive nontrivial paths $P$ and $Q$. Here, $Q$ ends at $Y$ side, say at $y_0$ and let $x_0$ be the neighbor of $y_0$ in $Q$. We claim that $x_0 = x_{n_1}$. If this is not the case then there exists a vertex $x^*$ in $X$ such that $x^* > x_0$ and $x^* \notin V(Q)$. But, since $G$ is a chain graph, we have that $(y_0 , x^*) \in E(G)$ which makes $Q$, a non-maximal path, a contradiction. Thus, $x_0 = x_{n_1}$  which implies that, if $Q{''} \in P^*$  and  appears after $Q$ in Algorithm \ref{algo:optpathcover}, then $Q''$ is a single vertex path component containing a vertex of $Y$. This implies that the Subcase 4.2 appears only once. So, $i(T) = e_1 -1 + e_2 + e_3 + \ldots + e_k - 1 = \vert E(P^*)\vert-2$. Hence, the number of internal vertices in any MIST of $G$ is at least $\vert E(P^*)\vert- 2$.
\end{proof}

Combining Theorem \ref{thm:general upper bound} and Theorem \ref{thm:chainlowerbound}, we can state the following corollary.
\begin{corollary}\label{corollary:chainopt(G)}
For a chain graph $G$, if $P^*$ denotes an optimal path cover then $Opt(G)$ is either $\vert E(P^*)\vert - 1$ or $\vert E(P^*)\vert - 2$.
\end{corollary}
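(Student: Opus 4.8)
The plan is to obtain the result directly by sandwiching $Opt(G)$ between two consecutive integers, using the two bounds already established, so essentially no new combinatorial work is required. First I would invoke Theorem~\ref{thm:general upper bound}, which holds for every graph and hence in particular for a chain graph $G$, to get the upper bound $Opt(G) \le \vert E(P^*)\vert - 1$. Then I would invoke Theorem~\ref{thm:chainlowerbound}, which is specific to chain graphs, to get the matching lower bound $Opt(G) \ge \vert E(P^*)\vert - 2$.

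The one point worth checking is that both inequalities genuinely refer to the same quantity $\vert E(P^*)\vert$. Since an optimal path cover is by definition a path cover with the maximum number of edges, every optimal path cover of $G$ has exactly $\vert E(P^*)\vert$ edges; hence the particular optimal path cover chosen is immaterial, and the two inequalities may be combined directly into
\begin{equation*}
\vert E(P^*)\vert - 2 \le Opt(G) \le \vert E(P^*)\vert - 1.
\end{equation*}

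Finally, because $Opt(G)$ is an integer and the lower and upper bounds are two consecutive integers, $Opt(G)$ must coincide with one of them, that is, $Opt(G) \in \{\vert E(P^*)\vert - 1,\ \vert E(P^*)\vert - 2\}$. I do not expect any real obstacle in this step: all of the difficulty is already absorbed into Theorem~\ref{thm:chainlowerbound}, whose constructive argument assembles a spanning tree from the maximal-contiguous optimal path cover produced by Algorithm~\ref{algo:optpathcover} and carefully tracks the single case (Subcase~4.2) in which a combining edge joins two internal vertices rather than an internal to a pendant one. Once that lower bound and the general upper bound of Theorem~\ref{thm:general upper bound} are both in hand, the corollary is immediate.
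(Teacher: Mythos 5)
Your proposal is correct and takes exactly the same route as the paper, which also obtains the corollary by combining the general upper bound $Opt(G) \le \vert E(P^*)\vert - 1$ of Theorem~\ref{thm:general upper bound} with the chain-graph lower bound $Opt(G) \ge \vert E(P^*)\vert - 2$ of Theorem~\ref{thm:chainlowerbound}. Your additional remark that $\vert E(P^*)\vert$ is the same for every optimal path cover (so the two bounds refer to the same quantity) is a sound point the paper leaves implicit.
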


Now, we give examples of chain graphs which shows that both the bounds (given by Theorem \ref{thm:general upper bound} and Theorem \ref{thm:chainlowerbound}) are tight. In Fig.~\ref{fig:chaintightbound}, $G_1$ and $G_2$ are chain graphs and $T_1$ and $T_2$ are Maximum Internal Spanning Trees of $G_1$ and $G_2$ respectively. We can see that optimal path cover obtained from Algorithm \ref{algo:optpathcover} for the graph $G_1$ is $\{x_1y_1x_2y_2x_3, y_3x_4y_4x_5y_5\}$ which has $8$ edges and its MIST $T_1$ has $6$ internal vertices i.e. $Opt(G_1) = \vert E(P^*)\vert- 2 = 8 - 2 = 6$. Using Observation \ref{Ob:basic}, it can be verified that any MIST of $G_1$ has at least four pendant vertices, two from $X$ side and two from $Y$ side; so, $G_1$ can have at most $6$ internal vertices in its MIST. Hence, $T_1$ is indeed a MIST of $G_1$. In a similar manner, optimal path cover obtained from Algorithm \ref{algo:optpathcover} for the graph $G_2$ is $\{x_1y_1x_2y_2x_3, y_3x_4y_4x_5y_5x_6\}$ which has $9$ edges and its MIST $T_2$ has $8$ internal vertices i.e. $Opt(G_2) = \vert E(P^*)\vert- 1 = 9 - 1 = 8$. \\
\begin{figure}
\centering
 \includegraphics[width=9cm, height=4cm]{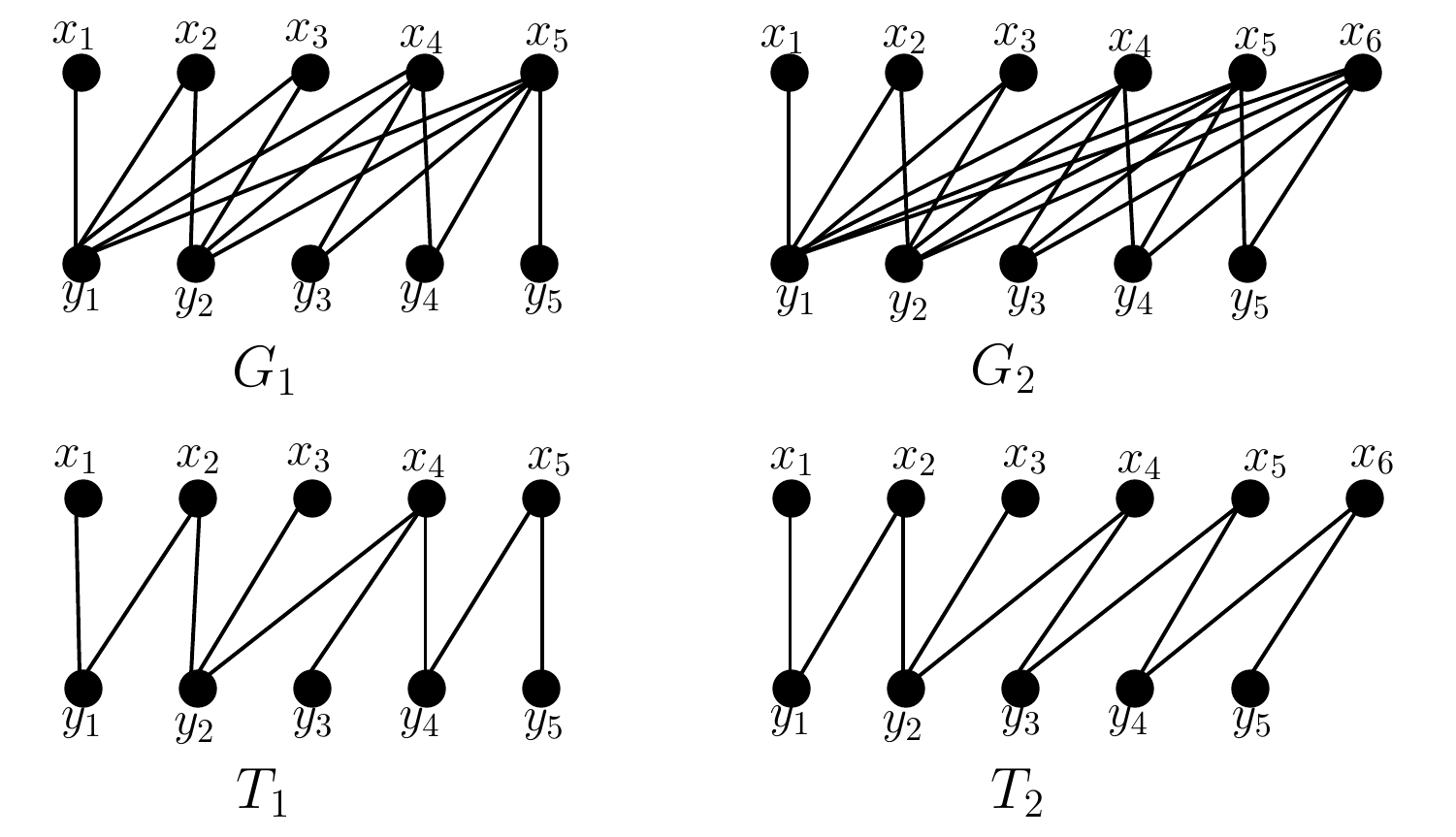}
 \caption{examples showing that bounds are tight}
\label{fig:chaintightbound}
\end{figure}

Corollary \ref{corollary:chainopt(G)} states that $\vert E(P^*)\vert - 2 \le Opt(G) \le \vert E(P^*)\vert - 1$ where $P^*$ is an optimal path cover of a chain graph $G$. We now argue that if $G$ is a bipartite permutation graph, then $Opt(G)$ cannot be lower bounded with value $\vert E(P^*)\vert - k$ for any fixed natural number $k$.  Below, for every natural number $k$, we give a construction of a bipartite permutation graph such that $Opt(G) = \vert E(P^*)\vert - O(5k)$.

For every integer $n = 5k ~ ( k \geq 1)$, we construct a connected bipartite permutation graph $G_n$  with $n$ vertices and $Opt(G_n) = \vert E(P^*)\vert  - O(n) $.  For all $1 \le i \le k$, let $X_i=\{x_1^i, x_2^i\}$ and  $Y_i = \{y_1^i, y_2^i, y_3^i\}$ if $i$ is even and $X_i=\{x_1^i, x_2^i, x_3^i\}$ and $Y_i = \{y_1^i, y_2^i \}$ for odd $i$. Let $V(G_n) = V_1 \cup V_2 \cup \ldots \cup V_k$ where $V_i = X_i \cup Y_i$ for all $1 \le i \le k$. Let $E(G_n) = E_1 \cup E_2 \cup \ldots  \cup E_k \cup E'$ where $E_i = \{xy \vert x \in X_i, y \in Y_i\}$ for each $1 \le i \le k$ and $E'$ is the set of edgs of the form $y_2^ix_1^{i+1}$ if $i$ is odd and $x_2^iy_1^{i +1}$ if $i$ is even for each $1 \le i \le (k-1)$. We see that $G_n$ is a bipartite permutation graph with $n$ vertices and $n+2k-1$ edges. Algorithm \ref{algo:optpathcover} gives an optimal path cover $P^*$ for $G_n$ having $4k$ edges and Algorithm \ref{algo:bpgMIST} gives a MIST with $3k$ internal vertices. So, we get that $Opt(G_n) = 3k = 4k - k = 4k - \frac{n}{5} = \vert E(P^*)\vert - O(n)$. Fig.~\ref{fig:bpbound} provides an illustration for $G_{25}$.
\begin{figure}
 \centering
 \includegraphics[page =2,width=12cm, height=7cm]{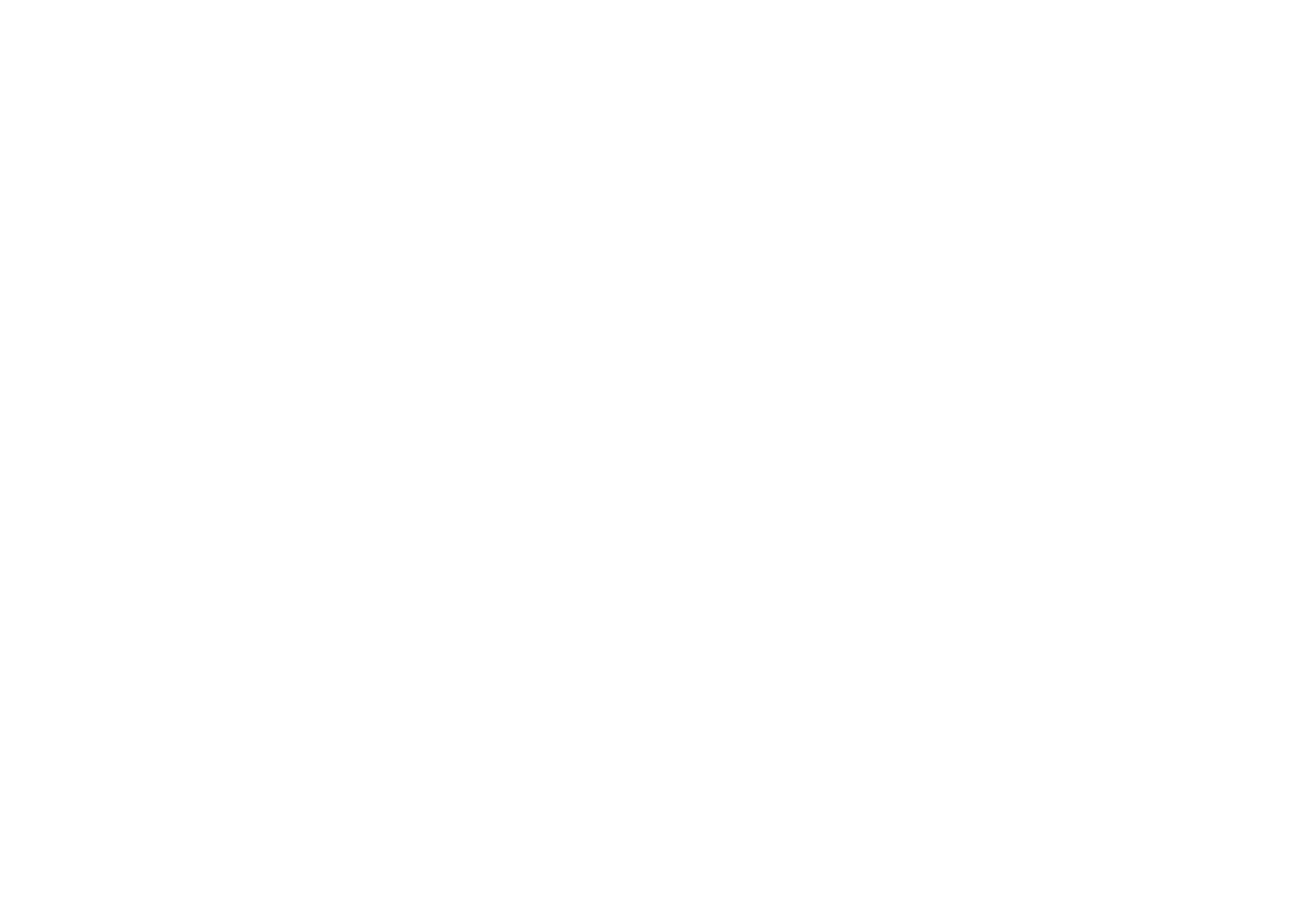}\\
 \caption{Graph $G_{25}$, its optimal path cover $P^*$ from Algorithm \ref{algo:optpathcover} and its MIST $T$ from Algorithm \ref{algo:bpgMIST}}
\label{fig:bpbound}
\end{figure}

Thus $Opt(G)$ for bipartite permutation graphs do not have lower bound of the form $\vert E(P^*)\vert - k$ for some fixed natural number $k$, independent of $n$.

\section{Conclusion}
\label{sec:conclusion}
We studied the Maximum Internal Spanning Tree (MIST) problem, a generalization of Hamiltonian path problem. As the MIST problem remains NP-hard even for bipartite graphs and chordal graphs due to a reduction from the Hamiltonian path problem \cite{laii, mull}, we further investigated the complexity of special instances of these classes, chain graphs, bipartite permutation graphs and block graphs. We also investigated cactus graphs and cographs, finding linear-time algorithms for the MIST problem for each of these graph classes.

\cite{li2018} proved an upper bound for $Opt(G)$ in terms of an optimal path cover.  We further studied this relationship between path covers and $Opt(G)$ and showed tight lower bounds for chain graphs and cographs. We also showed this phenomenon does not hold for general graphs with a construction of bipartite permutation graph and block graph such that $Opt(G)$ is arbitrarily far from $\vert E(P^*)\vert$.

A convex bipartite graph $G$ with bipartition $(X,Y)$ and an ordering $X=(x_1,x_2, \ldots, x_n)$, is a bipartite graph such that for each $y \in Y$, the neighborhood of $y$ in $X$ appears consecutively. Complexity status of the MIST problem is still open for convex bipartite graphs, which is a superclass of bipartite permutation graphs and subclass of chordal bipartite graphs.  Designing an algorithm for MIST in convex bipartite graphs will be a good research direction.

The weighted version of the MIST problem is also well studied in literature \cite{sala}. Given a vertex-weighted connected graph $G$, the maximum weight internal spanning tree (MwIST) problem asks for a spanning tree $T$ of $G$ such that the total weight of internal vertices in $T$ is maximized. Since MwIST problem is a generalization of the MIST problem, one may also investigate the complexity status of MwIST problem for some special classes of graphs.

To our knowledge, every known hardness proof for the MIST problem on  families of graphs relies on a reduction to Hamiltonian path problem. We leave as an open question if there exists a family of graphs such that Hamiltonian path problem is polynomial time, yet the MIST problem remains NP-hard.

%
%

\end{document}